\crefname{hypothesis}{Hypothesis}{Hypotheses}
\Crefname{ALC@unique}{Line}{Lines}
\colorlet{texcscolor}{blue!50!black}
\colorlet{texemcolor}{red!70!black}
\colorlet{texpreamble}{red!70!black}
\colorlet{codebackground}{black!25!white!25}
\lstdefinestyle{siamlatex}{%
  style=tcblatex,
  texcsstyle=*\color{texcscolor},
  texcsstyle=[2]\color{texemcolor},
  keywordstyle=[2]\color{texemcolor},
  moretexcs={cref,Cref,maketitle,mathcal,text,headers,email,url},
}
\DeclareTotalTCBox{\code}{ v O{} }
{ 
  fontupper=\ttfamily\color{black},
  nobeforeafter,
  tcbox raise base,
  colback=codebackground,colframe=white,
  top=0pt,bottom=0pt,left=0mm,right=0mm,
  leftrule=0pt,rightrule=0pt,toprule=0mm,bottomrule=0mm,
  boxsep=0.5mm,
  #2}{#1}
\patchcmd\newpage{\vfil}{}{}{}
\DeclareMathOperator*{\argmin}{arg\,min}
\newcommand{\vep}{\varepsilon}
\newcommand{\trho}{\tilde{\rho}}
\newcommand{\bx}{{\bf x} }
\newcommand{\bz}{{\bf z} }
\newcommand{\tf}{{\tilde{f}} }
\newcommand{\tE}{{\tilde{E}} }
\newcommand{\p}{\partial}
 \def\be{\begin{equation}\displaystyle}
\def\ee{\end{equation}}
\numberwithin{equation}{section}
\newcommand{\bea}{\begin{eqnarray}}
\newcommand{\eea}{\end{eqnarray}}
\newcommand{\brho}{\boldsymbol{\rho}}
\newcommand{\tbrho}{\tilde{\boldsymbol{\rho}}}
\renewcommand\thefigure{\thesection.\@arabic\c@figure}
\renewcommand\thetable{\thesection.\@arabic\c@table}
\title{Computing ground states of Bose-Einstein Condensates with higher order interaction via a regularized density function formulation\thanks{Submitted to the journal's  {Computational Methods in Science and Engineering} section
\funding{The author's work was supported by the Academic Research
Fund of Ministry of Education of Singapore grant No.
R-146-000-247-114. }}}
\author{Weizhu Bao\thanks{Department of Mathematics, National University of
Singapore, Singapore 119076, Singapore (\email{matbaowz@nus.edu.sg}, URL: http://blog.nus.edu.sg/matbwz/).}
\and Xinran Ruan\thanks{Laboratoire J.-L. Lions, Campus UPMC, Sorbonne University, Paris, France 75005 (\email{ruan@ljll.math.upmc.fr}, \email{A0103426@u.nus.edu}).}
}
\begin{document}
\maketitle

\begin{tcbverbatimwrite}{tmp_\jobname_abstract.tex}
\begin{abstract}
We propose and analyze a new numerical method for computing the ground state of the modified Gross-Pitaevskii equation for modeling the Bose-Einstein condensate with a higher order interaction by adapting the density function formulation and the accelerated projected gradient method.
By reformulating the energy functional $E(\phi)$ with $\phi$, the wave function, in terms of the density $\rho=|\phi|^2$, the original non-convex minimization problem for defining the ground state is then reformulated to a convex minimization problem.
In order to overcome the semi-smoothness of the function $\sqrt{\rho}$ in the kinetic energy part, a regularization is introduced with a small parameter $0<\vep\ll1$.
Convergence of the regularization is established when $\vep\to0$.
The regularized convex optimization problem is discretized by the second order finite difference method.
The convergence rates in terms of the density and energy of the discretization are established.
The accelerated projected gradient method is adapted for solving the discretized optimization problem.
Numerical results are reported to demonstrate the efficiency and accuracy of the proposed numerical method.
Our results show that the proposed method is much more efficient than the existing methods in the literature, especially in the strong interaction regime.
\end{abstract}

\begin{keywords}
  Bose-Einstein condensate, higher order interaction, ground state, density function formulation, accelerated projected gradient method
\end{keywords}

\begin{AMS}
35Q55, 65N06, 65N25, 90C30
\end{AMS}
\end{tcbverbatimwrite}
\input{tmp_\jobname_abstract.tex}

\section{Introduction}
\label{sec:intro}
The Bose-Einstein condensate (BEC) has drawn great attention since its first experimental realization in 1995 \cite{Anderson,Bradley,Davis} because of its huge  potential value in exploring a wide range of questions in fundamental physics.
Mathematically, a BEC can be effectively approximated via a Gross-Pitaevskii equation (GPE),
which is a mean field model  considering only the binary contact interaction of form \cite{Lieb,TGmath,PitaevskiiStringari}
\be\label{eq:int_gpe}
V_{\rm{int}}(\bx_1-\bx_2)=g_0\delta(\bx_1-\bx_2), \quad \bx_1,\bx_2\in\mathbb{R}^3,
\ee
where $\delta(\cdot)$ is the Dirac delta function, $g_0=\frac{4\pi \hbar^2 a_{s}}{m}$ is the contact interaction strength with $a_s$ being the $s$-wave scattering length, $\hbar$ being the reduced Planck constant and $m$ being the mass of the particle \cite{TGmath}.
The great success of GPE comes from the fact that BEC is extremely dilute and the confinement is weak in most experiments.
However, recent progress in experiments, such as the application of Feshbach resonances in cold atomic collision \cite{Chin,DErrico,Roati}, enables us to tune the scattering length in a larger range.
In such experiments, the underlying assumptions of the GPE and its validity have to be carefully examined.
As shown in  \cite{Collin,Esry,Zinner}, a higher order interaction (HOI) has to be included for a better description.
With the HOI term,  the new binary interaction  takes the form  \cite{Collin,Esry,Zinner}
\be\label{eq:int}
\tilde{V}_{\rm{int}}(\bz)=V_{\rm{int}}(\bz)+V_{\rm{HOI}}(\bz)=g_0\delta(\bz)+g_0g_1\left(\delta(\bz)\nabla^2_{\bz}+\nabla^2_{\bz}\delta(\bz)\right),
\ee
where $\bz=\bx_1-\bx_2\in\mathbb{R}^3$, $\delta(\cdot)$ and $g_0$ are defined as before, and  $g_1$, the strength of the HOI, is defined as $g_1=\frac{a_s^2}{3}-\frac{a_sr_e}{2}$ with $r_e$ being the effective range of the two-body interaction.
With the new binary interaction \eqref{eq:int}, the dimensionless modified Gross-Pitaveskii equation (MGPE) in $d$-dimensions ($d$=1,2,3) can be derived as \cite{Collin,Fu,JJG,Ruan_thesis,Ruan,Veksler}
\be\label{mgpe}
 i\partial_t \psi=\left[-\frac{1}{2}\Delta
+V(\bx)
+\beta|\psi|^2-\delta\Delta(|\psi|^2)\right]\psi,\quad t>0,\quad \bx\in\mathbb{R}^d,
 \ee
 where $\psi=\psi(\bx,t)$ is the complex-valued wave function, $V(\bx)\ge0$ is the external potential and the dimensionless parameters $\beta$ and $\delta$ denote the contact interaction strength and the HOI strength, respectively.
Define the energy to be
\be \label{energy}
\tilde{E}(\psi(\cdot,t)):=\int_{{\mathbb R}^d}\biggl[
\frac12|\nabla\psi|^2+V(\bx)|\psi|^2+\frac{\beta}{2}|\psi|^4+\frac{\delta}{2}|\nabla|\psi|^2|^2\biggl]\,d\bx.
\ee
It is easy to  check that the MGPE \eqref{mgpe}  conserves the {\sl mass} (or {\sl normalization}) \cite{Ruan_thesis} as
\be\label{prop:mass}
\|\psi(\cdot,t)\|_2^2=\int_{{\mathbb R}^d}\,|\psi(\bx,t)|^2\,d\bx \equiv \int_{{\mathbb R}^d}\,|\psi(\bx,0)|^2\,d\bx = \|\psi(\cdot,0)\|_2^2=1,
\ee
and  the {\sl energy} \eqref{energy} \cite{Ruan_thesis}, i.e.
\be
\tilde{E}(\psi(\cdot,t))\equiv \tilde{E}(\psi(\cdot,0)), \quad t\ge0.
\ee

To find the ground state of  a BEC
is a fundamental problem  since the ground state represents the most stable stationary state of the system.
Mathematically speaking, the ground state of a BEC described by the  MGPE \eqref{mgpe} is defined as the minimizer of the energy functional (\ref{energy}) under the normalization constraint \eqref{prop:mass}.
Specifically,  we have the ground state $\phi_g=\phi_g(\bx)$ defined as  \cite{mgpe-th,Ruan_thesis}
\begin{equation}\label{eq:minimize_mgpe}
    \phi_g :=  \argmin_{\phi \in S}
    \tE\left(\phi\right),
  \end{equation}
where $S:=\left\{\phi \, \left| \,\|\phi\|_2=1,\quad \tE(\phi)<\infty\right.\right\}$.
The corresponding energy $E_g:=\tE(\phi_g)$ is called the ground state energy.
The Euler-Lagrange equation of the problem \eqref{eq:minimize_mgpe} implies that
the ground state $\phi_g$ satisfies the following nonlinear eigenvalue problem \cite{mgpe-th,Ruan_thesis}
\be\label{eq:eig_mgpe}
\mu \phi=\left[-\frac{1}{2}\Delta+V(\bx)+\beta|\phi|^2-\delta\Delta(|\phi|^2)\right]\phi,\quad \bx\in\mathbb{R}^d,
\ee
under the normalization constraint $\|\phi\|_2=1$,
where $\mu$ is the corresponding chemical potential (or eigenvalue in mathematics) and can be computed as  \cite{mgpe-th,Ruan_thesis}
\begin{align}
 \label{eq:mu_mgpe}
&\mu(\phi)=\tE(\phi)+\int_{{\mathbb R}^d}\biggl[
\frac{\beta}{2}|\phi|^4+\frac{\delta}{2}\left|\nabla|\phi|^2\right|^2\biggl]\,d\bx.
\end{align}

The existence and uniqueness of the ground state of the non-convex optimization problem \eqref{eq:minimize_mgpe} has been thoroughly studied in \cite{mgpe-th}.
When $\delta=0$, the MGPE \eqref{mgpe} degenerates to a GPE and the existence and uniqueness of the ground state is clear. We refer readers to \cite{Bao2014,Bao2013,PitaevskiiStringari} and references therein.
When $\delta\neq0$, the ground state exists if and only if $\delta>0$,
and it can be chosen to be positive.
Furthermore, the positive ground state is unique if we have both $\beta\ge0$ and $\delta\ge0$.
 Therefore, throughout the paper, we will only consider the special case where $\delta\ge0$ and $\beta\ge0$  for the numerical computation. In particular, we are interested in the Thomas-Fermi limit with strong HOI effect, i.e. $\delta\gg1$.
The exact Thomas-Fermi limit of the ground state with strong repulsive interaction has been studied in details in \cite{Ruan}.

In this paper, we aim to design a numerical method to compute the ground state defined in \eqref{eq:minimize_mgpe}.
When $\delta=0$, the MGPE \eqref{mgpe} degenerates to the classical GPE and numerous methods have been proposed for this special case,
such as the normalized gradient flow method  \cite{Antoine2, Bao2004,Wz1,Chiofalo,Bao2010,Bao-Chern-Lim-2006},
a Runge-Kutta spectral method with spectral discretization in space and Runge-Kutta type integration in time  \cite{Muruganandam},
Gauss-Seidel-type methods  \cite{Chang},
a finite element method by directly minimizing the energy functional  \cite{BaoT},
a regularized Newton method  \cite{BaoWuWen},
a method based on Riemannian optimization  \cite{Danaila},
a preconditioned nonlinear conjugate gradient method  \cite{Antoine}.
However, to our best knowledge, there are few numerical schemes proposed for the case $\delta>0$.
The strong HOI effect, i.e. $\delta\gg1$, introduces high nonlinearity, which  may result in inefficiency and instability of the numerical methods proposed for the $\delta=0$ case.
For instance, the regularized Newton method would become extremely slow when $\delta$ is large,  which will be shown later via numerical experiments,
while the normalized gradient flow method could even fail as shown in \cite{Ruan_gradientflow}, even if a convex-concave splitting of the HOI term is adapted to significantly improve the robustness of the method. 

To resolve the problem, we propose in the paper a novel method, namely the \textbf{rDF-APG} method, where the term ``rDF-APG" means using the regularized density function formulation and accelerated projected gradient method for optimization.
The method works by directly minimizing the energy functional formulated via the density function $\rho=|\phi|^2$ instead of the wave function $\phi$.
In fact, the idea of formulating the energy functional via the density function has been widely used in the Thomas-Fermi models \cite{Catto,Lieb-TF} and
 the density functional theory (DFT) \cite{Cances,KS-DFT}, a most popular and versatile method in condensed-matter physics, computational physics, and computational chemistry. However, to our best knowledge, there are few works
 in adapting the density function formulation in designing efficient and accurate numerical methods for computing ground states and dynamics of BEC.
With the density function formulation, although the kinetic energy part becomes more complicated and a regularization of the part is needed to avoid possible singularity,
we gain the benefits that the optimization problem \eqref{eq:minimize_mgpe} is transformed to be a convex one
and the interaction energy terms are all in quadratic forms, the preferred form in a convex optimization problem.
Via the finite difference method to discretize the regularized density function formulation and  the accelerated projected gradient (APG) method to solve the optimization problem, we get the {rDF-APG} method.
Comparison with the regularized Newton method  shows that the method works for whatever $\beta\ge0$ and $\delta\ge0$, and is particularly suitable and efficient for the case with strong HOI effect, i.e. $\delta\gg1$.

The paper is organized as follows.
In Section \ref{sec:formulation}, we introduce the regularized density function formulation for the problem \eqref{eq:minimize_mgpe} and prove rigorously the convergence of the corresponding ground state densities as the regularization effect vanishes.
In Section \ref{sec:scheme}, we present the detailed  finite difference discretization of the regularized density function formulation
and analyze its spatial accuracy.
In Section \ref{sec:APG}, we adapt the accelerated projected gradient (APG) method for solving the finite-dimensional discretized constraint minimization problem and describe the rDF-APG method in details.
Extensive numerical experiments are provided in Section \ref{sec:numeric} to numerically verify the convergence and the spatial accuracy of the numerical ground state densities analyzed in the previous sections,
and show the efficiency of the rDF-APG method when the interaction, especially the HOI effect,  is strong.
Finally, some conclusions are drawn in Section \ref{sec:conclusion}.

\section{A regularized density function formulation} \label{sec:formulation}
In this section, we introduce the regularized density function formulation of the energy functional \eqref{energy},
and  prove the convergence of the ground states as the regularization effect vanishes.
\subsection{The formulation}
When $\beta\ge0$ and $\delta\ge0$, it is sufficient to consider the positive wave function for the ground state \cite{mgpe-th}.
Reformulating the energy functional \eqref{energy} with the density function $\rho:=|\phi|^2$, we get
\be\label{energy_rho}
E(\rho)=\int_{\mathbb{R}^d}\left[\frac{1}{2}|\nabla\sqrt{\rho}|^2+V(\bx)\rho+\frac{\beta}{2}\rho^2+\frac{\delta}{2}|\nabla\rho|^2\right]d\bx.
\ee
Then the ground state density $\rho_g=|\phi_g|^2$ can be defined as
\be\label{ground:ori}
\rho_g=\argmin_{\rho\in W} E(\rho),
\ee
where
\be\label{def:feasible_rho}
W:=\left\{\rho \, \left| \,\int_{\mathbb{R}^d} \rho(\bx)\, d\bx=1, \, \rho\ge0, \, E(\rho)<\infty \right. \right\}.
\ee
Obviously, the feasible set of $\rho$ in \eqref{ground:ori} is a simplex, which is convex, while the feasible set of $\phi$ in \eqref{energy} is a  non-convex unit sphere.
Combining with Lemma \ref{lem:convex} below, we find that the optimization problem \eqref{ground:ori}, which is formulated via the density function, is indeed a convex optimization problem if $\beta\ge0$ and $\delta\ge0$.
\begin{lemma}\label{lem:convex}
When $\beta\ge0$ and $\delta\ge0$, $E(\rho)$ \eqref{energy_rho} is convex in $W$ \eqref{def:feasible_rho}.
\end{lemma}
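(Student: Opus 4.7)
The plan is to decompose $E(\rho)$ into its four constituent terms and verify convexity of each separately. Three of the four are essentially immediate: the potential term $\int V(\bx)\rho\,d\bx$ is linear in $\rho$ (and therefore affine, hence convex); the contact interaction term $\frac{\beta}{2}\int\rho^{2}\,d\bx$ is convex because $t\mapsto t^{2}$ is convex on $[0,\infty)$, the coefficient $\beta\geq 0$, and integration preserves convexity; the HOI term $\frac{\delta}{2}\int|\nabla\rho|^{2}\,d\bx$ is convex because it is the composition of the linear map $\rho\mapsto\nabla\rho$ with the convex quadratic $\xi\mapsto|\xi|^{2}$, again integrated against a nonnegative coefficient.

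The only nontrivial term is the kinetic contribution $T(\rho):=\tfrac12\int_{\mathbb{R}^{d}}|\nabla\sqrt{\rho}|^{2}\,d\bx$. I would rewrite it, for $\rho>0$, in the equivalent pointwise form
\be
\tfrac12|\nabla\sqrt{\rho}|^{2}=\tfrac{1}{8}\,\frac{|\nabla\rho|^{2}}{\rho},
\ee
and then invoke the classical fact that the function $F:(0,\infty)\times\mathbb{R}^{d}\to\mathbb{R}$ defined by $F(a,\mathbf b)=|\mathbf b|^{2}/a$ is jointly convex. The quickest way to establish this is to compute the Hessian blockwise and check that it is positive semidefinite (it has a rank-one structure proportional to $(|\mathbf b|^{2}/a^{3})$ times a rank-one positive-semidefinite factor). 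Once $F$ is known to be jointly convex, convexity of $T$ follows because $\rho\mapsto(\rho,\nabla\rho)$ is linear, composition of a jointly convex function with a linear map is convex pointwise, and integration preserves this.

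The one detail I would handle carefully is the set $\{\rho=0\}$, where the identity $|\nabla\sqrt{\rho}|^{2}=|\nabla\rho|^{2}/(4\rho)$ breaks down. For any $\rho_{0},\rho_{1}\in W$ and $\lambda\in[0,1]$, write $\rho_{\lambda}=(1-\lambda)\rho_{0}+\lambda\rho_{1}$ and argue on the set where $\rho_{\lambda}>0$ using the joint convexity of $F$ applied to the pairs $(\rho_{j},\nabla\rho_{j})$, then observe that on $\{\rho_{\lambda}=0\}$ one necessarily has $\rho_{0}=\rho_{1}=0$ (and so $\nabla\rho_{0}=\nabla\rho_{1}=0$ a.e.\ on that set), so the contribution there is zero on both sides of the convexity inequality. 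Summing the four terms then gives convexity of $E$ on $W$.

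The main obstacle is thus the kinetic term; the conceptual content is recognizing $T$ as (a multiple of) the classical Fisher information functional and reducing its convexity to the joint convexity of the perspective-type map $(a,\mathbf b)\mapsto|\mathbf b|^{2}/a$. Everything else is a routine bookkeeping exercise that exploits $\beta,\delta\geq 0$ and the linearity of the remaining contributions in $\rho$.
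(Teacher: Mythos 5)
Your proposal is correct and takes essentially the route the paper intends: the paper omits the proof, referring to \cite{Bao2013,Benguria}, where convexity of the von Weizs\"{a}cker/Fisher-information term $\frac12\int|\nabla\sqrt{\rho}|^2\,d\bx$ is obtained exactly as you do, via the identity $|\nabla\sqrt{\rho}|^2=|\nabla\rho|^2/(4\rho)$ and the joint convexity of $(a,\mathbf{b})\mapsto|\mathbf{b}|^2/a$, with the remaining terms being linear or nonnegative quadratics. Your extra care on the set $\{\rho=0\}$ (where the gradient of a Sobolev function vanishes a.e.) is the right way to close the standard technical gap.
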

\begin{proof}
The proof follows the same idea as shown in \cite{Bao2013,Benguria} and thus omitted for brevity.
\end{proof}

However, the algorithm based on direct minimization of the density formulated energy functional \eqref{energy_rho} suffers severely from the fact that $|\nabla\sqrt{\rho}|$ will go unbounded as $\rho\to0^+$.
On one hand, the unboundedness of $|\nabla\sqrt{\rho}|$ would introduce uncontrollable errors after discretization.
On the other hand, the unboundedness of $|\nabla\sqrt{\rho}|$ implies the unboundedness of $\frac{\delta E}{\delta\rho}$,
which will cause the slow convergence, or even the collapse, of most efficient optimization techniques.
In fact, the above drawbacks will be numerically verified later in Section \ref{sec:numeric}.
As a result, one expects that a regularization of the kinetic term is needed and the following regularized energy functional is considered,
\begin{align}\nonumber
E^{\vep}(\rho)&=\int_{\mathbb{R}^d}\left[\frac{1}{2}|\nabla\sqrt{\rho+\vep}|^2+V(\bx)\rho+\frac{\beta}{2}\rho^2+\frac{\delta}{2}|\nabla\rho|^2\right]d\bx\\
&=\int_{\mathbb{R}^d}\left[\frac{|\nabla\rho|^2}{8(\rho+\vep)}+V(\bx)\rho+\frac{\beta}{2}\rho^2+\frac{\delta}{2}|\nabla\rho|^2\right]d\bx, \quad \rho\in W, \label{energy_regularize}
\end{align}
where $0<\vep\ll1$ is a small regularization parameter.
The convexity of $E^{\vep}(\cdot)$ can be proved in a similar way and then we get Lemma \ref{lem:convex_reg}.
\begin{lemma}\label{lem:convex_reg}
When $\beta\ge0$ and $\delta\ge0$, $E^{\vep}(\rho)$ \eqref{energy_regularize} is convex in $W$ \eqref{def:feasible_rho}.
\end{lemma}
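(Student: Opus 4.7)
The plan is to establish convexity of $E^{\vep}(\rho)$ in $W$ by verifying convexity integrand-by-integrand and then using the fact that integration and summation preserve convexity. Three of the four pieces are immediate: $V(\bx)\rho$ is linear in $\rho$, hence convex; $\frac{\beta}{2}\rho^2$ is convex on $[0,\infty)$ because $\beta\ge0$; and $\frac{\delta}{2}|\nabla\rho|^2$ is a nonnegative quadratic form in $\nabla\rho$, hence a convex functional of $\rho$ since $\delta\ge0$. The entire substance of the lemma is therefore contained in the regularized kinetic contribution $\frac{|\nabla\rho|^2}{8(\rho+\vep)}$.

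For the kinetic term I would invoke the classical fact that the function $g(a,b)=a^2/b$ is jointly convex on $\mathbb{R}\times(0,\infty)$. One can either observe that $g$ is the perspective of the convex function $t\mapsto t^2$, or simply check directly that the Hessian
\[
\nabla^2 g(a,b)=\frac{2}{b^3}\begin{pmatrix} b^2 & -ab \\ -ab & a^2 \end{pmatrix}
\]
is positive semidefinite for $b>0$. Writing
\[
\frac{|\nabla\rho|^2}{8(\rho+\vep)} \;=\; \sum_{i=1}^{d}\frac{(\partial_i\rho)^2}{8(\rho+\vep)},
\]
each summand is the composition of $g$ with the linear (hence affine) map $\rho \mapsto (\partial_i\rho,\rho+\vep)$, which sends $W$ into $\mathbb{R}\times[\vep,\infty)\subset\mathbb{R}\times(0,\infty)$ since $\rho\ge0$. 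Composition of a jointly convex function with a linear map is convex, so each summand defines a convex functional of $\rho$ pointwise in $\bx$, and integrating in $\bx$ preserves convexity.

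The only conceptual obstacle is the kinetic term, and it is precisely the regularization $\vep>0$ that makes the argument clean: the denominator $\rho+\vep$ is uniformly bounded below by $\vep$, so one stays safely inside the domain of joint convexity of $g$ and avoids any delicate behavior as $\rho\to0^+$ (which is what forces the more careful treatment in the unregularized case of Lemma \ref{lem:convex}). Since $W$ is convex (an affine slice of the nonnegative cone), convex combinations remain admissible, and summing the four convex functionals yields convexity of $E^{\vep}$ on $W$.
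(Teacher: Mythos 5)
Your proposal is correct and is precisely the argument the paper leaves implicit: the paper omits the proof of Lemma \ref{lem:convex_reg} (saying it is proved "in a similar way" to Lemma \ref{lem:convex}, which in turn is attributed to the Benguria--Brezis--Lieb/Bao--Cai convexity of the von Weizs\"acker term), and that classical proof is exactly your observation that the kinetic integrand is the perspective-type function $|a|^2/b$, jointly convex in $(a,b)$, composed with the affine map $\rho\mapsto(\nabla\rho,\rho+\vep)$ whose second component stays in $[\vep,\infty)$ on $W$, while the remaining terms are linear or nonnegative quadratics. You may also note that the paper uses the same underlying fact at the discrete level in Lemma \ref{Eh_convex}, where the midpoint-convexity computation via Cauchy's inequality is just the discrete counterpart of the joint convexity of $a^2/b$.
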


The ground state density of the regularized energy functional $E^{\vep}(\rho)$ \eqref{energy_regularize} is defined as
\be\label{ground:reg}
\rho_g^{\vep}=\argmin_{\rho\in W} E^{\vep}(\rho).
\ee
Later in this section, we show the convergence of the corresponding ground state densities as the regularization effect vanishes, i.e. $\lim_{\vep\to0^+}\rho_g^{\vep}=\rho_g$, and characterize the convergence rate.

\subsection{Existence and uniqueness}
Before we study the convergence of $\rho_g^{\vep}$ \eqref{ground:reg}, it is necessary to show the existence and uniqueness of the ground state density $\rho_g^{\vep}$ when $\beta\ge0$ and $\delta\ge0$.
For simplicity, we introduce the function space
\be\label{def:func_space}
X_V=\left\{\rho\,\middle|\,\|\rho\|_2+\|\nabla\rho\|_2+\int_{\mathbb{R}^d}\,V(\bx)\rho\,d\bx<\infty,\,\|\rho\|_1=1,\,\rho\ge0\right\},
\ee
where $\|\rho\|_1:=\int_{\mathbb{R}^d}\,|\rho|\,d\bx$ and $V(\bx)$ satisfies  the confining condition, i.e.
\be\label{V:confine}
\lim\limits_{|\bx|\to\infty}V(\bx)=+\infty.
\ee
%
%
We start with the following lemma.
\begin{lemma}\label{lem:embed}
Suppose the external potential $V(\bx)$ satisfies  the confining condition \eqref{V:confine}, then for any sequence $\{\rho_n\}\subset X_V$ such that $\int_{\mathbb{R}^d}V(\bx)\rho_n\le C$ for some constant $C$ uniformly  and $\rho_n\rightharpoonup\rho$ in $H^1(\mathbb{R}^d)$, we have $\rho\ge0$, $\|\rho\|_1=1$ and
$\rho_n\to\rho$ in $L^1(\mathbb{R}^d)$.
\end{lemma}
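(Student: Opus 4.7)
The plan is to combine three ingredients: (i) the Rellich--Kondrachov compact embedding of $H^1$ into $L^p_{\text{loc}}$, (ii) tightness of $\{\rho_n\}$ extracted from the confining hypothesis on $V$, and (iii) a.e.\ preservation of nonnegativity.

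First I would exploit the weak convergence $\rho_n\rightharpoonup\rho$ in $H^1(\mathbb{R}^d)$. Passing to a subsequence, Rellich--Kondrachov gives $\rho_n\to\rho$ strongly in $L^p_{\text{loc}}(\mathbb{R}^d)$ for admissible $p$, in particular in $L^1(B_R)$ for every ball $B_R$; by extracting a further subsequence we may assume $\rho_n\to\rho$ a.e. Since each $\rho_n\ge 0$, this immediately yields $\rho\ge 0$ a.e.

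Next I would establish tightness of $\{\rho_n\}$ using the confining potential. Given any $\eta>0$, the condition \eqref{V:confine} lets me pick $R=R(\eta)$ so that $V(\bx)\ge C/\eta$ for $|\bx|\ge R$. Then
\begin{equation*}
\int_{|\bx|\ge R}\rho_n(\bx)\,d\bx\;\le\;\frac{\eta}{C}\int_{|\bx|\ge R} V(\bx)\rho_n(\bx)\,d\bx\;\le\;\eta,
\end{equation*}
uniformly in $n$. Combining this uniform tightness with the strong $L^1(B_R)$ convergence of $\rho_n$ (Rellich) gives the global $L^1$ bound
\begin{equation*}
\|\rho_n-\rho\|_{L^1(\mathbb{R}^d)}\;\le\;\|\rho_n-\rho\|_{L^1(B_R)}+\int_{|\bx|\ge R}\rho_n\,d\bx+\int_{|\bx|\ge R}\rho\,d\bx,
\end{equation*}
where the last integral is handled by Fatou applied to $V\rho_n$, giving $\int V\rho\le C$ and hence tightness of $\rho$ itself. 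Sending $n\to\infty$ and then $\eta\to 0^+$ yields $\rho_n\to\rho$ in $L^1(\mathbb{R}^d)$. Since $\|\rho_n\|_1=1$ for every $n$, this forces $\|\rho\|_1=1$. A standard subsequence argument then upgrades the conclusion from the extracted subsequence to the full sequence (the limit is unique since weak $H^1$ convergence identifies $\rho$).

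The main obstacle is the tightness argument: one must carefully use the confining condition together with the uniform bound on $\int V\rho_n$ to rule out escape of mass to infinity, and this is precisely where the hypothesis on $V$ enters. Once tightness is secured, the remainder is a routine interpolation between local strong convergence and global tail control.
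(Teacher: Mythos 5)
Your proposal is correct and follows essentially the same route as the paper's proof: split $\mathbb{R}^d$ into a large ball and its complement, control the tails of $\rho_n$ uniformly via the confining condition and the bound $\int V\rho_n\le C$, use the compact Sobolev embedding for strong $L^1$ convergence on the ball, and obtain the tail bound for $\rho$ (and $\rho\ge0$) from a.e.\ convergence of a subsequence together with Fatou's lemma. The only cosmetic difference is that you apply Fatou directly to $V\rho_n$ while the paper phrases the same step as "a similar argument" to its inequality \eqref{proof1}; this does not change the substance.
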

\begin{proof}
For any $\eta>0$,  the confining condition of $V(\bx)$ \eqref{V:confine} indicates that we can choose $R$ large enough such that $V(\bx)\ge\frac{4C}{\eta}$ for all $\bx\in\Omega_R^c:=\mathbb{R}^d\backslash\Omega_R$, where $\Omega_R=\{\bx\,|\, |\bx|<R\}$. Then $\int_{\Omega_R^c}\rho_n\,d\bx$ is uniformly bounded by $\frac{\eta}{4}$ since
\be\label{proof1}
\int_{\Omega_R^c}\rho_n\,d\bx\le\frac{\eta}{4C}\int_{\Omega_R^c}V(\bx)\rho_n\,d\bx \le\frac{\eta}{4C}\int_{\mathbb{R}^d}V(\bx)\rho_n\,d\bx\le\frac{\eta}{4}.
\ee
In the bounded domain $\Omega_R$,  the weak convergence in $H^1$-norm implies $\rho_n\to\rho$ in $L^2$-norm due to the Sobolev embedding theorem and, further, $\rho_n\to\rho$ in $L^1(\Omega_R)$ by the H\"{o}lder's inequality.
Therefore, for all $n$ large enough, we have
\be\label{proof2}
\int_{\Omega_R}|\rho_n-\rho|\,d\bx\le\frac{\eta}{2}.
\ee
Besides, we can choose a subsequence of $\{\rho_n\}$, denoted as the sequence itself for simplicity, which converges pointwisely to $\rho$ in $\Omega_R$.
By Fatou's lemma,
\be\label{proof3}
\int_{\Omega_R}\rho\,d\bx\le \liminf_{n\to\infty}\int_{\Omega_R}\rho_n\,d\bx\le C.
\ee
Noticing that \eqref{proof3} can be extended to any bounded domain, we get $\int_{\mathbb{R}^d}\,\rho\,d\bx\le C$.
Following a similar argument as in \eqref{proof1}, we have
\be\label{proof4}
\int_{\Omega_R^c}\rho\,d\bx\le\frac{\eta}{4}.
\ee
Combining \eqref{proof1}, \eqref{proof2} and \eqref{proof4}, we have that, for any given $\eta>0$,
$
\int_{\mathbb{R}^d}\,|\rho-\rho_n|\,d\bx\le\eta
$
if $n$ is large enough, which indicates $\rho_n\to\rho$ in $L^1(\mathbb{R}^d)$ and $\|\rho\|_1=1$.
The conclusion $\rho\ge0$ comes from the pointwise convergence of a subsequence of $\{\rho_n\}$. The details are omitted here for brevity.
\end{proof}

With Lemma \ref{lem:embed}, we are ready to show the existence and uniqueness of the ground state density of the regularized energy functional.
When $\delta=0$ and $\beta\ge0$, it is obvious to have the existence and uniqueness of the ground state density.
Here we focus on the case with $\delta>0$ and an arbitrary $\beta\in\mathbb{R}$, and get the following theorem.
\begin{theorem}\label{thm:mres}(Existence and uniqueness)
Assume that $\delta>0$ and $V(\bx)\ge 0$ satisfies the confining condition \eqref{V:confine},
 then  there exists a ground state density $\rho_g^{\vep}$ of the regularized energy functional $E^{\vep}(\cdot)$ \eqref{energy_regularize}  in $W$ \eqref{def:feasible_rho}.
And the ground state density is unique if we further assume $\beta\ge0$.
\end{theorem}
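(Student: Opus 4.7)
The plan is to use the direct method in the calculus of variations. First I would verify the infimum $E_0 := \inf_{\rho\in W} E^{\vep}(\rho)$ is finite and nonnegative: nonnegativity follows from $V\ge 0$, $\beta\ge 0$ (or handled via the $\delta|\nabla\rho|^2$ term when $\beta<0$, after using a Gagliardo--Nirenberg estimate to absorb $\int\rho^2$ into $\|\nabla\rho\|_2$ and $\|\rho\|_1$), and finiteness follows by exhibiting a smooth compactly supported trial density. I would then take a minimizing sequence $\{\rho_n\}\subset W$ with $E^{\vep}(\rho_n)\to E_0$.

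The next step is to extract uniform bounds from $E^{\vep}(\rho_n)\le E_0+1$. Since $\delta>0$, this directly controls $\|\nabla\rho_n\|_2$; the normalization $\|\rho_n\|_1=1$ together with a Gagliardo--Nirenberg interpolation (valid for $d=1,2,3$) then yields a uniform $L^2$ bound, so $\{\rho_n\}$ is bounded in $H^1(\mathbb{R}^d)$. The confining assumption \eqref{V:confine} bounds $\int V\rho_n\,d\bx$ uniformly. After passing to a subsequence, I would obtain $\rho_n\rightharpoonup\rho$ in $H^1(\mathbb{R}^d)$ and pointwise a.e., and then invoke \Cref{lem:embed} to conclude $\rho\ge 0$, $\|\rho\|_1=1$, and $\rho_n\to\rho$ in $L^1(\mathbb{R}^d)$, so $\rho\in W$.

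The core difficulty, and the main obstacle, is proving weak lower semicontinuity of every term of $E^{\vep}$ along this subsequence. The terms $\frac{\delta}{2}\int|\nabla\rho|^2$ and $\frac{\beta}{2}\int\rho^2$ are handled by weak lower semicontinuity of norms in $L^2$; $\int V\rho$ is handled by Fatou's lemma using $V\ge 0$ and pointwise a.e.\ convergence. For the regularized kinetic term $\int \frac{|\nabla\rho|^2}{8(\rho+\vep)}\,d\bx$, the payoff of the regularization is crucial: the integrand $(s,\mathbf p)\mapsto \frac{|\mathbf p|^2}{8(s+\vep)}$ is jointly convex on $\{s>-\vep\}\times\mathbb{R}^d$ and continuous there. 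Combining this with the strong $L^1$ (and pointwise a.e.) convergence of $\rho_n$ and the weak $L^2$ convergence of $\nabla\rho_n$, a standard Ioffe-type lower semicontinuity theorem (or a direct Fatou argument applied to $\nabla\sqrt{\rho_n+\vep}=\nabla\rho_n/(2\sqrt{\rho_n+\vep})$, which converges weakly in $L^2$ since its denominator is bounded below by $\sqrt{\vep}>0$) gives the desired lower semicontinuity. Assembling these pieces yields $E^{\vep}(\rho)\le\liminf_n E^{\vep}(\rho_n)=E_0$, so $\rho$ is a minimizer.

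For uniqueness under the additional hypothesis $\beta\ge 0$, I would argue by strict convexity. By \Cref{lem:convex_reg}, $E^{\vep}$ is convex on the convex set $W$, and since $\delta>0$ the term $\frac{\delta}{2}\int|\nabla\rho|^2$ is strictly convex in $\nabla\rho$. If $\rho_1,\rho_2\in W$ were two distinct minimizers, applying $E^{\vep}$ to $\tfrac12(\rho_1+\rho_2)\in W$ and using the strict convexity of $|\nabla\rho|^2$ in $\nabla\rho$ forces $\nabla\rho_1=\nabla\rho_2$ almost everywhere, so $\rho_1-\rho_2$ is a constant on $\mathbb{R}^d$; but a constant function lying in $L^1(\mathbb{R}^d)$ must be identically zero, contradicting $\rho_1\neq\rho_2$. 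This finishes uniqueness.
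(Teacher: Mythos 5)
Your route is the same as the paper's: coercivity via a Nash/Gagliardo--Nirenberg absorption when $\beta<0$, a minimizing sequence bounded in $H^1$ with $\int V\rho_n$ bounded, Lemma \ref{lem:embed} to conclude $\rho\in W$ together with $L^1(\mathbb{R}^d)$ convergence, and term-by-term lower semicontinuity; the paper compresses the last step into a reference to the argument of the unregularized problem, whereas you spell it out, and your uniqueness argument (convexity from Lemma \ref{lem:convex_reg}, strict convexity of $\frac{\delta}{2}\int|\nabla\rho|^2$ in $\nabla\rho$, hence $\nabla(\rho_1-\rho_2)=0$, and a constant in $L^1(\mathbb{R}^d)$ must vanish) is a complete version of the paper's one-line appeal to convexity. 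That part is fine and even more explicit than the original.

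There is, however, one genuine flaw: the theorem claims existence for every $\beta\in\mathbb{R}$ (only uniqueness assumes $\beta\ge0$), and your lower-semicontinuity step handles $\frac{\beta}{2}\int\rho^2$ ``by weak lower semicontinuity of norms in $L^2$,'' which is valid only for $\beta\ge0$; for $\beta<0$ the map $\rho\mapsto-\|\rho\|_2^2$ is weakly upper, not lower, semicontinuous, so the step fails precisely in the case your coercivity paragraph was designed to cover. The repair is within your own toolkit: Lemma \ref{lem:embed} gives $\rho_n\to\rho$ in $L^1(\mathbb{R}^d)$, and combined with the uniform bound on $\|\nabla\rho_n\|_2$ the interpolation (Nash) inequality $\|f\|_2\le C\|\nabla f\|_2^{d/(d+2)}\|f\|_1^{2/(d+2)}$ upgrades this to strong $L^2$ convergence, so $\int\rho_n^2\to\int\rho^2$ and the $\beta$-term passes to the limit for either sign of $\beta$. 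Two smaller points: the uniform bound on $\int V\rho_n$ comes from the energy bound along the minimizing sequence (after absorbing the negative $\beta$-term when $\beta<0$), not from the confining condition \eqref{V:confine} itself, which is instead what Lemma \ref{lem:embed} needs; and boundedness of $\nabla\sqrt{\rho_n+\vep}$ in $L^2$ only yields a weakly convergent subsequence, so you must still identify the weak limit as $\nabla\sqrt{\rho+\vep}$ via the test-function/integration-by-parts argument (as in the proof of Theorem \ref{thm:gm_conv}) or rely fully on the Ioffe-type theorem you mention---either closes that step.
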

\begin{proof}
To show the existence, we start with the following inequality
\be\label{proof:exist1}
\int_{\Bbb R^d}|\rho|^2\,d\bx\leq \left(\tilde{C}\int_{\Bbb R^d}\rho(\bx)\,d\bx\right)^{4/d+2}\|\nabla\rho\|_2^{2d/d+2}
\leq  \frac{\tilde{C}}{\eta}+\eta\|\nabla\rho\|_2^2,
\ee
for any $\eta>0$,
where the first inequality is due to the Nash inequality and the second one is due to the Young's inequality. When $\beta\ge0$, it is easy to see that
$E^{\vep}(\rho)$ is bounded from below. When $\beta<0$,
 by choosing $\eta$ such that $\eta|\beta|=\frac{\delta}{2}$, the inequality \eqref{proof:exist1} immediately indicates that $E^\vep(\rho)$ is bounded from below since
\begin{equation}
E^{\vep}(\rho)\ge \int_{\Bbb R^d}\left(\frac12|\nabla\sqrt{\rho+\vep}|^2+V(\bx)\rho+\frac{\delta}{4}\left|\nabla\rho\right|^2\right)\,d\bx-\frac{\tilde{C}\beta^2}{\delta}.
\end{equation}
Taking a minimizing sequence $\{\rho_n\}$ in $W$,  then $\{\rho_n\}$ is uniformly bounded in $X_V$ with a weak limit $\rho_\infty$ in $H^1$-norm.
Following a similar procedure as shown in \cite{mgpe-th} and applying Lemma \ref{lem:embed}, we can show that $\rho_\infty$ is indeed the ground state density. The details are omitted here for brevity.

If we further assume that $\beta\ge0$, the uniqueness of the ground state density comes from the convexity of the regularized energy functional $E^\vep(\rho)$ in $\rho$.
\end{proof}

\subsection{Convergence of the ground states}\label{subsec:gamma_conv}
In this section, we show the convergence of $\rho_g^\vep$ as $\vep\to0^+$,
where $\rho_g^\vep$ is the ground state density of $E^{\vep}(\cdot)$ \eqref{energy_regularize}.
To begin with, we show the monotonicity of $E^{\vep}(\cdot)$ as $\vep\to0^+$, which is simple but important in the proof of the convergence.
\begin{lemma}\label{lemma:E_vep}
For any fixed state $\rho\in X_V\eqref{def:func_space}$ and $\vep_1\ge\vep_2\ge0$, we have
\be
E^{\vep_1}(\rho)\le E^{\vep_2}(\rho).
\ee
\end{lemma}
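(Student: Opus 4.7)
The plan is to exploit the fact that the $\vep$-dependence in $E^{\vep}(\rho)$ is confined to the kinetic term, and to derive the monotonicity from a straightforward pointwise inequality on the integrand.

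First I would observe, using the rewriting already displayed in \eqref{energy_regularize}, that the only $\vep$-dependent contribution to $E^{\vep}(\rho)$ is the regularized kinetic density $\frac{|\nabla\rho(\bx)|^2}{8(\rho(\bx)+\vep)}$, while the potential, contact-interaction, and HOI parts are identical for $\vep_1$ and $\vep_2$. Hence the difference $E^{\vep_2}(\rho)-E^{\vep_1}(\rho)$ reduces to
\begin{equation}
E^{\vep_2}(\rho)-E^{\vep_1}(\rho)=\frac{1}{8}\int_{\mathbb{R}^d}|\nabla\rho(\bx)|^2\left(\frac{1}{\rho(\bx)+\vep_2}-\frac{1}{\rho(\bx)+\vep_1}\right)d\bx.
\end{equation}

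Next I would verify the pointwise sign. Since $\rho\ge 0$ a.e.\ by $\rho\in X_V$ and since $\vep_1\ge\vep_2\ge 0$, we have $\rho(\bx)+\vep_1\ge\rho(\bx)+\vep_2\ge 0$ a.e., so the parenthesized factor is nonnegative wherever it is defined (and the integrand is nonnegative a.e., using the convention $0/0=0$ when $\vep_2=0$ and $\rho(\bx)=0$, which is harmless because $|\nabla\rho|^2$ vanishes a.e.\ on the zero set of a Sobolev function $\rho$). Integration then yields $E^{\vep_2}(\rho)\ge E^{\vep_1}(\rho)$.

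The only delicate point, and probably the one worth a sentence in the proof, is the case $\vep_2=0$ in combination with $\rho$ possibly vanishing on a set of positive measure: the integrand $|\nabla\rho|^2/(8\rho)$ is most cleanly interpreted through the equivalent expression $\tfrac12|\nabla\sqrt{\rho}|^2$ used in the first line of \eqref{energy_regularize}. Invoking the standard fact that $|\nabla\rho|=0$ a.e.\ on $\{\rho=0\}$ for $\rho\in H^1$, the pointwise comparison remains valid a.e., and the monotonicity follows. There is no real obstacle here; the whole argument is a one-line pointwise bound followed by integration, and the only care needed is to justify the kinetic term's interpretation in the unregularized limit $\vep_2=0$.
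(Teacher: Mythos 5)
Your argument is correct and is essentially the paper's own proof: both reduce the claim to the pointwise monotonicity of the regularized kinetic integrand, i.e.\ $\tfrac12|\nabla\sqrt{\rho+\vep_1}|^2=\tfrac{|\nabla\rho|^2}{8(\rho+\vep_1)}\le\tfrac{|\nabla\rho|^2}{8(\rho+\vep_2)}$ for $\vep_1\ge\vep_2\ge0$ and $\rho\ge0$, and then integrate. Your extra remark on the $\vep_2=0$ case (using that $\nabla\rho=0$ a.e.\ on $\{\rho=0\}$ for $\rho\in H^1$) is a sensible refinement the paper leaves implicit, but the approach is the same.
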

\begin{proof}
Noticing that $|\nabla\sqrt{\rho+\vep_1}|^2\le|\nabla\sqrt{\rho+\vep_2}|^2$  holds true
for any $\rho\ge0$ and $\vep_1\ge\vep_2\ge0$, it is obvious that
\be\label{proof:E_monotone}
E^{\vep_1}(\rho)\le E^{\vep_2}(\rho).
\ee
\end{proof}
In particular, when $\beta\ge0$ and $\delta\ge0$, taking $\vep_2=0$ and $\rho=\rho_g$ \eqref{ground:ori} in \eqref{proof:E_monotone} and recalling the definition of $\rho_g^{\vep}$ \eqref{ground:reg}, we have that
\be
 E^{\vep}(\rho_g^{\vep})\le E^{\vep}(\rho_g) \le E(\rho_g), \text{ for any } \vep\ge0.
\ee

With Lemma \ref{lemma:E_vep},  we are ready to show the convergence of $\rho_g^{\vep}$ as $\vep\to0^+$. The result is summarized as the following theorem.
\begin{theorem}\label{thm:gm_conv}
For ground state densities $\rho_g$ and $\rho_g^{\vep}$ defined in \eqref{ground:ori} and \eqref{ground:reg} with $\beta>0$ and $\delta>0$,
 respectively, we have
$\rho_g^{\vep}\to \rho_g$ in $H^1(\mathbb{R}^d)$ and
\be
\lim_{\vep\to0^+}E^{\vep}(\rho_g^{\vep})=E(\rho_g).
\ee
\end{theorem}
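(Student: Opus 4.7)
The plan is to apply the direct method of the calculus of variations, exploiting the monotonicity in Lemma \ref{lemma:E_vep} twice: once for a uniform upper bound on the regularized ground-state energies, and once for a two-parameter trick that handles the simultaneous limit in $\rho$ and $\vep$ inside the kinetic term. First I would take $\vep_2=0$, $\rho=\rho_g$ in Lemma \ref{lemma:E_vep} to obtain $E^{\vep}(\rho_g^{\vep})\le E^{\vep}(\rho_g)\le E(\rho_g)$ for every $\vep>0$. Since $\beta>0$, $\delta>0$ and $V\ge 0$, every summand of $E^{\vep}$ is nonnegative, giving uniform bounds on $\|\nabla\rho_g^{\vep}\|_2$, $\|\rho_g^{\vep}\|_2$ and $\int V\rho_g^{\vep}\,d\bx$; together with $\|\rho_g^{\vep}\|_1=1$, the family $\{\rho_g^{\vep}\}$ is uniformly bounded in $X_V$. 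For any sequence $\vep_n\to 0^+$, reflexivity produces a subsequence (not relabeled) with $\rho_g^{\vep_n}\rightharpoonup\rho_*$ in $H^1(\mathbb{R}^d)$; Lemma \ref{lem:embed} upgrades this to $L^1$-convergence with $\rho_*\ge 0$, $\|\rho_*\|_1=1$, so $\rho_*\in W$, and a further diagonal extraction yields pointwise a.e.\ convergence.

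The main obstacle is the lower-semicontinuity inequality $E(\rho_*)\le \liminf_n E^{\vep_n}(\rho_g^{\vep_n})$, specifically for the kinetic term, in which both $\rho$ and $\vep$ vary. The other three terms are standard: Fatou for $\int V\rho$, weak $L^2$ lower semicontinuity for $\int\rho^2$ and $\int|\nabla\rho|^2$. For the kinetic term I would fix an auxiliary $\vep'>0$ and use Lemma \ref{lemma:E_vep} to write
\[
\int |\nabla\sqrt{\rho_g^{\vep_n}+\vep_n}|^2\,d\bx \;\ge\; \int \frac{|\nabla\rho_g^{\vep_n}|^2}{4(\rho_g^{\vep_n}+\vep')}\,d\bx \qquad \text{for all large } n.
\]
Since $(\rho_g^{\vep_n}+\vep')^{-1/2}\to(\rho_*+\vep')^{-1/2}$ pointwise a.e.\ and is uniformly bounded by $(\vep')^{-1/2}$, while $\nabla\rho_g^{\vep_n}\rightharpoonup\nabla\rho_*$ in $L^2$, a short test-function argument (splitting $f_n g_n - fg = g_n(f_n-f)+(g_n-g)f$ and using dominated convergence on the first factor plus weak convergence on the second) gives $(\rho_g^{\vep_n}+\vep')^{-1/2}\nabla\rho_g^{\vep_n}\rightharpoonup(\rho_*+\vep')^{-1/2}\nabla\rho_*$ in $L^2$. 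Weak lower semicontinuity of the $L^2$ norm then produces
\[
\liminf_n \int \frac{|\nabla\rho_g^{\vep_n}|^2}{4(\rho_g^{\vep_n}+\vep')}\,d\bx \;\ge\; \int \frac{|\nabla\rho_*|^2}{4(\rho_*+\vep')}\,d\bx = \int |\nabla\sqrt{\rho_*+\vep'}|^2\,d\bx,
\]
and sending $\vep'\to 0^+$ via monotone convergence upgrades this to $\int|\nabla\sqrt{\rho_*}|^2\,d\bx$.

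Putting the four semicontinuity estimates together with the upper bound from Step~1 yields $E(\rho_*)\le \liminf E^{\vep_n}(\rho_g^{\vep_n})\le E(\rho_g)$. Because $\rho_*\in W$ and $\rho_g$ is the unique minimizer of $E$ on $W$ (Theorem \ref{thm:mres}), $\rho_*=\rho_g$; independence of the chosen subsequence implies the whole family satisfies $\rho_g^{\vep}\rightharpoonup\rho_g$ in $H^1$, and the sandwich $E(\rho_g)\le \liminf E^{\vep}(\rho_g^{\vep})\le \limsup E^{\vep}(\rho_g^{\vep})\le E(\rho_g)$ gives the energy limit. Finally, each of the four summands in $E^{\vep_n}(\rho_g^{\vep_n})$ is lower semicontinuous and their weighted sum converges to $E(\rho_g)$; a standard extraction argument then forces each summand to converge to the corresponding summand of $E(\rho_g)$. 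In particular $\|\rho_g^{\vep}\|_2\to\|\rho_g\|_2$ and $\|\nabla\rho_g^{\vep}\|_2\to\|\nabla\rho_g\|_2$, so the combination of weak convergence and norm convergence in the Hilbert space $H^1$ upgrades the convergence to strong $H^1$ convergence, completing the proof.
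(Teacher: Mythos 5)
Your proposal is correct, but it takes a genuinely different route from the paper. The paper's proof reduces Theorem \ref{thm:gm_conv} to the abstract theory of $\Gamma$-convergence: since Lemma \ref{lemma:E_vep} says the family $E^{\vep}(\cdot)$ increases monotonely as $\vep\to0^+$, it suffices (citing the handbook result for monotone increasing families) to verify lower semicontinuity of each fixed $E^{\vep}$ with respect to the strong norm $\|\cdot\|_V$, which the paper does via the weak convergence of $\nabla\sqrt{\rho_n+\vep}$ in $L^2$; the convergence of minimizers and energies, and in particular the claimed strong $H^1$ convergence, are then left to the cited theory rather than spelled out. You instead run the direct method by hand: the monotonicity bound $E^{\vep}(\rho_g^{\vep})\le E^{\vep}(\rho_g)\le E(\rho_g)$ gives uniform bounds (here $\beta>0$, $\delta>0$, $V\ge0$ are used), weak $H^1$ compactness plus Lemma \ref{lem:embed} gives a limit $\rho_*\in W$, and you prove a ``diagonal'' $\liminf$ inequality in which $\rho$ and $\vep$ vary simultaneously, handling the kinetic term with the auxiliary parameter $\vep'$ (monotonicity to freeze the denominator, weak convergence of $(\rho_g^{\vep_n}+\vep')^{-1/2}\nabla\rho_g^{\vep_n}$, weak lower semicontinuity of the $L^2$ norm, then monotone convergence in $\vep'$); uniqueness of the minimizer identifies $\rho_*=\rho_g$, and the sandwich gives the energy limit. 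Your approach buys two things the paper's sketch does not make explicit: the equicoercivity/compactness step is carried out concretely in the weak $H^1$ topology (the natural one, whereas the paper's lower semicontinuity is stated for the strong $\|\cdot\|_V$ topology, where precompactness of sublevel sets is not addressed), and the strong $H^1$ convergence is actually derived, via the standard argument that convergence of the total energy together with lower semicontinuity of each of the four summands forces each summand to converge, so $\|\rho_g^{\vep}\|_2\to\|\rho_g\|_2$ and $\|\nabla\rho_g^{\vep}\|_2\to\|\nabla\rho_g\|_2$ upgrade weak to strong convergence (this is exactly where $\beta>0$ and $\delta>0$ are needed). The only points worth tightening are routine: the pointwise a.e.\ convergence used in the product argument and in Fatou should be fixed by a single subsequence extraction, and the identification of $\int\frac{|\nabla\rho_*|^2}{4\rho_*}\,d\bx$ with $\int|\nabla\sqrt{\rho_*}|^2\,d\bx$ at the end deserves a one-line justification (Lipschitz chain rule on $\{\rho_*>0\}$ and $\nabla\rho_*=0$ a.e.\ on $\{\rho_*=0\}$), the same identification the paper itself uses in writing \eqref{energy_rho} and \eqref{energy_regularize}.
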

\begin{proof}
The conclusion comes directly from the $\Gamma$-convergence of 
 $E^\vep(\cdot)$ as $\vep\to 0^+$.
In fact, it is sufficient to show the lower semi-continuity of $E^\vep(\cdot)$ for all $\vep\ge0$ in $X_V$ \eqref{def:func_space} with respect to the norm
\be
\|\rho\|_{V} := \int_{\mathbb{R}^d}\,V(\bx)|\rho|\,d\bx + \|\rho\|_{H^1}
\ee
since Lemma \ref{lemma:E_vep} indicates that
the energy functional $E^\vep(\cdot)$ is monotonely increasing as $\vep\to 0^+$ \cite{Andrea_Gamma_conv}.
Here we prove the lower semi-continuity of $E^\vep(\cdot)$  via definition.
To be more specific, for any convergent sequence $\{\rho_n\}\subset X_V$ \eqref{def:func_space} satisfying
\be\label{proof:assump_Xv-conv}
\lim_{n\to\infty} \, \|\rho_n-\rho\|_V = 0,
\ee
for some $\rho\in X_V$, we aim to show that
\be\label{proof:lsc_E}
\liminf_{n\to\infty} E^{\vep}(\rho_n) \ge E^{\vep}(\rho).
\ee

To begin with, we claim that
\be\label{proof:sqrt_rho_weak_conv}
\nabla\sqrt{\rho_n+\vep} \rightharpoonup \nabla\sqrt{\rho+\vep} \quad \text{ in }
 L^2(\mathbb{R}^d).
\ee
Recalling the definition of weak limit, we only need to show that
\be\label{proof:sqrt_rho}
\lim_{n\to\infty}\int_{\mathbb{R}^d}\nabla
\sqrt{\rho_n+\vep}\,\phi\,d\bx=\int_{\mathbb{R}^d}\nabla\sqrt{\rho+\vep}\,\phi\,d\bx,
\ee
for any compactly supported smooth function $\phi$.
Here we assume $\rm{supp}(\phi)\subset\Omega_R:=\{\bx\,|\, |\bx|<R\}$ for some $R>0$.
Obviously, $\rho_n\to \rho$ in $L^1(\Omega_R)$ by combining \eqref{proof:assump_Xv-conv} and the H\"{o}lder inequality.
A direct computation shows that
\be\label{proof:sqrt_rho1}
\limsup_{n\to\infty}\int_{\Omega_R}\left|
\sqrt{\rho_n+\vep}-\sqrt{\rho+\vep}\right|\,d\bx
\le
\limsup_{n\to\infty}\int_{\Omega_R}\frac{|\rho_n-\rho|}{2\sqrt{\vep}}\,d\bx= 0,
\ee
which implies \eqref{proof:sqrt_rho}, and thus \eqref{proof:sqrt_rho_weak_conv}, immediately by combining the fact that
\begin{align}
\int_{\mathbb{R}^d}\left(\nabla\sqrt{\rho_n+\vep}-\nabla\sqrt{\rho+\vep}\right)\phi\,d\bx=-\int_{\Omega_R}\left(\sqrt{\rho_{n}+\vep}-\sqrt{\rho+\vep}\right)\nabla\phi\,d\bx,
\end{align}
and the fact that $\phi$ is smooth and compactly supported within $\Omega_R$.

By the weak convergence \eqref{proof:sqrt_rho_weak_conv}, we have that
\be\label{proof:weak2}
\liminf_{n\to\infty}\|\nabla\sqrt{\rho_n+\vep}\|_2\ge\|\nabla\sqrt{\rho+\vep}\|_2.
\ee
On the other hand, the convergence of $\{\rho_n\}$ in the sense \eqref{proof:assump_Xv-conv} implies that
\be
\lim_{n\to\infty}\int_{\mathbb{R}^d}V(\bx)\rho_n\,d\bx = \int_{\mathbb{R}^d}V(\bx)\rho\,d\bx,
\ee
and
\be \label{proof:weak1}
\lim_{n\to\infty}\|\rho_n\|_2=\|\rho\|_{2},\quad
\lim_{n\to\infty}\|\nabla\rho_n\|_2=\|\nabla\rho\|_2.
\ee
Thus we proved the lower semi-continuity of $E^\vep(\cdot)$ \eqref{proof:lsc_E}. \
\end{proof}

\begin{remark}
One key step \eqref{proof:sqrt_rho1} in the proof is no longer valid when $\vep=0$.
However, we can still prove the lower semicontinuity of the energy functional $E^{0}(\cdot)$ via a similar, but more complicated argument.
\end{remark}

Theorem \ref{thm:gm_conv} shows the convergence of $\rho_g^\vep$.
Although the exact convergence rate is unclear,
the theorem below shows a relation between the convergence rate of the ground state density $\rho_g^\vep$ \eqref{ground:reg} and the convergence rate of the corresponding energy.

\begin{theorem}\label{thm:conv_rate}
For $\rho_g$ and $\rho_g^{\vep}$ defined in \eqref{ground:ori} and \eqref{ground:reg} with $\beta\ge0$ and $\delta\ge0$, respectively,  we have
\be\label{conv_speed}
\frac{\beta}{2}\|\rho_g-\rho_g^{\vep}\|_2^2+\frac{\delta}{2}\|\nabla(\rho_g-\rho_g^{\vep})\|_2^2\le E(\rho_g)-E^{\vep}(\rho_g^{\vep}).
\ee
\end{theorem}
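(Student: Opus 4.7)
The plan is to derive a strong-convexity inequality for $E^{\vep}$ adapted to the quadratic interaction/HOI portion, combine it with the optimality of $\rho_g^{\vep}$ in the convex set $W$, and finally invoke Lemma \ref{lemma:E_vep} to swap $E^{\vep}(\rho_g)$ for the larger $E(\rho_g)$ on the right-hand side.

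\textbf{Step 1 (quadratic identity for the nonlinear/HOI terms).} I would split $E^{\vep}(\rho)$ into its convex kinetic-plus-potential part $K^{\vep}(\rho):=\int_{\mathbb{R}^d}\bigl[\tfrac12|\nabla\sqrt{\rho+\vep}|^2+V(\bx)\rho\bigr]d\bx$ (which is convex on $W$ by the same argument as in Lemma \ref{lem:convex_reg}) and the purely quadratic part $F(\rho):=\int_{\mathbb{R}^d}\bigl[\tfrac{\beta}{2}\rho^2+\tfrac{\delta}{2}|\nabla\rho|^2\bigr]d\bx$. For any $\rho_1,\rho_2\in W$ and $t\in[0,1]$, a direct expansion (using that $F$ is a sum of squared $L^2$-norms) gives the exact parallelogram identity
\begin{equation*}
F\bigl(t\rho_1+(1-t)\rho_2\bigr)=tF(\rho_1)+(1-t)F(\rho_2)-t(1-t)\,Q(\rho_1-\rho_2),
\end{equation*}
where $Q(\rho):=\tfrac{\beta}{2}\|\rho\|_2^2+\tfrac{\delta}{2}\|\nabla\rho\|_2^2$. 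Combining this with the ordinary convexity of $K^{\vep}$ yields the strong-convexity inequality
\begin{equation*}
E^{\vep}\bigl(t\rho_1+(1-t)\rho_2\bigr)\le tE^{\vep}(\rho_1)+(1-t)E^{\vep}(\rho_2)-t(1-t)\,Q(\rho_1-\rho_2).
\end{equation*}

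\textbf{Step 2 (exploit optimality of $\rho_g^{\vep}$).} Since $W$ is convex, for $\rho_1:=\rho_g$ and $\rho_2:=\rho_g^{\vep}$ the convex combination $\rho_t:=t\rho_g+(1-t)\rho_g^{\vep}\in W$ for every $t\in[0,1]$. The minimality of $\rho_g^{\vep}$ in $W$ gives $E^{\vep}(\rho_t)\ge E^{\vep}(\rho_g^{\vep})$, so the inequality of Step 1 becomes
\begin{equation*}
E^{\vep}(\rho_g^{\vep})\le tE^{\vep}(\rho_g)+(1-t)E^{\vep}(\rho_g^{\vep})-t(1-t)\,Q(\rho_g-\rho_g^{\vep}).
\end{equation*}
Subtracting $(1-t)E^{\vep}(\rho_g^{\vep})$, dividing by $t>0$, and letting $t\to 0^{+}$ gives
\begin{equation*}
E^{\vep}(\rho_g)-E^{\vep}(\rho_g^{\vep})\ge Q(\rho_g-\rho_g^{\vep})=\tfrac{\beta}{2}\|\rho_g-\rho_g^{\vep}\|_2^2+\tfrac{\delta}{2}\|\nabla(\rho_g-\rho_g^{\vep})\|_2^2.
\end{equation*}

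\textbf{Step 3 (pass from $E^{\vep}$ to $E$).} Applying Lemma \ref{lemma:E_vep} with $\vep_1=\vep$ and $\vep_2=0$ yields $E^{\vep}(\rho_g)\le E(\rho_g)$, hence
\begin{equation*}
E(\rho_g)-E^{\vep}(\rho_g^{\vep})\ge E^{\vep}(\rho_g)-E^{\vep}(\rho_g^{\vep})\ge \tfrac{\beta}{2}\|\rho_g-\rho_g^{\vep}\|_2^2+\tfrac{\delta}{2}\|\nabla(\rho_g-\rho_g^{\vep})\|_2^2,
\end{equation*}
which is the desired estimate \eqref{conv_speed}.

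The only mildly delicate point I expect is the passage $t\to 0^+$ in Step 2; a slightly cleaner alternative is to freeze $t=\tfrac12$ and use the midpoint form of Step 1, which however only recovers the inequality with constants $\tfrac{\beta}{4}$ and $\tfrac{\delta}{4}$. Taking the limit (equivalently, invoking a first-order variational inequality for $\rho_g^{\vep}$) is what gives the sharp factors $\tfrac{\beta}{2}$ and $\tfrac{\delta}{2}$ stated in the theorem.
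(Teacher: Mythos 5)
Your proposal is correct, and it reaches \eqref{conv_speed} by a genuinely different route than the paper. The paper works at the level of the integrands: it derives the first-order variational inequality $f'(0)\ge 0$ for $f(t)=E^{\vep}(\rho_g^{\vep}+t(\rho_g-\rho_g^{\vep}))$ (Lemma \ref{lemma:residual_cts}), decomposes $E(\rho_g)-E^{\vep}(\rho_g^{\vep})$ into kinetic, potential, interaction and quadratic-difference pieces, and then kills the mismatched kinetic contribution --- $\tfrac12|\nabla\sqrt{\rho_g}|^2$ versus $\tfrac12|\nabla\sqrt{\rho_g^{\vep}+\vep}|^2$ --- by a pointwise Young plus Cauchy--Schwarz estimate. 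You instead stay entirely inside the regularized functional: convexity of the kinetic-plus-potential part (which is exactly the $\beta=\delta=0$ case of Lemma \ref{lem:convex_reg}), the exact parallelogram identity for the quadratic part $Q$, and the zeroth-order minimality $E^{\vep}(\rho_t)\ge E^{\vep}(\rho_g^{\vep})$ followed by $t\to 0^+$ give the intermediate bound $E^{\vep}(\rho_g)-E^{\vep}(\rho_g^{\vep})\ge Q(\rho_g-\rho_g^{\vep})$; only then do you invoke the monotonicity Lemma \ref{lemma:E_vep} to replace $E^{\vep}(\rho_g)$ by $E(\rho_g)$. Your limit passage is harmless since all quantities are finite and $t$-independent (each energy term is nonnegative and finite for $\rho_g,\rho_g^{\vep}\in W$), so there is no gap. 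What your route buys: it avoids the pointwise kinetic-term manipulations altogether and in fact proves the slightly sharper inequality with $E^{\vep}(\rho_g)$ on the right-hand side; it also mirrors almost exactly the mechanism the paper later uses in the discrete setting (Theorem \ref{thm:conv_h_E}), where no $\vep$-mismatch occurs. What the paper's route buys: it isolates the explicit variational inequality of Lemma \ref{lemma:residual_cts} and shows directly how the regularized and unregularized kinetic densities compare, which is the only genuinely $\vep$-sensitive point and is handled there without appealing to the monotonicity lemma.
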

\begin{proof}
We start by proving the following lemma, which plays an essential role  in the proof of \eqref{conv_speed}.
\begin{lemma}\label{lemma:residual_cts}
Define $
R(\rho_g,\rho_g^{\vep})=\int_{\mathbb{R}^d} \left(
\beta\rho_g^{\vep}-\delta\Delta\rho_g^{\vep}\right)(\rho_g-\rho_g^{\vep})\,d\bx
$ and  $I(\rho_g,\rho_g^{\vep})=\int_{\mathbb{R}^d}\,V(\bx)(\rho_g-\rho_g^{\vep})\,d\bx$, then
\be
R(\rho_g,\rho_g^{\vep})+I(\rho_g,\rho_g^{\vep})
\ge \int_{\mathbb{R}^d} \left[-\frac{\nabla\rho_g^{\vep}\cdot\nabla(\rho_g-\rho_g^{\vep})}{4(\rho_g^{\vep}+\vep)}
+\frac{|\nabla\rho_g^{\vep}|^2(\rho_g-\rho_g^{\vep})}{8(\rho_g^{\vep}+\vep)^2}\right]d\bx.
\label{ineq:residual_cts}
\ee
\end{lemma}
To prove Lemma \ref{lemma:residual_cts},
we consider $f(t)=E^{\vep}(\rho_g^{\vep}+t(\rho_g-\rho_g^{\vep}))$. Obviously, $\rho_g^{\vep}+t(\rho_g-\rho_g^{\vep})\in W$ \eqref{def:feasible_rho} for any $t\in[0,1]$. Therefore, $f(t)$ takes its minimum value in $[0,1]$ at $t=0$ by recalling the definition of $\rho_g^{\vep}$ \eqref{ground:reg}, which indicates $f'(0)\ge0$. A direct computation of $f'(0)$ shows that the requirement $f'(0)\ge0$ is exactly \eqref{ineq:residual_cts}. The details are omitted here for brevity.

Now we come back to the proof of \eqref{conv_speed}.
For simplicity, we define
\be
K(\rho_g,\rho_g^{\vep})=\int_{\mathbb{R}^d}\left[\frac{1}{2}|\nabla\sqrt{\rho_g}|^2-\frac{1}{2}|\nabla\sqrt{\rho_g^{\vep}+\vep}|^2\right]\, d\bx,
\ee
and
\be
D(\rho_g,\rho_g^{\vep})=\frac{\beta}{2}\|\rho_g-\rho_g^{\vep}\|_2^2+\frac{\delta}{2}\|\nabla(\rho_g-\rho_g^{\vep})\|_2^2.
\ee
Then, by a direct computation, we have that
\be
\int_{\mathbb{R}^d}\left[\frac{\beta}{2}(|\rho_g|^2-|\rho_g^{\vep}|^2)+\frac{\delta}{2}(|\nabla\rho_g|^2-|\nabla\rho_g^{\vep}|^2)\right]d\bx
=R(\rho_g,\rho_g^{\vep})+D(\rho_g,\rho_g^{\vep}),
\ee
and, therefore,
\begin{align}\nonumber
E(\rho_g)-E^{\vep}(\rho_g^{\vep})&= K(\rho_g,\rho_g^{\vep}) +I(\rho_g,\rho_g^{\vep})+R(\rho_g,\rho_g^{\vep})+D(\rho_g,\rho_g^{\vep}) \nonumber\\
\nonumber&\ge K(\rho_g,\rho_g^{\vep})\\
&\quad\nonumber+\int_{\mathbb{R}^d}\left[-\frac{\nabla\rho_g^{\vep}\cdot\nabla(\rho_g-\rho_g^{\vep})}{4(\rho_g^{\vep}+\vep)}
+\frac{|\nabla\rho_g^{\vep}|^2(\rho_g-\rho_g^{\vep})}{8(\rho_g^{\vep}+\vep)^2}\right]\, d\bx+D(\rho_g,\rho_g^{\vep})\label{proof:conv_rate_step2}\\
\nonumber&=\int_{\mathbb{R}^d}\left[\frac{|\nabla\rho_g|^2}{8\rho_g}+\frac{|\nabla\rho_g^{\vep}|^2(\rho_g+\vep)}{8(\rho_g^{\vep}+\vep)^2}-\frac{\nabla\rho_g^{\vep}\cdot\nabla\rho_g}{4(\rho_g^{\vep}+\vep)}
\right]\, d\bx+D(\rho_g,\rho_g^{\vep})\\
&\ge\int_{\mathbb{R}^d}\left[\frac{|\nabla\rho_g||\nabla\rho_g^{\vep}|}{4(\rho_g^{\vep}+\vep)}-\frac{\nabla\rho_g^{\vep}\cdot\nabla\rho_g}{4(\rho_g^{\vep}+\vep)}
\right]\, d\bx+D(\rho_g,\rho_g^{\vep})\ge D(\rho_g,\rho_g^{\vep}),
\end{align}
where the first inequality is due to Lemma \ref{lemma:residual_cts}, the last two inequalities are by Young's inequality and Cauchy-Schwarz inequality, respectively.
\end{proof}

\begin{remark}
The error estimate in Theorem \ref{thm:conv_rate} is sub-optimal.
Later in Section \ref{sec:numeric}, we will show that, in some specific  setting, we observe the
asymptotic error of order $O(\vep)$ in both energy and the $L^2$-norm.
However, only the $O(\sqrt{\vep})$ error in $L^2$-norm can be expected via Theorem \ref{thm:conv_rate}.
\end{remark}

\section{Numerical discretization}\label{sec:scheme}
In this section, we present the detailed spatial discretization of the regularized energy functional $E^\vep(\cdot)$ \eqref{energy_regularize}.
The second-order finite difference method is applied to discretize the energy functional in the regularized density function formulation  and the spatial accuracy of the corresponding numerical ground state density is analyzed in details.

\subsection{A second-order finite difference discretization}\label{sec:FD}
Due to the exponential decay of the ground state under the confining external potential \eqref{V:confine} \cite{mgpe-th}, it is reasonable to truncate the problem on a bounded, but sufficiently large domain with the homogeneous Dirichlet  boundary condition imposed.
For simplicity, only the one dimension (1D) case is considered here. Extensions to higher dimensions for tensor grids  are straightforward.

In 1D, truncate the problem into an interval $U = (a, b)$ and choose the uniform grid points as
\be\label{notation:x}
x_j = a + jh, \text{ for } j =  0, 1,\dots, N,
\ee
with mesh size $h = (b - a)/N$.
Let $\rho_j$ be the numerical approximation of $\rho(x_j)$ and denote $\brho_{h}$ to be the  vector of length $N-1$ with its $j$-th component to be $\rho_j$, i.e.
\be\label{notation:rho_full}
\brho_{h} = (\rho_1,\dots , \rho_{N-1})^T\in\mathbb{R}^{N-1}.
\ee
By the homogenous Dirichlet boundary condition, we have $\rho_0=\rho_N=0$.
Define the discrete norms of the vector $\brho_{h}$ as
\bea\label{def:norm}
\quad \|\brho_{h}\|_{l^1}:=h\sum_{j=1}^{N-1}|\rho_j|, \ \|\brho_{h}\|_{l^2}:=\sqrt{h\sum_{j=1}^{N-1}|\rho_j|^2}, \ \|\delta_x^+\brho_{h}\|_{l^2}:=\sqrt{h\sum_{j=0}^{N-1}|\delta_x^+ \rho_j|^2},
\eea
where $\delta_x^+$, which is the  finite difference approximation of $\p_x$, is defined as
$\delta_x^+\rho_{j}:=(\rho_{j+1}-\rho_j)/h$.
Then the feasible set $W_h$ can be defined as
\be\label{def:feasible}
W_h=\{\brho_{h}=(\rho_1,\dots,\rho_{N-1})^T\in\mathbb{R}^{N-1}\,|\,\|\brho_{h}\|_{l^1}=1, \brho_{h}\ge0\}.
\ee

With the notations defined above, we are now ready to  discretize the regularized energy functional $E^\vep(\cdot)$ \eqref{energy_regularize} in 1D.
Noticing the boundary condition $\rho_0=\rho_N=0$, we have
\begin{align}
E^{\vep}(\rho)
\nonumber&=\int_a^b\left[\frac{|\p_x\rho|^2}{8(\rho+\vep)}+V(x)\rho+\frac{\beta}{2}\rho^2+\frac{\delta}{2}|\p_x\rho|^2\right]dx\\
\nonumber&=\int_a^b\left[\frac{|\p_x\rho|^2}{8(\rho+\vep)}+\frac{\delta}{2}|\p_x\rho|^2\right]dx+\int_a^b\left[V(x)\rho+\frac{\beta}{2}\rho^2\right]dx\\
\nonumber&\approx
h\sum_{j=0}^{N-1}\left[\frac{1}{8}\frac{|\delta_x^+\rho_j|^2}{\rho_{j+\frac12}+\vep}+\frac{\delta}{2}\left|\delta_x^+\rho_{j}\right|^2\right]+
h\sum_{j=0}^{N-1}\left[V(x_j)\rho_j+\frac{\beta}{2}\rho_j^2\right]
\\
&\approx
h\sum_{j=0}^{N-1}\left[\frac{1}{4}\frac{|\delta_x^+\rho_j|^2}{\rho_j+\rho_{j+1}+2\vep}+V(x_j)\rho_j+\frac{\beta}{2}\rho_j^2+\frac{\delta}{2}\left|\delta_x^+\rho_{j}\right|^2\right], \label{discretization:FD_derive_end}
\end{align}
where we approximate the first integral on the second line via the composite midpoint rule and the second integral via the composite trapezoidal rule.
Define
\be\label{Eh_rho}
E_h^{\vep}(\brho_{h})=h\sum_{j=0}^{N-1}\left[\frac{1}{4}\frac{|\delta_x^+\rho_j|^2}{\rho_j+\rho_{j+1}+2\vep}+V(x_j)\rho_j+\frac{\beta}{2}\rho_j^2+\frac{\delta}{2}\left|\delta_x^+\rho_{j}\right|^2\right],
\ee
where $\brho_{h}\in W_h$ \eqref{def:feasible} and $\rho_0=\rho_N=0$.
The problem \eqref{ground:reg} now becomes finding $\brho_{g,h}^{\vep}=(\rho_{g,1}^{\vep},\rho_{g,2}^{\vep},\dots,\rho_{g,N-1}^{\vep})^T\in W_h$ such that
\be\label{problem:E_rho_FD}
\brho_{g,h}^{\vep}=\argmin_{\brho_{h}\in W_h} E_h^{\vep}(\brho_{h}).
\ee
And the corresponding numerical energy $E_{g,h}^{\vep}:=E_h^{\vep}(\brho_{g,h}^{\vep})$ approximates $E^\vep(\rho_g^\vep)$.

By Theorem \ref{thm:gm_conv}, the ground state density $\rho_g^\vep$ of the regularized energy functional \eqref{energy_regularize} will converge to the ground state density $\rho_g$ of \eqref{energy_rho} as $\vep\to0^+$.
Therefore, to solve the original optimization problem \eqref{ground:ori},
we only need to choose a sufficiently small $\vep$ and find the corresponding  numerical ground state density $\brho_{g,h}^{\vep}$ as an approximation.

We claim that the problem \eqref{problem:E_rho_FD} is a finite-dimensional convex optimization problem.
The convexity of the feasible set $W_h$ \eqref{def:feasible} is obvious.
And the convexity of $E_h^{\vep}(\cdot)$ is shown in Lemma \ref{Eh_convex} below.
\begin{lemma}\label{Eh_convex}
If $\beta\ge0$ and $\delta\ge0$,
the discretized energy functional
$E_h^{\vep}(\cdot)$  \eqref{Eh_rho} is convex  in the feasible set $W_h$ \eqref{def:feasible}  for any fixed $\vep\ge0$.
\end{lemma}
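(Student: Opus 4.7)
The plan is to decompose $E_h^{\vep}$ into its individual summands and argue that each one is convex as a function of the full vector $\brho_{h}$ on the feasible set $W_h$. Since any nonnegative linear combination (with weight $h>0$) of convex functions is convex, this will give the conclusion.

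First I would handle the easy pieces term by term. The potential contribution $V(x_j)\rho_j$ is linear in $\brho_h$ and hence convex. Under the assumption $\beta\ge 0$, the interaction term $\tfrac{\beta}{2}\rho_j^2$ is a nonnegative multiple of a convex quadratic in one coordinate, hence convex. Similarly, under $\delta\ge 0$, the higher-order interaction contribution $\tfrac{\delta}{2}|\delta_x^+\rho_j|^2$ is a nonnegative multiple of the square of an affine functional of $\brho_h$, and therefore convex.

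The only nontrivial piece is the regularized kinetic term
\[
\frac{1}{4}\,\frac{|\delta_x^+\rho_j|^2}{\rho_j+\rho_{j+1}+2\vep}.
\]
The main obstacle is establishing convexity of this quadratic-over-linear expression. I would invoke the well-known fact that the function $q:\mathbb{R}\times(0,\infty)\to\mathbb{R}$ defined by $q(s,t)=s^2/t$ is jointly convex (its Hessian is positive semidefinite, or equivalently, $q$ is the perspective of the convex function $s\mapsto s^2$). Now define the affine map $L_j:\mathbb{R}^{N-1}\to\mathbb{R}\times\mathbb{R}$ by
\[
L_j(\brho_h)=\bigl(\rho_{j+1}-\rho_j,\;\rho_j+\rho_{j+1}+2\vep\bigr),
\]
with the convention $\rho_0=\rho_N=0$. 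On $W_h$, the second coordinate $\rho_j+\rho_{j+1}+2\vep \ge 2\vep>0$ stays strictly positive since $\vep>0$ (and the regularization was introduced precisely to ensure this). Hence the $j$-th kinetic summand equals $\tfrac{1}{4h^2}\,q\bigl(L_j(\brho_h)\bigr)$, which is a composition of the convex function $q$ with an affine map and therefore convex on $W_h$.

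Finally, summing the $j$-contributions and multiplying by $h>0$, $E_h^{\vep}$ is a sum of convex functions and thus convex on $W_h$, completing the proof. I would note in passing that the argument extends verbatim to the $\vep=0$ case provided one restricts to the relatively open subset where $\rho_j+\rho_{j+1}>0$ for all $j$, since the only place where $\vep>0$ was used was to keep the denominator of $q$ strictly positive.
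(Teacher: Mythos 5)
Your proposal is correct, and it follows the same decomposition as the paper: the potential, $\beta$ and $\delta$ terms are trivially convex, and the whole issue is the quadratic-over-linear kinetic summand $\frac{1}{4h^2}\,|\rho_{j+1}-\rho_j|^2/(\rho_j+\rho_{j+1}+2\vep)$. The difference is in how that summand is handled: you invoke the standard convex-analysis fact that $q(s,t)=s^2/t$ is jointly convex on $\mathbb{R}\times(0,\infty)$ (the perspective of $s\mapsto s^2$) and compose it with the affine map $\brho_h\mapsto(\rho_{j+1}-\rho_j,\ \rho_j+\rho_{j+1}+2\vep)$, whereas the paper gives a self-contained verification of midpoint convexity of each $E^{\rm kin}_j$ by an algebraic rearrangement and Cauchy's inequality — which is precisely the inequality $\frac{x^2}{p}+\frac{y^2}{q}\ge\frac{(x+y)^2}{p+q}$ underlying the convexity of $q$. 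So your route is shorter and cites a known lemma, while the paper's is elementary and avoids any appeal to outside facts; mathematically they rest on the same inequality. One small point: the lemma is stated for any fixed $\vep\ge0$, and your closing remark only covers $\vep=0$ on the subset where all denominators are strictly positive. To get the full statement for $\vep=0$ you can replace $q$ by its lower semicontinuous hull (set $q(s,0)=0$ for $s=0$ and $+\infty$ otherwise), which is still convex on the closed half-plane and agrees with the convention that a $0/0$ summand contributes $0$; the paper's computation implicitly uses the same convention in the degenerate case, so neither argument loses anything essential here.
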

\begin{proof}
The convexity of the external potential term and the last two quadratic terms in \eqref{Eh_rho} is obvious.
Therefore, we only need to show the convexity of the kinetic energy term, i.e. the first term,  in \eqref{Eh_rho}.
Define $E^{\rm{kin}}(\brho_{h})=\sum_{j=0}^{N-1}E^{\rm{kin}}_{j}(\brho_{h})$, where
\be\label{def:kinetic_j}
E^{\rm{kin}}_{j}(\brho_{h}):=\frac{1}{4h^2}\frac{|\rho_{j+1}-\rho_j|^2}{\rho_j+\rho_{j+1}+2\vep}.
\ee
For any two vectors $\mathbf{a}_h=(a_1,\dots,a_{N-1})^T\in W_h$ and $\mathbf{b}_h=(b_1,\dots,b_{N-1})^T\in W_h$ with $a_0=a_N=b_0=b_N=0$, a direct computation shows that, for any  $j=0,1,\dots, N-1$,
\begin{align}
\nonumber&\frac{E^{\rm{kin}}_{j}(\mathbf{a}_h)+E^{\rm{kin}}_{j}(\mathbf{b}_h)}{2}=\frac{1}{8h^2}\left(\frac{|a_{j+1}-a_j|^2}{a_j+a_{j+1}+2\vep}+\frac{|b_{j+1}-b_j|^2}{b_j+b_{j+1}+2\vep}\right)\\
\nonumber&=\frac{1}{8h^2}\frac{|a_{j+1}-a_j|^2\left(1+\frac{b_j+b_{j+1}+2\vep}{a_j+a_{j+1}+2\vep}\right)+|b_{j+1}-b_j|^2\left(1+\frac{a_j+a_{j+1}+2\vep}{b_j+b_{j+1}+2\vep}\right)}{(a_j+a_{j+1}+2\vep)+(b_j+b_{j+1}+2\vep)}\\
\nonumber&\ge\frac{1}{8h^2}\frac{(a_{j+1}-a_j)^2+(b_{j+1}-b_j)^2+2(a_{j+1}-a_j)(b_{j+1}-b_j)}{(a_j+a_{j+1}+2\vep)+(b_j+b_{j+1}+2\vep)}\\
&=\frac{1}{8h^2}\frac{|a_{j+1}+b_{j+1}-a_j-b_j|^2}{(a_j+a_{j+1}+2\vep)+(b_j+b_{j+1}+2\vep)}
=E^{\rm{kin}}_{j}\left(\frac{\mathbf{a}_h+\mathbf{b}_h}{2}\right),
\end{align}
 where the inequality is due to the Cauchy's inequality,
which shows the convexity of $E^{\rm{kin}}_{j}(\cdot)$.
The convexity of $E^{\rm{kin}}(\cdot)$ is then obvious.
\end{proof}
The existence and uniqueness of the ground state density $\brho_{g,h}^{\vep}$ \eqref{problem:E_rho_FD} for a fixed $\vep\ge0$ when $\beta\ge0$ and $\delta\ge0$ come directly from the fact that the convex function $E_h^{\vep}(\cdot)$ is bounded below by 0.

To solve the problem \eqref{problem:E_rho_FD},
it is usually important to get the gradient of the discretized energy functional $E_h^{\vep}(\brho_{h})$ since most efficient optimization techniques are gradient-based.
Denote
\be
\mathbf{g}_h^{\vep}(\brho_{h})=\left( \frac{\p E_h^{\vep}}{\p\rho_1},\frac{\p E_h^{\vep}}{\p\rho_2},\dots,\frac{\p E_h^{\vep}}{\p\rho_{N-1}}\right)^T.
\ee
Then its $j$-th component can be explicitly computed  as
\be\label{Eh_rho_gradient}
\mathbf{g}_h^{\vep}[j]:=\frac{\partial E_h^{\vep}}{\partial \rho_j}=h\left[-\frac{\delta_x^+f_{j-1}}{2}-\frac{f_{j-1}^2+f_j^2}{4}+V(x_j)+\beta \rho_j-\delta(\delta_x^2\rho_j)\right],
\ee
where $\delta_x^2\rho_j:=(\rho_{j+1}-2\rho_j+\rho_{j-1})/h^2$ for $j=1,2,\ldots,N-1$ and
\be\label{def:s_f}
 f_j=f_j(\brho_{h}):=\frac{\delta_x^+\rho_j}{\rho_j+\rho_{j+1}+2\vep},\quad \text{ for }  j=0, 1, \dots, N-1.
\ee

The energy functional $E_h^{\vep}(\brho_{h})$ \eqref{Eh_rho} and its gradient $\mathbf{g}_h^{\vep}(\brho_{h})$ \eqref{Eh_rho_gradient} can be written in a more concise form.
Define a symmetric matrix $A=(a_{jk})\in\mathbb{R}^{(N-1)\times(N-1)}$, where 
\bea \label{matrix:Delta}
a_{jk}=\frac{1}{h^2}\left\{\begin{array}{lll}
2,  &\text{ if } j=k,\\
-1,  &\text{ if } j=k\pm1,\\
0, &\text{ otherwise.}
\end{array}\right.
\eea
Then we have $E_h^{\vep}(\brho_{h})$ \eqref{Eh_rho} and $\mathbf{g}_h^{\vep}(\brho_{h})$ \eqref{Eh_rho_gradient}  in  compact forms as
\begin{align}\label{Eh_rho_matrix}
&E_h^{\vep}(\brho_{h})=h E^{\rm{kin}}(\brho_{h})+h\left[\mathbf{v}^T\brho_{h}+\frac{\beta}{2} \brho_{h}^T\brho_{h}+\delta \brho_{h}^TA\brho_{h}\right],\\
\label{Eh_rho_gradient_matrix}
&\mathbf{g}_h^{\vep}(\brho_{h})=h\nabla E^{\rm{kin}}(\brho_{h})+h\left[\mathbf{v}+\beta \brho_{h}+2\delta A\brho_{h}\right],
\end{align}
respectively, where $E^{\rm{kin}}(\brho_{h})=\sum_{j=0}^{N-1}E^{\rm{kin}}_{j}(\brho_{h})$ with $E^{\rm{kin}}_{j}(\brho_{h})$  defined in \eqref{def:kinetic_j},  $\mathbf{v}=(V(x_1), V(x_2),\dots,V(x_{N-1}))^T\in\mathbb{R}^{N-1}$ and
\be
\nabla E^{\rm{kin}}(\brho_{h})=\left(\frac{\p E^{\rm{kin}}(\brho_{h})}{\p \rho_1}, \frac{\p E^{\rm{kin}}(\brho_{h})}{\p \rho_2},\dots,\frac{\p E^{\rm{kin}}(\brho_{h})}{\p \rho_{N-1}}\right)^T\in\mathbb{R}^{N-1},
\ee
with
\be
\frac{\p E^{\rm{kin}}(\brho_{h})}{\p \rho_j}=-\frac{\delta_x^+f_{j-1}}{2}-\frac{f_{j-1}^2+f_j^2}{4},
\ee
where $f_j=f_j(\brho_{h})$ is defined in \eqref{def:s_f}.

\subsection{Spatial error analysis}\label{subsec:FD_conv}
In this section, we study the spatial accuracy of the numerical ground state density. For simplicity, only the 1D case is considered here.
The following theorem can be proved, which shows that the difference between the ground state density $\brho_{g,h}^{\vep}$ \eqref{problem:E_rho_FD} and an arbitrary vector $\brho_{h}\in W_h$  \eqref{def:feasible} is closely related to the difference of the corresponding energies.
\begin{theorem}\label{thm:conv_h_E}
Considering the ground state density $\brho_{g,h}^{\vep}=(\rho^{\vep}_{g,1},\rho^{\vep}_{g,2},\dots,\rho^{\vep}_{g,N-1})^T$ \eqref{problem:E_rho_FD} and an arbitrary  vector $\brho_{h}=(\rho_1,\rho_2,\dots,\rho_{N-1})^T$ in  $W_h$ \eqref{def:feasible}, we have
\be\label{thm:conv_h_E1}
\frac{\beta}{2}\|\brho_{h}-\brho_{g,h}^{\vep}\|_{l^2}^2+\frac{\delta}{2}\|\delta_x^+(\brho_{h}-\brho_{g,h}^{\vep})\|_{l^2}^2\le E_h^{\vep}(\brho_{h})-E_h^{\vep}(\brho_{g,h}^{\vep}).
\ee
\end{theorem}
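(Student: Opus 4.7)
The plan is to mirror the continuous proof of Theorem \ref{thm:conv_rate}, exploiting the fact that $E_h^{\vep}(\cdot)$ on the convex feasible set $W_h$ splits cleanly into a convex non-quadratic piece and a nonnegative-definite quadratic piece, so that combining a first-order optimality condition with Taylor's formula for a quadratic yields the desired gap estimate. Concretely, I would write
\begin{equation}\label{planeq:split}
E_h^{\vep}(\brho_h)=G_h^{\vep}(\brho_h)+Q_h(\brho_h), \quad Q_h(\brho_h):=\frac{\beta}{2}\|\brho_h\|_{l^2}^2+\frac{\delta}{2}\|\delta_x^+\brho_h\|_{l^2}^2,
\end{equation}
where $G_h^{\vep}$ collects the discrete kinetic term and the linear potential term. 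By Lemma \ref{Eh_convex} and the trivial convexity of the linear potential piece, $G_h^{\vep}$ is convex on $W_h$; and $Q_h$ is a nonnegative quadratic form when $\beta,\delta\geq 0$.

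The first step is to establish the discrete first-order optimality condition at $\brho_{g,h}^{\vep}$: for every $\brho_h\in W_h$,
\begin{equation}\label{planeq:opt}
\langle \mathbf{g}_h^{\vep}(\brho_{g,h}^{\vep}),\brho_h-\brho_{g,h}^{\vep}\rangle\geq 0,
\end{equation}
which is the discrete counterpart of Lemma \ref{lemma:residual_cts}. I would argue exactly as in that lemma: since $W_h$ is convex, the segment $\brho_{g,h}^{\vep}+t(\brho_h-\brho_{g,h}^{\vep})$, $t\in[0,1]$, lies in $W_h$; the scalar function $f(t):=E_h^{\vep}(\brho_{g,h}^{\vep}+t(\brho_h-\brho_{g,h}^{\vep}))$ is smooth on $[0,1]$ because $\rho_j+\rho_{j+1}+2\vep\geq 2\vep>0$ along the segment; $f$ attains its minimum on $[0,1]$ at $t=0$ by the definition of $\brho_{g,h}^{\vep}$, whence $f'(0)\geq 0$, which is precisely \eqref{planeq:opt}.

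With \eqref{planeq:opt} in place, the remainder is algebraic. Convexity of $G_h^{\vep}$ gives the subgradient inequality
\begin{equation}\label{planeq:sub}
G_h^{\vep}(\brho_h)-G_h^{\vep}(\brho_{g,h}^{\vep})\geq \langle\nabla G_h^{\vep}(\brho_{g,h}^{\vep}),\brho_h-\brho_{g,h}^{\vep}\rangle,
\end{equation}
while the exactness of Taylor's expansion for a quadratic form (using the summation-by-parts identity $h\,\brho_h^T A\brho_h=\|\delta_x^+\brho_h\|_{l^2}^2$, which is immediate from $\rho_0=\rho_N=0$) yields
\begin{equation}\label{planeq:quad}
Q_h(\brho_h)-Q_h(\brho_{g,h}^{\vep})=\langle\nabla Q_h(\brho_{g,h}^{\vep}),\brho_h-\brho_{g,h}^{\vep}\rangle+Q_h(\brho_h-\brho_{g,h}^{\vep}).
\end{equation}
Summing \eqref{planeq:sub} and \eqref{planeq:quad}, using $\nabla E_h^{\vep}=\nabla G_h^{\vep}+\nabla Q_h=\mathbf{g}_h^{\vep}$, and discarding the inner-product term via \eqref{planeq:opt} produces exactly \eqref{thm:conv_h_E1}.

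I do not anticipate a serious obstacle: the only step that uses anything beyond basic linear algebra is the first-order condition \eqref{planeq:opt}, and that reduces to differentiating $f(t)$ along a one-dimensional segment in $W_h$, which is immediate once one observes that $\rho_j+\rho_{j+1}+2\vep$ is bounded below by $2\vep>0$. Note that, in contrast to the continuous proof, no Cauchy--Schwarz or Young inequality on the kinetic term is needed here because the quadratic strong-convexity already appears explicitly in $Q_h$, making the discrete bound essentially tight rather than sub-optimal.
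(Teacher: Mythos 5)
Your proposal is correct and follows essentially the same route as the paper: a first-order optimality condition obtained by differentiating $f(t)=E_h^{\vep}(\brho_{g,h}^{\vep}+t(\brho_{h}-\brho_{g,h}^{\vep}))$ at $t=0$, the convexity of the discrete kinetic term from Lemma \ref{Eh_convex} used as a subgradient inequality, and the exact expansion of the quadratic $\beta$- and $\delta$-terms whose second-order remainder is precisely the left-hand side of \eqref{thm:conv_h_E1}. Your splitting $E_h^{\vep}=G_h^{\vep}+Q_h$ is just a tidier packaging of the paper's computation with $R$, $I_1$ and $D$, so there is no substantive difference.
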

\begin{proof}
We follow a similar procedure as in the proof of Theorem \ref{thm:conv_rate}.
Define
\be
f(t)=E_h^{\vep}(\brho_{g,h}^{\vep}+t(\brho_{h}-\brho_{g,h}^{\vep})),\quad \text{ where }t\in[0,1].
\ee
Noticing that $\brho_{g,h}^{\vep}$ minimizes $E_h^{\vep}(\cdot)$ in $W_h$ \eqref{def:feasible},  we have
\be\label{proof:conv_ineq1}
f'(0)=(\mathbf{g}_h^{\vep}(\brho_{g,h}^{\vep}))^T(\brho_{h}-\brho_{g,h}^{\vep})\ge0,
\ee
where $\mathbf{g}_h^{\vep}$ is defined in \eqref{Eh_rho_gradient_matrix}.
Substituting \eqref{Eh_rho_gradient_matrix} into \eqref{proof:conv_ineq1},  we  get
\begin{align}\label{proof:conv_ineq2}
\mathbf{v}^T(\brho_{h}-\brho_{g,h}^{\vep})+R(\brho_{h},\brho_{g,h}^{\vep})\ge-\left[\nabla E^{\rm{kin}}(\brho_{g,h}^{\vep})\right]^T(\brho_{h}-\brho_{g,h}^{\vep}),
\end{align}
where
\be
R(\brho_{h},\brho_{g,h}^{\vep})=\left[\beta \brho_{g,h}^{\vep}+2\delta A\brho_{g,h}^{\vep}\right]^T(\brho_{h}-\brho_{g,h}^{\vep}).
\ee

On the other hand, a direct computation shows that
\be\label{proof:conv_eq1}
E_h^{\vep}(\brho_{h})-E_h^{\vep}(\brho_{g,h}^{\vep})
=h \left[ E^{\rm{kin}}(\brho_{h})-E^{\rm{kin}}(\brho_{g,h}^{\vep})\right]+h\mathbf{v}^T(\brho_{h}-\brho_{g,h}^{\vep})+I_1,
\ee
where  the difference of the interaction energy is contained in the term $I_1$, which can be explicitly computed as
\be
I_1=\frac{h\beta}{2}\left[(\brho_{h})^T\brho_{h}-(\brho_{g,h}^{\vep})^T\brho_{g,h}^{\vep}\right]+h\delta\left[ (\brho_{h})^TA\brho_{h}- (\brho_{g,h}^{\vep})^TA\brho_{g,h}^{\vep}\right].
\ee
Noticing that $h\brho_{h}^TA\brho_{h}=\|\delta_x^+\brho_{h}\|_{l^2}^2$ holds true for any $\brho_{h}\in W_h$ and $A^T=A$, we have
\begin{align}
I_1&=h\left(\frac{\beta}{2}(\brho_{h}+\brho_{g,h}^{\vep})+\delta A(\brho_{h}+\brho_{g,h}^{\vep})\right)^T(\brho_{h}-\brho_{g,h}^{\vep})\nonumber\\
&=hR(\brho_{h},\brho_{g,h}^{\vep})+D(\brho_{h},\brho_{g,h}^{\vep}),
\end{align}
where $D(\brho_{h},\brho_{g,h}^{\vep})=\frac{\beta}{2}\|\brho_{h}-\brho_{g,h}^{\vep}\|_{l^2}^2+\frac{\delta}{2}\|\delta_x^+(\brho_{h}-\brho_{g,h}^{\vep})\|_{l^2}^2$ and, therefore,
\begin{align}
E_h^{\vep}(\brho_{h})-E_h^{\vep}(\brho_{g,h}^{\vep})
&=h \left[E^{\rm{kin}}(\brho_{h})-E^{\rm{kin}}(\brho_{g,h}^{\vep})\right]\nonumber\\
&\quad+h[\mathbf{v}^T(\brho_{h}-\brho_{g,h}^{\vep})+R(\brho_{h},\brho_{g,h}^{\vep})]+D(\brho_{h},\brho_{g,h}^{\vep})\nonumber\\
&\ge h \left[E^{\rm{kin}}(\brho_{h})-E^{\rm{kin}}(\brho_{g,h}^{\vep})\right]\nonumber\\
&\quad-h\left[\nabla E^{\rm{kin}}(\brho_{g,h}^{\vep})\right]^T(\brho_{h}-\brho_{g,h}^{\vep})+D(\brho_{h},\brho_{g,h}^{\vep})\nonumber\\
&\ge D(\brho_{h},\brho_{g,h}^{\vep}),\label{proof:conv_ineq4}
\end{align}
where the first inequality is due to 
\eqref{proof:conv_ineq2} and the second inequality is due to the convexity of $E^{\rm{kin}}(\cdot)$ in $W_h$ \eqref{def:feasible}, which is proved in Lemma \ref{Eh_convex}.
The above inequality is exactly what we aim to show in Theorem \ref{thm:conv_h_E} noticing the definition of $D(\brho_{h},\brho_{g,h}^{\vep})$.
\end{proof}

Theorem \ref{thm:conv_h_E} offers a way to  measure the difference between a general state and the ground state  by considering the difference of the corresponding energies when $\beta>0$ and $\delta>0$.
In particular, we can study the discrete  error estimates based on Theorem \ref{thm:conv_h_E}. For simplicity, we use the notation $\Pi_h\rho_{g}^{\vep}$ to be the interpolation of the ground state $\rho_g^\vep$ \eqref{ground:reg} on the grid points, i.e.
\be\label{notation:interpolation:rho_g}
\Pi_h\rho_{g}^{\vep}=(\rho_g^\vep(x_1),\rho_g^\vep(x_2),\dots,\rho_g^\vep(x_{N-1}))^T \in \mathbb{R}^{N-1},
\ee
then we have the following result measuring the difference $\Pi_h\rho_{g}^{\vep}-\brho_{g,h}^{\vep}$.
\begin{theorem}\label{thm:conv_h}
Fix $\vep>0$ and consider $\mathbf{e}_h^{\vep}=\Pi_h\rho_{g}^{\vep}-\brho_{g,h}^{\vep}$. If $\beta>0$ and $\delta>0$, then we have
\be
\|\delta_+\mathbf{e}_h^{\vep}\|_{l^2}\lesssim h, \quad \|\mathbf{e}_h^{\vep}\|_{l^2}\lesssim h.
\ee
\end{theorem}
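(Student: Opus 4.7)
The plan is to apply Theorem \ref{thm:conv_h_E} with the concrete competitor $\tilde\brho_h := \Pi_h\rho_{g}^{\vep} / \|\Pi_h\rho_{g}^{\vep}\|_{l^1}$, the $l^1$-renormalized nodal interpolant of $\rho_g^\vep$. Since $\rho_g^\vep \ge 0$ and, by the exponential decay noted in Section~\ref{sec:FD}, is essentially supported in $(a,b)$, $\tilde\brho_h$ lies in $W_h$, and a trapezoidal-rule estimate gives $\|\Pi_h\rho_{g}^{\vep}\|_{l^1} = 1 + O(h^2)$, so that $\|\Pi_h\rho_{g}^{\vep} - \tilde\brho_h\|_{l^2} + \|\delta_x^+(\Pi_h\rho_{g}^{\vep} - \tilde\brho_h)\|_{l^2} \lesssim h^2$. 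Theorem \ref{thm:conv_h_E} reduces the statement to proving
\begin{equation*}
E_h^{\vep}(\tilde\brho_h) - E_h^{\vep}(\brho_{g,h}^{\vep}) \lesssim h^2,
\end{equation*}
after which the triangle inequality delivers both $\|\mathbf{e}_h^{\vep}\|_{l^2} \lesssim h$ and $\|\delta_x^+\mathbf{e}_h^{\vep}\|_{l^2} \lesssim h$.

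I would split this gap as $[E_h^{\vep}(\tilde\brho_h) - E^{\vep}(\rho_g^\vep)] + [E^{\vep}(\rho_g^\vep) - E_h^{\vep}(\brho_{g,h}^{\vep})]$. For the first bracket, when $\vep>0$ the Euler--Lagrange equation for $\rho_g^\vep$ is uniformly elliptic, so $\rho_g^\vep$ belongs to $C^{2,\alpha}$ with $\vep$-dependent but $h$-independent bounds; the composite midpoint rule for the kinetic integrand, the trapezoidal rule for the potential and $\beta$-interaction integrands, and the second-order consistency of $\delta_x^+$ combine by Taylor expansion to give $E_h^{\vep}(\tilde\brho_h) = E^{\vep}(\rho_g^\vep) + O(h^2)$. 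For the second bracket I would construct the piecewise-linear lift $\hat\rho_{\rm pl}$ of $\brho_{g,h}^{\vep}$ (extended by zero outside $[a,b]$); then $\hat\rho_{\rm pl} \in W$, so $E^{\vep}(\rho_g^\vep) \le E^{\vep}(\hat\rho_{\rm pl})$ by minimality, and the task becomes to show $E^{\vep}(\hat\rho_{\rm pl}) \le E_h^{\vep}(\brho_{g,h}^{\vep}) + O(h^2)$ term by term. The HOI contributions match exactly; the $\beta\rho^2$ part is controlled by the elementary inequality $(\rho_j^2+\rho_j\rho_{j+1}+\rho_{j+1}^2)/3 \le (\rho_j^2+\rho_{j+1}^2)/2$; the $V\rho$ part differs by $O(h^2)$ from smoothness of $V$; and for the nonlinear kinetic term Jensen's inequality applied to $r\mapsto 1/r$ on each cell yields
\begin{equation*}
\int_{x_j}^{x_{j+1}}\frac{dx}{\hat\rho_{\rm pl}+\vep} \ge \frac{2h}{\rho_j+\rho_{j+1}+2\vep},
\end{equation*}
so the continuous kinetic energy of $\hat\rho_{\rm pl}$ dominates the discrete kinetic energy of $\brho_{g,h}^{\vep}$, while Taylor expanding the explicit antiderivative $\tfrac{1}{b-a}\log\tfrac{b+\vep}{a+\vep}$ about the cell midpoint bounds the excess by $O(h^2)$ times the discrete kinetic energy.

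The main obstacle is the uniform-in-$h$ $l^\infty$ bound on $\delta_x^+\brho_{g,h}^{\vep}$ needed to turn the kinetic Taylor remainder into a genuine $O(h^2)$ bound; the raw energy estimate only supplies the corresponding $l^2$ bound. I would close this by a discrete bootstrap mirroring the continuous elliptic regularity used for $\rho_g^\vep$: the $l^2$ bound together with a one-dimensional discrete Sobolev embedding controls the quantities $f_j$ from \eqref{def:s_f} in $l^\infty$, whereupon the discrete first-order optimality condition associated with \eqref{Eh_rho_gradient} controls $\delta_x^2\brho_{g,h}^{\vep}$ in $l^\infty$ by the remaining, uniformly bounded, lower-order terms, and a standard discrete summation-by-parts argument then upgrades this to the desired $l^\infty$ bound on $\delta_x^+\brho_{g,h}^{\vep}$. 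This discrete regularity is the only estimate genuinely outside the scope of what the paper has so far developed that the argument requires.
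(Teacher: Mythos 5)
Your strategy coincides with the paper's own proof: the same $l^1$-renormalized interpolant of $\rho_g^\vep$ as competitor in Theorem \ref{thm:conv_h_E}, the same two-sided comparison of the discrete energies with $E^\vep(\rho_g^\vep)$ (quadrature consistency on one side, the piecewise-linear lift of $\brho_{g,h}^{\vep}$ plus the minimality of $\rho_g^\vep$ on the other), and the same Jensen/logarithmic-mean treatment of the kinetic term, with the triangle inequality absorbing the $O(h^2)$ renormalization. Up to that point your argument is sound and in fact more explicit than the paper's.

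The genuine gap is in how you close your self-identified ``main obstacle.'' The cell-by-cell kinetic excess in your comparison is of size $h^{3}|\delta_x^+\rho_j|^4/\vep^3$, so summing requires a uniform-in-$h$ bound on $\|\delta_x^+\brho_{g,h}^{\vep}\|_{l^4}$ (a fortiori on the $l^\infty$ norm), whereas the energy bound supplies only $l^2$; with $l^2$ alone and the inverse estimate one only gets an $O(h)$ energy gap, hence $O(\sqrt{h})$ rates. Your proposed bootstrap does not deliver the missing bound: the one-dimensional discrete Sobolev embedding controls $\|\brho_{g,h}^{\vep}\|_{l^\infty}$ by $\|\delta_x^+\brho_{g,h}^{\vep}\|_{l^2}$, not $\|\delta_x^+\brho_{g,h}^{\vep}\|_{l^\infty}$, so it gives no $l^\infty$ control of the quotients $f_j$ in \eqref{def:s_f}; bounding $f_j$ in $l^\infty$ is essentially the gradient sup bound you are trying to establish, so the first step of the bootstrap is circular. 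In addition, the first-order optimality condition for the simplex-constrained problem is a variational inequality involving a Lagrange multiplier and a possibly nonempty active set $\{\rho_j=0\}$, so one cannot simply read off a pointwise bound on $\delta_x^2\rho_j$ from \eqref{Eh_rho_gradient}; an active-set or discrete maximum-principle argument (or some other source of discrete regularity of $\brho_{g,h}^{\vep}$) would be needed. For reference, the paper handles this step by asserting, with details omitted, that the constant in $E^{\vep}(\trho_{g}^{\vep})\le E_h^{\vep}(\brho_{g,h}^{\vep})+C_3h^2$ depends only on the uniform $l^2$ bound of $\delta_x^+\brho_{g,h}^{\vep}$ coming from the energy; your more careful bookkeeping shows that, as written, the step needs higher discrete integrability of the gradient, and that is precisely the piece your proposal leaves unproven.
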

\begin{proof}
Denote $C_h$ to be the scaling constant such that $C_h\|\Pi_h\rho_{g}^{\vep}\|_{l^1}=1$. Then $C_h\Pi_h\rho_{g}^{\vep}\in W_h$ \eqref{def:feasible}.
Noticing that $\|\Pi_h\rho_{g}^{\vep}\|_{l^1}$ approximates $\|\rho_{g}^{\vep}\|_1=1$  via the composite trapezoidal rule,
 we have that
\be\label{proof:C_h1}
\|\Pi_h\rho_{g}^{\vep}\|_{l^1}-1=\|\Pi_h\rho_{g}^{\vep}\|_{l^1}-\int_a^b\,\rho_g^\vep\,dx=\frac{b-a}{12}\p_{x}^2\rho_g^\vep(\xi)h^2,
\ee
for some $\xi\in(a,b)$.
As a result,
\be\label{proof:C_h}
\left|C_h-1\right|\le C_1h^2,
\ee
for some $C_1$ depending on $\|\p_{x}^2\rho_g^\vep\|_{\infty}$, when $h$ is sufficiently small.

Notice that $\frac{1}{4}\frac{|\delta_x^+\rho_j|^2}{\rho_j+\rho_{j+1}+2\vep}$ and $\delta_x^+\rho_{j}$ are the second order accurate approximations of $\frac12|\p_x\sqrt{\rho_g^\vep+\vep}|^2$ and $\p_x\rho_g^\vep$ at $x=x_{j+\frac12}$, respectively. 
It is then obvious that $E_h^{\vep}(\cdot)$ approximates $E^{\vep}(\cdot)$ by combing two second order accurate quadratures, i.e.
the composite midpoint rule and the composite trapezoidal rule.
Following a similar procedure, we get
 \be
 |E^{\vep}(\rho_g^{\vep})-E_h^{\vep}(C_h\Pi_h\rho_{g}^{\vep})|\le C_2h^2,
 \ee
 where the constant $C_2$ depends on $\|\p_x^l\rho_g^{\vep}\|_{\infty}$ for $l=0,1,2,3$.

Besides, if we consider the positive piecewise linear  function $\trho_{g}^{\vep}(x)$ such that
$\trho_{g}^{\vep}(x_j)=\rho_{g,j}^{\vep}$,
then $\trho_{g}^{\vep}\in W$ \eqref{def:feasible_rho} and $\frac{\delta}{2}|\p_x\trho_g^\vep|^2$ is piecewise constant.
A similar, but more careful analysis shows that
\be
E^{\vep}(\trho_{g}^{\vep})\le E_h^{\vep}(\brho_{g,h}^{\vep})+C_3h^2,
\ee
where the constant $C_3$ depends on the uniform bound of $\|\delta_x^+\brho_{g,h}^\vep\|_{l^2}$
with respect to $h$.
The bound can be expressed via $E^{\vep}(\rho_g^{\vep})$.
The details are omitted here for brevity.

Recalling Theorem \ref{thm:conv_h_E}, we have
\begin{align}
&\frac{\beta}{2}\|C_h\Pi_h\rho_{g}^{\vep}-\brho_{g,h}^{\vep}\|_{l^2}^2+\frac{\delta}{2}\|\delta_+(C_h\Pi_h\rho_{g}^{\vep}-\brho_{g,h}^{\vep})\|_{l^2}^2
\le E_h^{\vep}(C_h\Pi_h\rho_{g}^{\vep})-E_h^{\vep}(\brho_{g,h}^{\vep})\nonumber\\
&\le E^{\vep}(\rho_g^{\vep})-E^{\vep}(\trho_g^{\vep})+(C_2+C_3)h^2
\le (C_2+C_3)h^2.
\end{align}
The fact that $E^{\vep}(\rho_g^{\vep})\le E^{\vep}(\trho_g^{\vep})$ is applied in the last step.
For $\beta>0$ and $\delta>0$, it is then obvious that
\be
\|C_h\Pi_h\rho_{g}^{\vep}-\brho_{g,h}^{\vep}\|_{l^2}\lesssim h \text{ and } \|\delta_+(C_h\Pi_h\rho_{g}^{\vep}-\brho_{g,h}^{\vep})\|_{l^2}\lesssim h,
\ee
which further implies that
\be
\|\mathbf{e}_h^{\vep}\|_{l^2}\le\|C_h\Pi_h\rho_{g}^{\vep}-\brho_{g,h}^{\vep}\|_{l^2}+|C_h-1| \|\Pi_h\rho_{g}^{\vep}\|_{l^2} \lesssim h,
\ee
and
\be
\|\delta_+\mathbf{e}_h^{\vep}\|_{l^2}\le\|\delta_+(C_h\Pi_h\rho_{g}^{\vep}-\brho_{g,h}^{\vep})\|_{l^2}+|C_h-1| \|\delta_+(\Pi_h\rho_{g}^{\vep})\|_{l^2} \lesssim h.
\ee
\end{proof}
\begin{remark}\label{rem:L2-order}
If we further assume that
$|\rho_g^\vep|_{H^2}$ is bounded, we can expect $\|\mathbf{e}_h^{\vep}\|_{l^2}\lesssim h^2$ by the Bramble-Hilbert lemma and a standard scaling argument.
\end{remark}

Finally, we study the convergence of $\brho_{g,h}^{\vep}$ as $\vep\to0^+$ with a fixed $h$ and the result is summarized as follows.
\begin{theorem}\label{thm:conv_h_E_discrete}
When $\beta\ge0$ and $\delta\ge0$, we have that
\be\label{result1}
\lim_{\vep\to0^+}\brho_{g,h}^{\vep}=\brho_{g,h}^{0},
\ee
where $\brho_{g,h}^{0}$ is the ground state of $E_h^{\vep}(\cdot)$ \eqref{Eh_rho} with $\vep=0$.
\end{theorem}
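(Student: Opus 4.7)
The plan is to reproduce the argument of Theorem \ref{thm:conv_rate} at the discrete level, with one additional ingredient: a uniform bound on the pointwise deviation between $E_h^{\vep}$ and $E_h^{0}$ over $W_h$. First I observe that the proof of Lemma \ref{lemma:E_vep} carries over verbatim to the discrete functional, since only the kinetic summand $\frac{(\rho_{j+1}-\rho_j)^2}{4(\rho_j+\rho_{j+1}+2\vep)}$ depends on $\vep$ and is pointwise non-increasing in $\vep\ge 0$. Using the elementary inequality $(\rho_{j+1}-\rho_j)^2\le(\rho_j+\rho_{j+1})^2$ (valid because $\rho_j,\rho_{j+1}\ge 0$), I get
\be
0\le E_h^{0}(\brho_{h})-E_h^{\vep}(\brho_{h})=\frac{h\vep}{2}\sum_{j=0}^{N-1}\frac{(\rho_{j+1}-\rho_j)^2}{(\rho_j+\rho_{j+1})(\rho_j+\rho_{j+1}+2\vep)}\le\frac{Nh\vep}{2},
\ee
with the natural convention $0/0=0$ whenever $\rho_j+\rho_{j+1}=0$, a case in which the numerator also vanishes. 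Thus $E_h^{\vep}\to E_h^{0}$ uniformly on $W_h$ as $\vep\to 0^+$, and $E_h^{0}$ is continuous on the compact simplex $W_h$, which in particular yields the existence of $\brho_{g,h}^{0}$.

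Next, the conclusion of Theorem \ref{thm:conv_h_E} remains valid at $\vep=0$, since Lemma \ref{Eh_convex} covers $\vep\ge 0$ and the first-order optimality inequality $(\mathbf{g}_h^{0}(\brho_{g,h}^{0}))^T(\brho_{h}-\brho_{g,h}^{0})\ge 0$ continues to hold. Applying it with the test vector $\brho_{h}=\brho_{g,h}^{\vep}\in W_h$ yields
\be
\frac{\beta}{2}\|\brho_{g,h}^{\vep}-\brho_{g,h}^{0}\|_{l^2}^2+\frac{\delta}{2}\|\delta_x^+(\brho_{g,h}^{\vep}-\brho_{g,h}^{0})\|_{l^2}^2\le E_h^{0}(\brho_{g,h}^{\vep})-E_h^{0}(\brho_{g,h}^{0}).
\ee
I then control the right-hand side by the telescopic decomposition
\be
E_h^{0}(\brho_{g,h}^{\vep})-E_h^{0}(\brho_{g,h}^{0})=[E_h^{0}(\brho_{g,h}^{\vep})-E_h^{\vep}(\brho_{g,h}^{\vep})]+[E_h^{\vep}(\brho_{g,h}^{\vep})-E_h^{\vep}(\brho_{g,h}^{0})]+[E_h^{\vep}(\brho_{g,h}^{0})-E_h^{0}(\brho_{g,h}^{0})],
\ee
where the first bracket is bounded by $Nh\vep/2$ via the uniform estimate above, the second is $\le 0$ by the minimizing property of $\brho_{g,h}^{\vep}$, and the third is $\le 0$ by the discrete analog of Lemma \ref{lemma:E_vep}. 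This already delivers $O(\sqrt{\vep})$ convergence in $\|\cdot\|_{l^2}$ whenever $\beta>0$, and of the discrete $H^1$-seminorm whenever $\delta>0$; the latter upgrades to $l^2$ convergence via the discrete Poincar\'e inequality for vectors vanishing at the Dirichlet endpoints $j=0,N$.

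The main obstacle is the degenerate sub-case $\beta=\delta=0$, in which the left-hand side of the Theorem \ref{thm:conv_h_E} inequality is vacuous and uniqueness of $\brho_{g,h}^{0}$ is no longer automatic. In that regime I would fall back on a compactness argument: extract from compactness of $W_h$ a convergent subsequence $\brho_{g,h}^{\vep_n}\to\tilde{\brho}_{h}\in W_h$, combine the uniform estimate with the continuity of $E_h^{0}$ to conclude $E_h^{\vep_n}(\brho_{g,h}^{\vep_n})\to E_h^{0}(\tilde{\brho}_{h})$, and pass to the limit in $E_h^{\vep_n}(\brho_{g,h}^{\vep_n})\le E_h^{\vep_n}(\brho_{h})\le E_h^{0}(\brho_{h})$ for each $\brho_{h}\in W_h$ to see that $\tilde{\brho}_{h}$ minimizes $E_h^{0}$. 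Whenever $\beta>0$ or $\delta>0$, strict convexity of $E_h^{0}$ on $W_h$ forces this minimizer to equal $\brho_{g,h}^{0}$, and the usual subsequence argument upgrades subsequential to full convergence of $\{\brho_{g,h}^{\vep}\}_{\vep>0}$.
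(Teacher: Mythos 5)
Your argument is correct in substance, but it takes a genuinely different, quantitative route from the paper's. The paper's proof is exactly your fallback: by compactness of $W_h$ extract a convergent subsequence $\brho_{g,h}^{\vep_n}\to\tbrho_h$, pass to the limit in the comparison $E_h^{\vep}(\brho_{g,h}^{\vep})\le E_h^{\vep}(\brho_{g,h}^{0})\le E_h^{0}(\brho_{g,h}^{0})$ to see that $\tbrho_h$ minimizes $E_h^{0}$, and identify $\tbrho_h=\brho_{g,h}^{0}$ by uniqueness; it is purely qualitative and yields no rate. Your main line instead combines a uniform bound on $E_h^{0}-E_h^{\vep}$ over $W_h$ with the $\vep=0$ version of Theorem~\ref{thm:conv_h_E} and a telescoping of the energy difference, which buys an explicit $O(\sqrt{\vep})$ rate in $\|\cdot\|_{l^2}$ and in the discrete $H^1$ seminorm whenever $\beta>0$ or $\delta>0$ --- something the paper's argument does not give. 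Your uniform estimate also makes explicit a step the paper passes over silently: the claim $E_h^{\vep_n}(\brho_{g,h}^{\vep_n})\to E_h^{0}(\tbrho_h)$ needs joint continuity in $(\vep,\brho_h)$ at $\vep=0$, which your bound together with continuity of $E_h^{0}$ on $W_h$ supplies cleanly.

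Two small repairs. First, since $\delta_x^+\rho_j=(\rho_{j+1}-\rho_j)/h$, the correct identity is $E_h^{0}(\brho_h)-E_h^{\vep}(\brho_h)=\frac{\vep}{2h}\sum_{j=0}^{N-1}\frac{(\rho_{j+1}-\rho_j)^2}{(\rho_j+\rho_{j+1})(\rho_j+\rho_{j+1}+2\vep)}\le\frac{N\vep}{2h}$, so your prefactor $h\vep/2$ should be $\vep/(2h)$; this is harmless for the theorem, where $h$ is fixed, but the constant matters if one wants any uniformity in $h$. Second, asserting that Theorem~\ref{thm:conv_h_E} ``remains valid at $\vep=0$'' via the inequality $(\mathbf{g}_h^{0}(\brho_{g,h}^{0}))^T(\brho_h-\brho_{g,h}^{0})\ge0$ implicitly assumes $E_h^{0}$ is differentiable at $\brho_{g,h}^{0}$, which fails if two consecutive components of the minimizer vanish (the formula for $f_j$ then has a vanishing denominator). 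This is easily patched: each kinetic summand $(\rho_{j+1}-\rho_j)^2/(\rho_j+\rho_{j+1}+2\vep)$ is Lipschitz on the closed nonnegative quadrant uniformly in $\vep\ge0$, so one-sided directional derivatives are finite, and the proof of Theorem~\ref{thm:conv_h_E} goes through with $f'(0)\ge0$ replaced by $f'(0^+)\ge0$ and the gradient inequality for $E^{\rm{kin}}$ replaced by convexity along the segment. Your caveat about $\beta=\delta=0$ is fair but is shared by the paper, whose proof likewise invokes uniqueness of $\brho_{g,h}^{0}$ (claimed earlier in Section~\ref{sec:FD}); the statement of the theorem presupposes it.
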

\begin{proof}

The compactness of the finite dimensional feasible set $W_h$ \eqref{def:feasible} implies that, for any sequence $\vep_n\to0$,  there exists a subsequence, denoted as the sequence itself for simplicity, such that $\lim_{n\to\infty}\brho_{g,h}^{\vep_{n}}=\tbrho_{h}$ for some $\tbrho_{h}\in W_h$. Consequently, $\lim_{n\to\infty}E^{\vep_{n}}_{h}(\brho_{g,h}^{\vep_{n}})= E^{0}_{h}(\tbrho_{h})$.

On the other hand, for any $\vep$, we have $E_h^{\vep}(\brho_{g,h}^{\vep})\le E_h^{\vep}(\brho_{g,h}^{0})\le E_{h}^{0}(\brho_{g,h}^{0})$, which indicates that $E^{0}_{h}(\tbrho_{h})\le E^{0}_{h}(\brho_{g,h}^{0})$. Therefore, we must have $\tbrho_{h}=\brho_{g,h}^{0}$ due to the uniqueness of the ground state. Noticing that the above argument holds true for any sequence $\vep_n\to0$, we get the conclusion \eqref{result1}.
\end{proof}

The main works in Section \ref{sec:formulation} and Section \ref{sec:scheme} can be summarized in the following diagram.
Since  it is difficult to study the convergence of $\brho_{g,h}^{0}$ to $\rho_g$ directly, as indicated by the dash arrow in the diagram,
we study an approximating problem, i.e. the convergence of $\brho_{g,h}^{\vep}$ to $\rho_g$, which is indicated by the double arrow in the diagram.
As shown in the diagram, we have studied it in two steps,
namely the convergence of $\brho_{g,h}^{\vep}$ to $\rho_g^\vep$ as $h\to0^+$ (or $N\to\infty$), which has been proved in Theorem \ref{thm:conv_h},
and the convergence of $\rho_g^{\vep}$ to $\rho_g$ as $\vep\to0^+$, which has been studied in Theorem \ref{thm:gm_conv} and Theorem \ref{thm:conv_rate}.
Though not related to our main problem, the convergence of $\brho_{g,h}^{\vep}$ to $\brho_{g,h}^{0}$ has been studied as well in Theorem \ref{thm:conv_h_E_discrete}.

\begin{center}
\begin{tikzpicture}[font=\small\sffamily\bfseries,very thick]
  \matrix (m) [matrix of math nodes,row sep=6em,column sep=15em,minimum width=2em]
  {
     \brho_{g,h}^{\vep} & \brho_{g,h}^{0} \\
     \rho_g^{\vep} & \rho_g \\};
  \path[-stealth]
    (m-1-1)
    	    edge node [left,align=right,pos=.5] {$h\to0^+$} (m-2-1)
            edge node [above] {$\vep\to0^+$} (m-1-2)
            edge [double]  (m-2-2)
    (m-2-1) edge node [above] {$\vep\to0^+$} (m-2-2)
    (m-1-2) edge [dashed] (m-2-2);
\end{tikzpicture}
\end{center}

\section{An accelerated projected gradient method}\label{sec:APG}
Numerous optimization techniques have been developed for solving a constrained convex  optimization problem like \eqref{problem:E_rho_FD}.
Among them, the accelerated projected gradient method (APG) has been widely used due to its efficiency and easy implementation.
For completeness, a brief introduction of the method is reviewed in this section.

The APG method is a special case of the accelerated proximal method,
which is a generic method proposed in \cite{Beck,Nesterov} to solve problems of the type
\be\label{problem:apg}
\min_{\mathbf{u}\in\mathbb{R}^n} F(\mathbf{u}):=f(\mathbf{u})+g(\mathbf{u}),
\ee
with $f(\mathbf{u})$ being convex and of type $C^{1,1}$,  i.e. continuously differentiable with Lipschitz continuous gradient,
and $g(\mathbf{u})$ being convex, `simple', but possibly non-smooth.
`Simple' means easy computation of the proximal operation of $g$, i.e.
\be
\textrm{prox}_g(\mathbf{w}):=
\argmin_{\mathbf{u}\in\mathbb{R}^n}\left(g(\mathbf{u})+
\frac{1}{2}\|\mathbf{u}-\mathbf{w}\|_{l^2}^2\right),
\ee
for any given $\mathbf{w}\in\mathbb{R}^n$.
When $g(\mathbf{u})=\mathbb{I}_{Q}(\mathbf{u})$, where $\mathbb{I}_{Q}(\cdot)$ is the indicator function defined as
\begin{align}
\mathbb{I}_{Q}(\mathbf{u})=\left\{\begin{array}{lll}
0, &\text{ if }\mathbf{u}\in Q,\\
\infty, &\text{ otherwise,}\\
\end{array}\right.
\end{align}
for some  convex set $Q$,
the accelerated proximal method becomes the APG method  since now the operator $\textrm{prox}_g(\mathbf{w})$ becomes  the $l^2$-projection of $\mathbf{w}$ onto $Q$, i.e.
\be
\textrm{prox}_g(\mathbf{w})=\textrm{proj}_{Q}(\mathbf{w}):=\argmin_{\mathbf{u}\in Q}\|\mathbf{u}-\mathbf{w}\|_{l^2}^2.
\ee

The problem \eqref{problem:E_rho_FD} can be reformulated easily into  the form \eqref{problem:apg} by taking
\be
f(\brho_h)=E_h^{\vep}(\brho_h), \quad g(\brho_h)=\mathbb{I}_{W_h}(\brho_h),
\ee
where $\brho_h\in \mathbb{R}^{N-1}$ and $W_h\subset\mathbb{R}^{N-1}$ is the feasible set defined as \eqref{def:feasible}.
However, a further modification of  $E_h^{\vep}(\cdot)$ is necessary as follows,
\be\label{Eh_rho_APG}
\tE_h^{\vep}(\brho_{h})=h\sum_{j=0}^{N-1}\left[\frac{1}{4}
\frac{|\delta_x^+\rho_j|^2}{|\rho_j|+|\rho_{j+1}|+2\vep}+V(x_j)
|\rho_j|+\frac{\beta}{2}\rho_j^2+\frac{\delta}{2}\left|\delta_x^+
\rho_{j}\right|^2\right].
\ee
 Here we impose the absolute sign on $\rho_j$ to extend the domain of $E_h^{\vep}(\cdot)$ from the feasible set $W_h$ to the whole space $\mathbb{R}^{N-1}$ in order to apply the APG method.
This modification is necessary to avoid possible breakdown of the optimization process.
 Denote
\be
\tilde{\mathbf{g}}_h^{\vep}(\brho_{h})=\left( \frac{\p \tE_h^{\vep}}{\p\rho_1},\frac{\p \tE_h^{\vep}}{\p\rho_2},\dots,\frac{\p \tE_h^{\vep}}{\p\rho_{N-1}}\right)^T.
\ee
Now its $j$-th component  becomes
\be\label{Eh_rho_gradient_APG}
\tilde{\mathbf{g}}_h^{\vep}[j]:=\frac{\partial \tE_h^{\vep}}{\partial \rho_j}=h\left[-\frac{\delta_x^+\tf_{j-1}}{2}-\frac{\tf_{j-1}^2+\tf_j^2}{4}s_j+V(x_j)s_j+\beta \rho_j-\delta(\delta_x^2\rho_j)\right],
\ee
where
\be\label{def:s_f_APG}
s_j=\textrm{sign}(\rho_j),\quad \tf_j=\tf_j(\brho_{h}):=\frac{\delta_x^+\rho_j}{|\rho_j|+|\rho_{j+1}|+2\vep},\quad \text{ for }  j=0,1,\dots, N-1.
\ee
Obviously,  for any $\brho_h\in W_h$ \eqref{def:feasible},  we have
\be
\tE_h^{\vep}(\brho_h) = E_h^{\vep}(\brho_h),\quad \tilde{\mathbf{g}}_h^{\vep}(\brho_h)=\mathbf{g}_h^{\vep}(\brho_h),
\ee
and the minimizers of $\tE_h^{\vep}(\cdot)$ in $\mathbb{R}^{N-1}$ lie in the feasible set $W_h$ \eqref{def:feasible} or $-W_h$.


In practice, the fact that $g(\mathbf{u})$ is `simple' is crucial for the efficiency of the APG method since the evaluation of the proximal operator of $g(\cdot)$ is done at each iteration and makes up the most time-consuming part in optimization process.
Therefore, it is necessary to check whether the function $\mathbb{I}_{W_h}(\brho_h)$ is `simple' or not.
Luckily, the feasible set $W_h$ is a simplex and the projection onto $W_h$ can be efficiently computed with an average computational cost $O(N\log N)$ or even $O(N)$ \cite{Brucker,Duchi,Pardalos}.

Now we are ready to show the detailed scheme for  optimizing  \eqref{Eh_rho_APG}.
The general framework comes from FISTA, a simple and popular APG method proposed in \cite{Beck}.
For simplicity, we introduce the following two notations, $Q_L(\mathbf{u},\brho_h)$ and $p_L(\brho_h)$.
For given $L > 0$, $\mathbf{u}\in\mathbb{R}^{N-1}$ and $\brho_h\in\mathbb{R}^{N-1}$, we denote
\be
Q_L(\mathbf{u},\brho_h)=\tE_h^{\vep}(\brho_h)+(\mathbf{u}-\brho_h)^T
\tilde{\mathbf{g}}_h^{\vep}(\brho_h)+\frac{L}{2}\|\mathbf{u}-\brho_h\|_2^2 +\mathbb{I}_{W_h}(\mathbf{u}),
\ee
and
\be
p_L(\brho_h) := \argmin_{\mathbf{u}\in\mathbb{R}^{N-1}}Q_L(\mathbf{u}, \brho_h).
\ee
Obviously, $Q_L(\cdot,\brho_h)$ is a quadratic approximation of $\tE_h^{\vep}(\cdot)+\mathbb{I}_{W_h}(\cdot)$ near  $\brho_h$, and $p_L(\brho_h)$, the minimizer of $Q_L(\cdot,\brho_h)$, exists and is unique for any given $\brho_h$.
Furthermore, a simple algebra shows that
\begin{align}
p_L(\brho_h)&=\argmin_{\mathbf{u}\in W_h} \left\|\mathbf{u}- \left(\brho_h-\frac{1}{L}\tilde{\mathbf{g}}_h^{\vep}(\brho_h)\right)
\right\|_{l^2}^2\nonumber\\
&=\textrm{proj}_{W_h}\left(\brho_h-\frac{1}{L}\tilde{\mathbf{g}}_h^{\vep}(\brho_h)\right),
\end{align}
i.e. $p_L(\brho_h)$ is the projection of $\brho_h-\frac{1}{L}\tilde{\mathbf{g}}_h^{\vep}(\brho_h)$ onto the feasible set $W_h$ \eqref{def:feasible}.
With all these preparations, we are  ready to show the detailed algorithm  in  Algorithm \ref{alg:update_APG}.
As shown in \cite{Beck}, we have $E_h^{\vep}(\brho_h^{(k)})-E_h^{\vep}(\brho_h^{(k+1)})\lesssim\mathcal{O}(1/k^2)$, where $\brho_h^{(k)}$ is the vector computed after $k$ iterations.
In other words, a quadratic convergence rate in energy  could be expected.

We remark here  that there have been many efficient Matlab packages developed for the problem \eqref{problem:apg} based on APG, such as the TFOCS \cite{TFOCS}.
These packages, which are based on similar ideas as FISTA,  are  usually more efficient and robust, though more complicated, than FISTA.
Therefore, these packages could be used to replace the FISTA in the  Algorithm \ref{alg:update_APG} for a better numerical performance.

\begin{algorithm}
\caption{rDF-APG: Optimize \eqref{Eh_rho_APG} via FISTA}\label{alg:update_APG}
\begin{algorithmic}[1]
\STATE Choose a sufficiently small $\vep>0$ and a proper initial guess $\brho_h^{(0)}$.
\STATE Choose $L_0>0$, $\eta>1$.
\STATE $k\gets1$, $t_1\gets1$,  $\mathbf{y}_1\gets \brho_h^{(0)}$, $\bar{L}\gets L_0$
\STATE $\tilde{\brho_h}\gets p_{\bar{L}}(\mathbf{y}_1)$
\WHILE{$k=1$ or $\|\brho_h^{(k-1)}-\brho_h^{(k-2)}\|_{l^2}>\rm{tolerence}$ }
\WHILE{$E_h^{\vep}(\tilde{\brho_h})>Q_{\bar{L}}(\tilde{\brho_h},\mathbf{y}_k)$}
\STATE $\bar{L}\gets \eta\bar{L}$, 
 $\tbrho_h\gets p_{\bar{L}}(\mathbf{y}_k)$
\ENDWHILE
\STATE $\brho_h^{(k)}\gets\tbrho_h$, $t_{k+1}\gets\frac{1+\sqrt{1+4t_k^2}}{2}$, $L_{k}\gets\bar{L}$
\STATE $\mathbf{y}_{k+1}\gets \brho_h^{(k)}+\left(\frac{t_k-1}{t_{k+1}}\right)(\brho_h^{(k)}-\brho_h^{(k-1)})$
\STATE $k\gets k+1$
\ENDWHILE
\STATE $\brho_{g,h}^{\vep}\gets \brho_h^{(k-1)}$
\end{algorithmic}
\end{algorithm}

\begin{remark}
The approximate ground states obtained in
\cite{Ruan} for different parameter regimes can be
taken as good initial guess in the rDF-APG method.
\end{remark}
\begin{remark}
The two-grid technique, where we start from a coarse mesh and make the refined solution on the coarse mesh to be the initial guess for the fine mesh, will be useful to speed up the algorithm.
Similarly, when $\vep$ is small, it is useful to apply the continuation technique, which means we start with a relatively large $\vep$ and use the result as the initial guess for a smaller $\vep$.
\end{remark}
\begin{remark}
Based on our numerical experiments not reported here for brevity, we observed that  when $\vep$ is extremely small,
it is better to choose $\eta$ in Algorithm \ref{alg:update_APG} close to 1. 
\end{remark}
\begin{remark}
Numerical experiments suggest that a further regularization  could lead to a much better numerical performance.
With the regularization, we get the following new regularized energy functional
\be\label{energy_regularizeV}
\hat{E}^{\vep}(\rho)=\int_{\mathbb{R}^d}\left[\frac{|\nabla\rho|^2}
{8(|\rho|+\vep)}+V(\bx)\left(\sqrt{\rho^2+\vep^2}-\vep\right)+\frac{\beta}{2}
\rho^2+\frac{\delta}{2}|\nabla\rho|^2\right]d\bx,
\ee
where $\rho\in W$ \eqref{def:feasible_rho}.
Obviously, $\hat{E}^{\vep}(\cdot)$ is convex in $W$ \eqref{def:feasible_rho}.
Denote $\hat{\rho}_g^{\vep}$ to be the ground state of $\hat{E}^{\vep}(\cdot)$, which is defined as
\be\label{ground:regV}
\hat{\rho}_g^{\vep}=\argmin_{\rho\in W} \hat{E}^{\vep}(\rho).
\ee
It can be shown that Theorem \ref{thm:mres}, Theorem \ref{thm:gm_conv} and Theorem \ref{thm:conv_rate} still hold true if we replace $\rho_{g}^{\vep}$ by $\hat{\rho}_g^{\vep}$ in the theorems.
A detailed description of its finite difference discretization can be found in Appendix \ref{appendix:reg_V_FD}.
\end{remark}

\section{Numerical Results}\label{sec:numeric}
In this section, we illustrate the performance of the rDF-APG method proposed in this paper.
In particular, we will verify the spatial accuracy result in Theorem \ref{thm:conv_h} and test the convergence of $\rho_g^\vep$ as $\vep\to0^+$.
Besides,  we compare the rDF-APG method with the regularized Newton method \cite{BaoWuWen} to show the great advantage of our method when the HOI effect is dominant.
Applications of our method to multi-dimensional problems are also reported.
As a remark, all the problems in this section are formulated via $\hat{E}^{\vep}(\cdot)$ \eqref{energy_regularizeV} and all numerical ground state densities are  denoted as $\brho_{g,h}^{\vep}$ for simplicity.

\subsection{Spatial accuracy}
To check the spatial accuracy of the ground state $\brho_{g,h}^{\vep}$ defined in \eqref{problem:E_rho_FD}, we fix $\vep>0$.
For simplicity, only the following 1D case  in Example \ref{example:accuracy_test} is considered.
For multi-dimensional problems, the results are similar.
\begin{myexample}\label{example:accuracy_test}
Consider the external potential to be $V(x)=x^2/2$ and the computational domain $\Omega=(-M,M)$ with $M$ sufficiently large. 
The two nonnegative parameters $\beta$ and $\delta$ can be chosen arbitrarily.
In particular, we  consider the following two cases with $M=16$.
\begin{itemize}
\item Case I: $\,\,\beta=10$ and $\delta=0$.
\item Case II: $\beta=10$ and $\delta=10$.
\end{itemize}
Obviously, there is no HOI effect in Case I.
\end{myexample}
For the fixed $\vep=10^{-3}$, the `exact' ground state density $\rho_{g}^\vep$ for Case I and Case II in Example \ref{example:accuracy_test} can be accurately approximated via the rDF-APG method with an extremely small mesh size, say $h=\frac{1}{4096}$, and an extremely small tolerance, say $10^{-12}$.
If we denote the corresponding energy as $E_{\rm{ex}}^\vep:=E^{\vep}(\rho_{g}^\vep)$,
it can be computed that $E_{\rm{ex}}^\vep\approx1.927$ in Case I and  $E_{\rm{ex}}^\vep\approx2.207$ in  Case II.

Table \ref{tab:error_dh_test1} and Table \ref{tab:error_dh_test2} show the spatial  accuracy of $\brho_{g,h}^{\vep}$ for Case I and Case II in Example \ref{example:accuracy_test}, respectively.
From the tables, we can observe the second order accuracy  of energy and first order accuracy of $\delta_+\brho_{g,h}^{\vep}$ in $l^2$-norm, which are consistent with Theorem \ref{thm:conv_h}.
In fact, Table \ref{tab:error_dh_test1} and  \ref{tab:error_dh_test2}  show something more than expected.
Firstly, we observe a roughly  second order accuracy  in $l^2$-norm of $\brho_{g,h}^{\vep}$ while only the first order accuracy can be proved in Theorem \ref{thm:conv_h}.
 One possible explanation is that the ground state is regular and thus we can get a higher order accuracy noticing Remark \ref{rem:L2-order}.
Besides, it seems that  we still have the first order accuracy of $\delta_+\brho_{g,h}^{\vep}$ in $l^2$-norm  when $\delta=0$,
although we can no longer prove such a result following a similar procedure as in Theorem \ref{thm:conv_h}.
\begin{table}
  \centering
  \begin{tabular}{c c c c c c c }
    \hline
    Error & $h=1/2$ & $h/2$ & $h/2^2$ & $h/2^3$ & $h/2^4$ & $h/2^5$ \\
    \hline
    $|E_{g,h}^{\vep}-E_{\rm{ex}}^\vep|$  & 7.74E-3 & 1.97E-3 & 4.93E-4 & 1.23E-4 & 3.06E-5 & 7.59E-6 \\
    rate & -& \textbf{1.98} & \textbf{2.00} & \textbf{2.00} & \textbf{2.01} & \textbf{2.01} \\
    \hline
    $\|\mathbf{e}_h^{\vep}\|_{l^2}$  & 1.17E-1 & 2.97E-2 & 7.45E-3 & 1.84E-3 & 4.48E-4 & 8.74E-5 \\
    rate & - & 1.98 & 2.00 & 2.01 & 2.04 & 2.36 \\
    \hline
    $\|\delta_+\mathbf{e}_h^{\vep}\|_{l^2}$  & 2.18E-3 & 1.07E-3 & 5.33E-4  & 2.66E-4 & 1.33E-4 & 6.65E-5 \\
    rate & -  & \textbf{1.02} & \textbf{1.01}  & \textbf{1.00} & \textbf{1.00} & \textbf{1.00} \\
    \hline
     $\|\mathbf{e}_h^{\vep}\|_{l_{\infty}}$  & 5.12E-3 & 1.29E-3 & 3.27E-4 & 8.10E-5 & 1.98E-5 & 3.73E-6 \\
    rate & -  & 1.99 & 1.98 & 2.01 & 2.03 & 2.41 \\
    \hline
  \end{tabular}
  \caption{Spatial accuracy of rDF-APG for Case I in Example \ref{example:accuracy_test}. In the table, we denote $\mathbf{e}_h^{\vep}:=\Pi_h\rho_{g}^{\vep}-\brho_{g,h}^{\vep}$,  where $\Pi_h\rho_{g}^{\vep}$ is defined in \eqref{notation:interpolation:rho_g}.}
  \label{tab:error_dh_test1}
\end{table}

\begin{table}
  \centering
  \begin{tabular}{c c c c c c c }
    \hline
    Error & $h=1/2$ & $h/2$ & $h/2^2$ & $h/2^3$ & $h/2^4$ & $h/2^5$ \\
    \hline
    $|E_{g,h}^{\vep}-E_{\rm{ex}}^\vep|$   & 9.07E-3 & 2.28E-3  & 5.73E-4 & 1.43E-4 & 3.56E-5 & 8.81E-6 \\
    rate & - & \textbf{1.99} & \textbf{1.99} & \textbf{2.00} & \textbf{2.01} & \textbf{2.01} \\
    \hline
    $\|\mathbf{e}_h^{\vep}\|_{l^2}$  & 5.98E-2 & 1.44E-2 & 3.59E-3 & 8.88E-4 & 2.19E-4 & 4.91E-5 \\
    rate & - & 2.06 & 2.00 & 2.02 & 2.02 & 2.16 \\
    \hline
    $\|\delta_+\mathbf{e}_h^{\vep}\|_{l^2}$   & 1.43E-3 & 7.10E-4 & 3.54E-4 & 1.77E-4 & 8.85E-5 & 4.42E-5 \\
    rate & -  & \textbf{1.01} & \textbf{1.00}  & \textbf{1.00} & \textbf{1.00} & \textbf{1.00} \\
    \hline
     $\|\mathbf{e}_h^{\vep}\|_{l_{\infty}}$  & 2.33E-3 & 6.01E-4 & 1.51E-4 & 3.77E-5 & 9.35E-6 & 2.14E-6 \\
    rate & -  & 1.95 & 2.00  & 2.00 & 2.01 & 2.13 \\
    \hline
  \end{tabular}
  \caption{Spatial accuracy of rDF-APG for Case II in Example \ref{example:accuracy_test}. In the table, we denote $\mathbf{e}_h^{\vep}:=\Pi_h\rho_{g}^{\vep}-\brho_{g,h}^{\vep}$,  where $\Pi_h\rho_{g}^{\vep}$ is defined in \eqref{notation:interpolation:rho_g}.}
  \label{tab:error_dh_test2}
\end{table}


\subsection{Convergence of ground states}
To verify numerically the convergence of the ground state densities of the regularized energy functionals,
we choose a sequence $\{\vep_n\}$ such that $\lim_{n\to\infty}\vep_n=0$.
For each $\vep_n$, we choose a sufficiently small mesh size  to make the spatial error  negligible.
In general, the smaller $\vep$ is, the finer mesh is needed.

Here we consider the 1D example as shown in Case II of Example \ref{example:accuracy_test}.
Denote $\rho_{g}^\vep$ to be the `exact' ground state density for the fixed $\vep>0$, which is numerically computed via the rDF-APG method,
and $\rho_{g}$ to be the `exact' numerical  ground state density of the original energy functional \eqref{energy_rho}, i.e. when $\vep=0$, which is approximated by $\rho_{g}^\vep$ with a sufficiently small $\vep$, say $\vep=10^{-8}$.
Figure \ref{fig:sol_vep} shows the ground state densities computed for different $\vep\ge0$.
It can be observed that $\rho_{g}^\vep$ is less concentrated than $\rho_{g}$, but will converge to  $\rho_{g}$  as $\vep\to0^+$.
More detailed comparisons are made in Table \ref{tab:error_vep_test2}.
From Table \ref{tab:error_vep_test2}, we find that $\rho_{g}^\vep$
and $E^\vep(\rho_{g}^\vep)$ converge to $\rho_{g}$
and $E(\rho_{g})$, respectively,  almost linearly with respect to the regularization parameter $0<\vep\ll1$.

\begin{figure}[tbhp]
\centering
\subfloat{\includegraphics[height=5cm,width=14cm,angle=0]{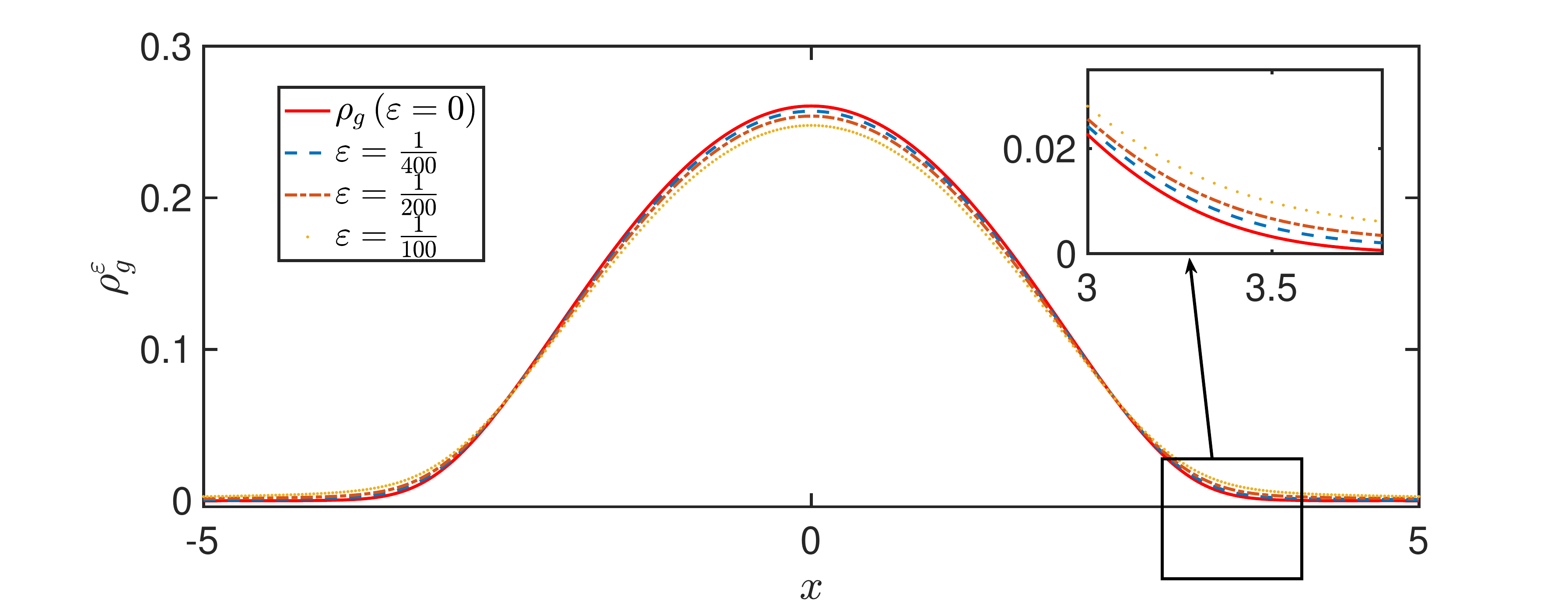}}
\caption{Comparison of $\rho_{g}^\vep$ \eqref{ground:reg} computed via the rDF-APG method and the `exact' ground state density $\rho_{g}$ \eqref{ground:ori}
for Case II in Example \ref{example:accuracy_test}. }
\label{fig:sol_vep}
\end{figure}


\begin{table}
  \centering
  \begin{tabular}{c c c c c c c c c}
    \hline
    $\vep$ & $10^{-1}$ & $10^{-2}$ & $10^{-3}$ & $10^{-4}$ & $10^{-5}$ & $10^{-6}$ \\
    \hline
    $|E^{\vep}(\rho_{g}^{\vep})-E(\rho_{g})|$ &  1.04E0 & 2.04E-1 & 2.84E-2 & 3.70E-3 & 4.56E-4 & 5.62E-5 \\
    rate & -& 0.71 & 0.86 & 0.88 & 0.90 & 0.92 \\
    \hline
    $\|\rho_{g}^{\vep}-\rho_{g}\|_2$ & 1.54E-1 & 2.45E-2 & 2.73E-3 & 2.95E-4 & 3.08E-5 & 3.08E-6  \\
    rate & -& 0.80 & 0.95 & 0.97 & 0.98 & 1.00 \\
    \hline
    $\|\rho_{g}^{\vep}-\rho_{g}\|_{\infty}$ & 8.64E-2 & 1.28E-2 & 1.43E-3 & 1.54E-4 & 1.52E-5 & 1.70E-6 \\
    rate & -  & 0.83  & 0.95 & 0.97 & 1.00 & 0.95 \\
    \hline
  \end{tabular}
  \caption{Convergence test of the ground state densities as $\vep\to0^+$ for Case II in Example \ref{example:accuracy_test}.}
  \label{tab:error_vep_test2}
\end{table}


The computational cost is highly related to the value of $\vep$.
For Case II of Example \ref{example:accuracy_test}, numerical experiments indicate that it would be difficult to make $\|\brho_h^{(k+1)}-\brho_h^{(k)}\|_{l^2}$ converge within a very small tolerance, say $10^{-10}$, when $\vep$ is extremely small, say $\vep\ll10^{-8}$,  where $\brho_h^{(k)}$ is the intermediate state computed after $k$ iterations.
The relatively slow convergence of the intermediate states $\brho_h^{(k)}$ when  $\vep$ is  extremely small is  the main drawback of our method.
A numerical evidence is shown in Table \ref{tab:CPU_cmp_vep}, from where we can clearly observe the increasing computational cost as $\vep\to0$.
In particular, the method fails when $\vep=0$.
On the other hand, though not explicitly shown here for brevity, numerical experiments indicate that our method works well if we are only concerned with the ground state energy, even when $\vep$ is very small, since we can choose a relatively larger tolerance to have a much faster convergence.

\begin{table}
  \centering
  \begin{tabular}{|c|c|c|c|c|c| }
	\hline
	$\vep$ & $10^{-2}$ & $10^{-4}$ & $10^{-6}$ & $10^{-8}$ & $0$ \\
		\hline
	CPU time &  5.7s & 18.3s & 60.7s & 159.4s & N.A.  \\
    \hline
  \end{tabular}
  \caption{Effect of $\vep$ on the efficiency of the rDF-APG method. In the numerical test, we choose  $\beta=10$, $\delta=100$, $h=\frac{1}{64}$, $\Omega=(-16,16)$ and a very small tolerance $10^{-10}$.
  The two-grid technique is applied to speed up.
  The result is accurate to the first digit.
  The method fails when $\vep=0$. }
  \label{tab:CPU_cmp_vep}
\end{table}

\subsection{Effect of interaction strength}
The contact interaction strength $\beta$ and the HOI strength $\delta$  affect the ground state in different ways.
As an example, we consider the problem defined in the whole space under the harmonic potential $V(x)=\frac{x^2}{2}$.
In Figure \ref{fig:ground_1D}, (a) and (b) show  the ground state densities with different choices of $\beta$ and $\delta$.
As shown in the figure, both the strong contact interaction and the strong HOI effect will spread the ground state, but in different ways.
The detailed limiting Thomas-Fermi approximations of the ground states could be referred to \cite{mgpe-th,Ruan}.
The subfigures (c) and (d)
in Figure \ref{fig:ground_1D} show explicitly how the value of $\beta$ and $\delta$ is related to the difference $\rho_g^\vep-\rho_g$.
Comparing with (a) and (b) in Figure \ref{fig:ground_1D}, when $\beta\gg1$, the difference will almost be a constant in the region where $\rho_g$ is obviously bigger than 0.
When $\delta\gg1$, the pattern of the difference will be different.
In fact, by considering the Euler-Lagrange equation satisfied by $\rho_g$ and $\rho_g^\vep$, we can show that the difference in this case is roughly a quadratic function in the region where $\rho_g$ is obviously bigger than 0.

The strength of interaction would affect the numerical performance  of the rDF-APG method as well.
We expect that the rDF-APG method works better when
 the interaction is stronger since the interaction energy terms are quadratic with respect to the density function formulation,  which is the ideal form for most optimization techniques.
As a result, we can expect a great advantage of the rDF-APG method over methods based on the wave function formulation when the interaction energy part, especially the HOI part, is dominant.
To better illustrate the advantage of the density function formulation we introduced,
we compare the rDF-APG method with the regularized Newton method \cite{Ruan_thesis,BaoWuWen}, which is one of the state-of-the-art numerical methods for computing ground states of BEC \cite{BaoWuWen}. In fact, both methods
share the similar strategy in designing the numerical methods, i.e.
first discretize the energy functional to get a finite-dimensional constrained optimization problem and then adapt modern optimization techniques to compute the ground state. The difference is that the regularized Newton method adopts the wave function formulation while
the rDF-APG method uses the density function formulation.
We choose all the setup of the problem, including the initial data and the stopping criteria, to be exactly the same for a fair comparison. 
The initial guess is chosen to be of Gaussian type and the stopping tolerance is chosen to be  $10^{-8}$.
All the other parameters in the regularized Newton method are chosen to be the default ones.
For the rDF-APG method, we fix $\vep=10^{-4}$ and choose $\eta=1/0.9$.
We run the two methods on the same laptop and the CPU time for different cases are shown in Table \ref{tab:CPU_cmp}.
Although the CPU time will not be exactly the same for each run, the results shown in the table are computed by averaging, which are accurate to the first digit and enough to verify our expectations.
As shown in the table,
the advantage of the rDF-APG method over the regularized Newton method is not obvious
when $\beta$ and $\delta$ are small.
However, when one of  $\beta$ and $\delta$ is large, especially when the $\delta$ term is dominant, there is a huge difference between the regularized Newton method and the rDF-APG method  considering computational cost.
What's more, the stronger the interaction is, the faster the rDF-APG method becomes, which is completely different from the regularized Newton method.
As a conclusion, the rDF-APG method works for all positive $\beta$ and $\delta$, and is extremely suitable to cases with strong interaction effect.


\begin{figure}[tbhp]
\centerline{
\psfig{
figure=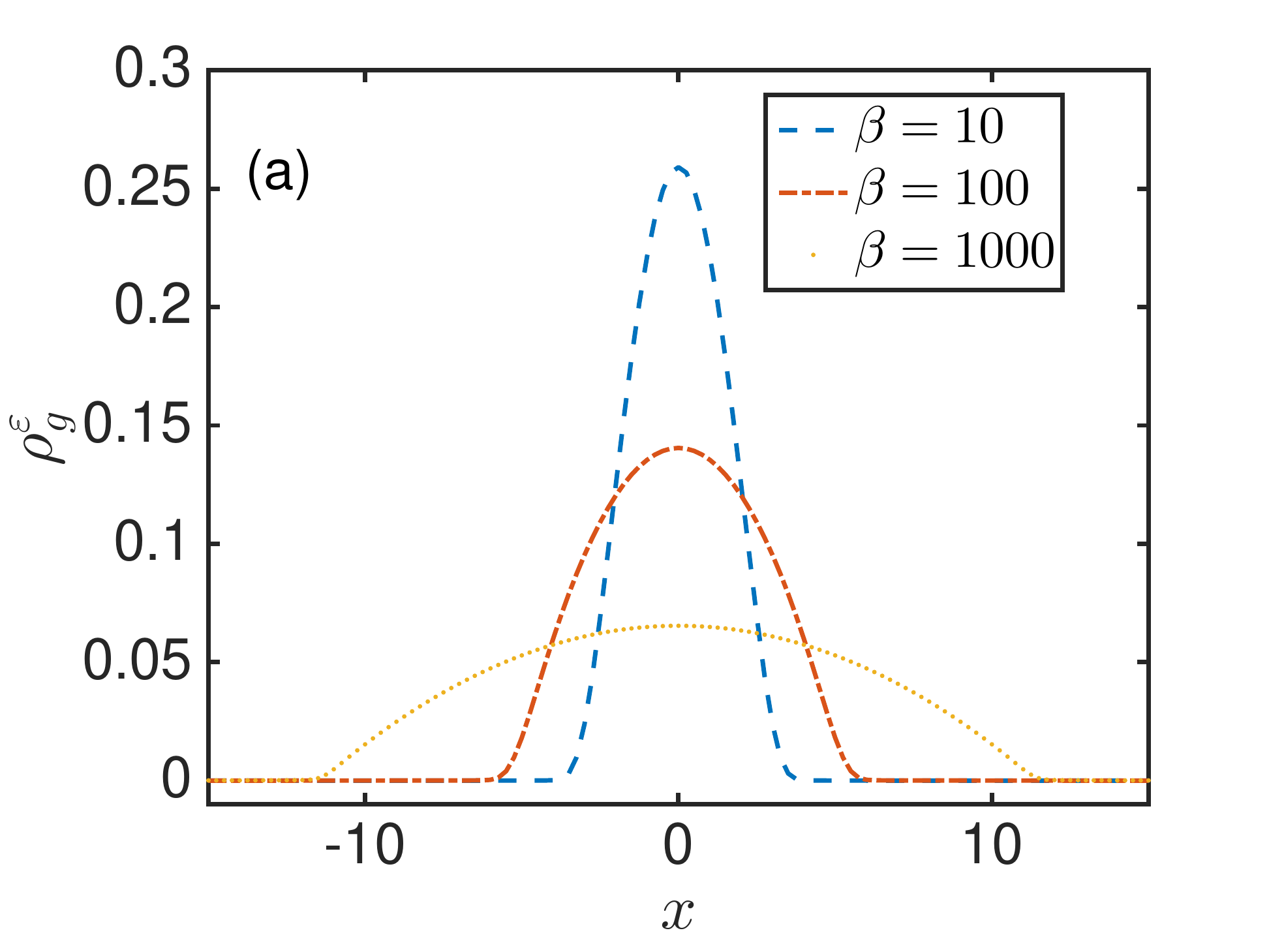,height=5cm,width=7cm,angle=0}
\psfig{
figure=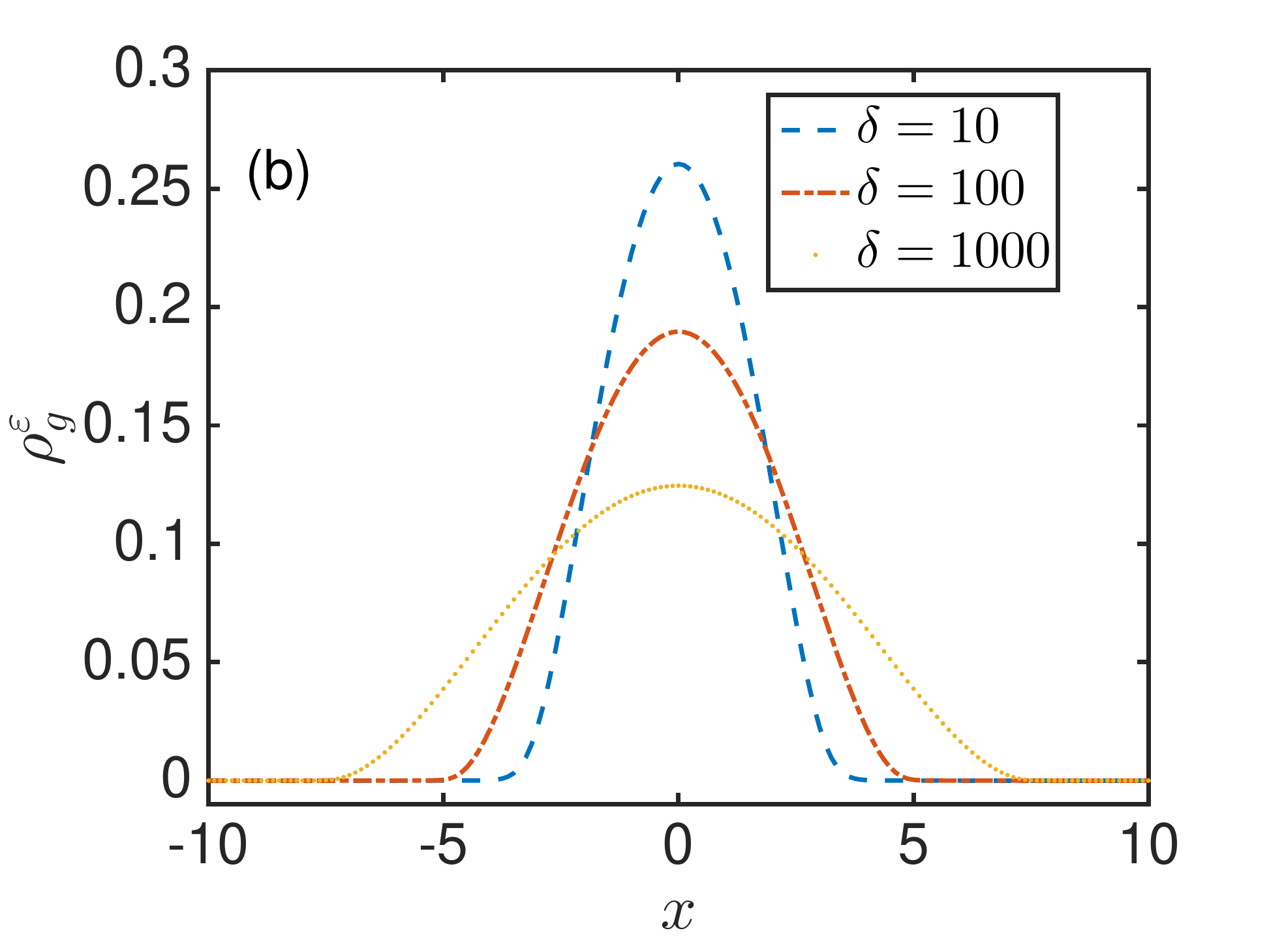,height=5cm,width=7cm,angle=0}
}
\centerline{
\psfig{
figure=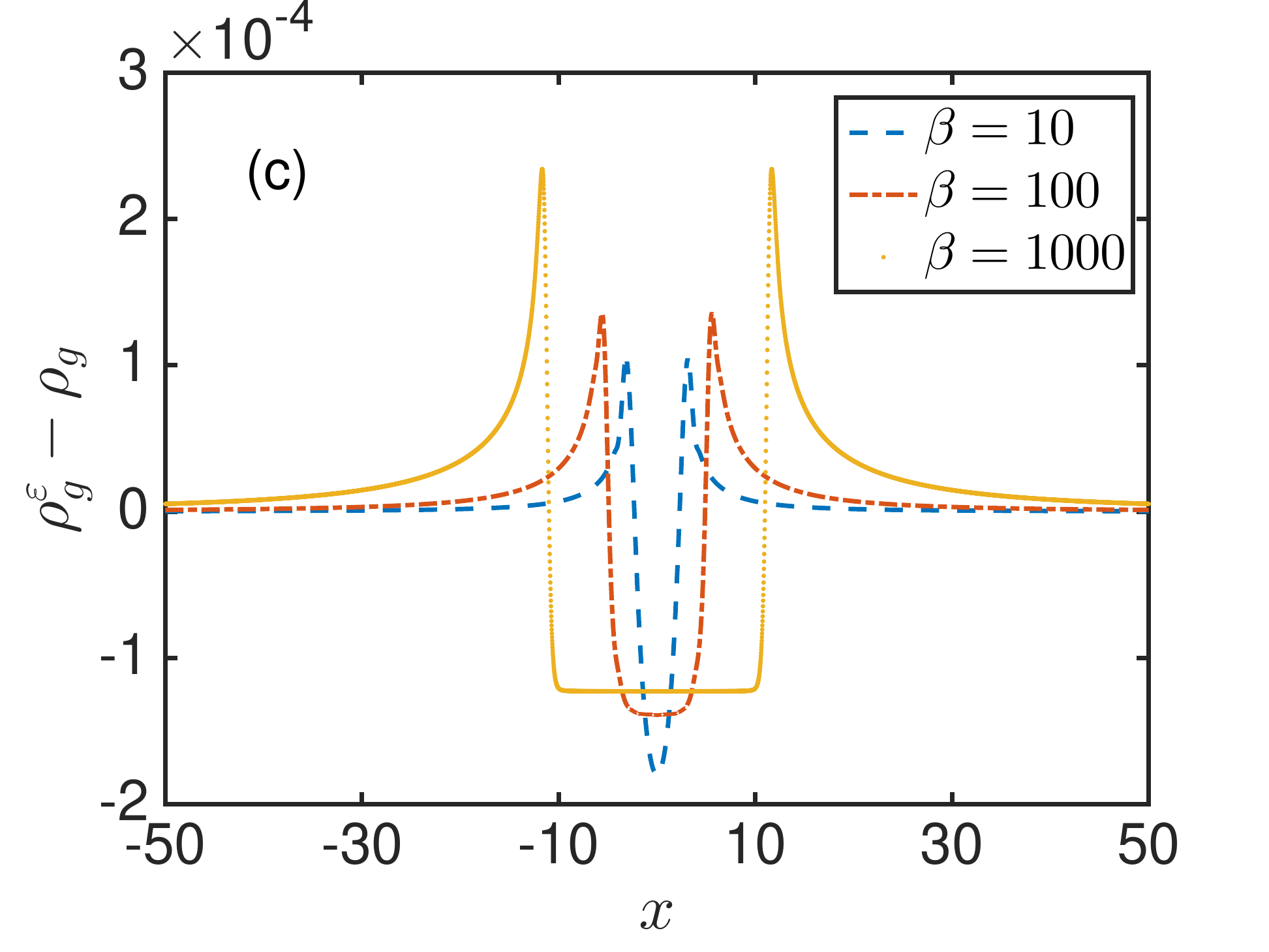,height=5cm,width=7cm,angle=0}
\psfig{
figure=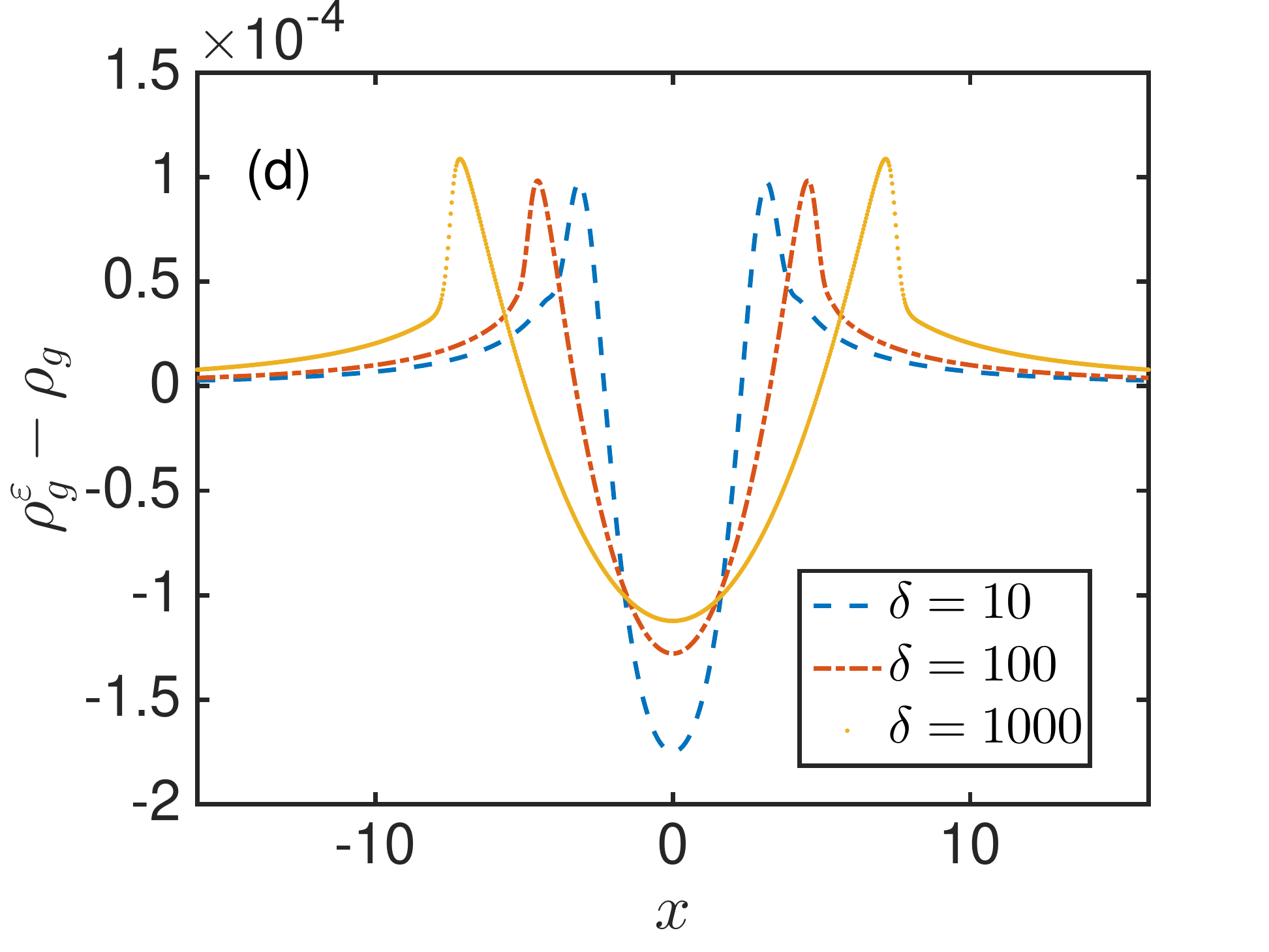,height=5cm,width=7cm,angle=0}
}
\caption{Ground state density $\rho_g^\vep$ \eqref{ground:reg} (top row, i.e. (a), (b)) and its comparison with 
$\rho_g$ \eqref{ground:ori} (bottom row, i.e. (c), (d)) with $\delta=10$ and different $\beta$'s (left column, i.e. (a), (c)) or $\beta=10$ and different $\delta$'s (right column, i.e. (b), (d)). Here we choose $\vep=10^{-4}$.}
\label{fig:ground_1D}
\end{figure}

\begin{table}
  \centering
  \begin{tabular}{ | p{0.7cm}| *{3}{p{0.9cm}|} p{0.9cm} || *{4}{p{0.9cm}|}}
         \cline{2-9}
    	\multicolumn{1}{c|}{} & \multicolumn{4}{c||}{rDF-APG} & \multicolumn{4}{c|}{Regularized Newton method}\\
	\hline
	\diagbox[width=1.1cm,height=0.7cm]{$\delta$}{$\beta$} &$10$ & $10^2$ & $10^3$ & $10^4$ &$10$ & $10^2$ & $10^3$ & $10^4$\\
		\hline
	10 & 18.34s & 12.39s & 4.29s & 1.67s & 4.54s & 3.76s & 3.47s & 3.85s \\
			\hline
	$10^2$ & 10.66s & 8.66s & 3.78s & 1.58s & 18.79s  & 13.42s & 9.36s & 7.62s \\
			\hline
	$10^3$ & 5.25s &  6.29s & 3.66s & 1.67s & 116.65s  & 79.41s & 45.49s & 34.03s \\
			\hline
	$10^4$ & 3.31s & 3.60s & 3.03s & 1.66s & 224.05s & 222.71s & 224.89s & 144.14s \\
    \hline
  \end{tabular}
  \caption{CPU time by using the rDF-APG and the regularized Newton methods. In the numerical test, we choose  $h=\frac{1}{64}$ and the computation domain to be $(-32,32)$. For the rDF-APG method, we choose $\vep=10^{-4}$, $tol=10^{-8}$ and apply the two-grid technique to speed up. All numerical experiments are performed on the same laptop. The CPU time shown in the table is accurate to the first digit. }
  \label{tab:CPU_cmp}
\end{table}

\subsection{Numerical examples in high dimensions}
The rDF-APG methd can be easily adapted to multi-dimensional problems with tensor grids.
In this section, we show examples in two dimensions (2D) and three dimensions (3D) to indicate the feasibility of our method for multi-dimensional problems.

Consider a box potential
\be \label{box_potential}
V(\bx)=\left\{\begin{array}{ll}
0,\quad \bx\in \Omega\subset {\mathbb R}^d,\\
\infty,\quad \bx\in \Omega^c.\\
\end{array}\right.
\ee

In 2D,  we take $\Omega=(-1,1)^2$.
The numerical ground state densities with $\vep=10^{-4}$ and different choices of $\beta$ and $\delta$ are shown in Figure \ref{fig:ground_2D}.
We observe the ground state becomes almost a constant inside the domain when $\beta\gg1$, while the ground state converges to another pattern  when $\delta\gg1$.

\begin{figure}[htbp]
\centerline{
\psfig{figure=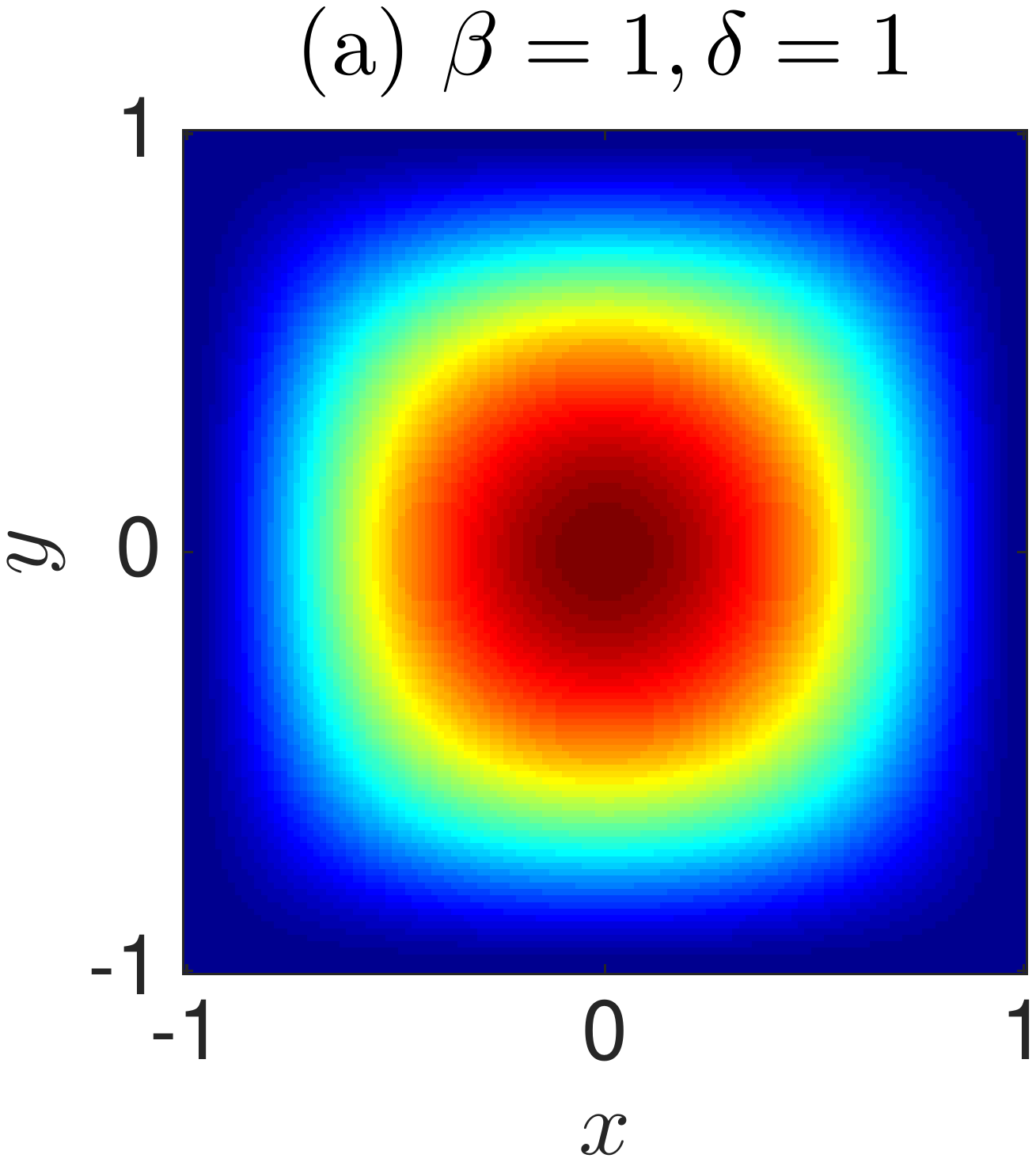,height=5.5cm,width=5.3cm,angle=0}
\psfig{figure=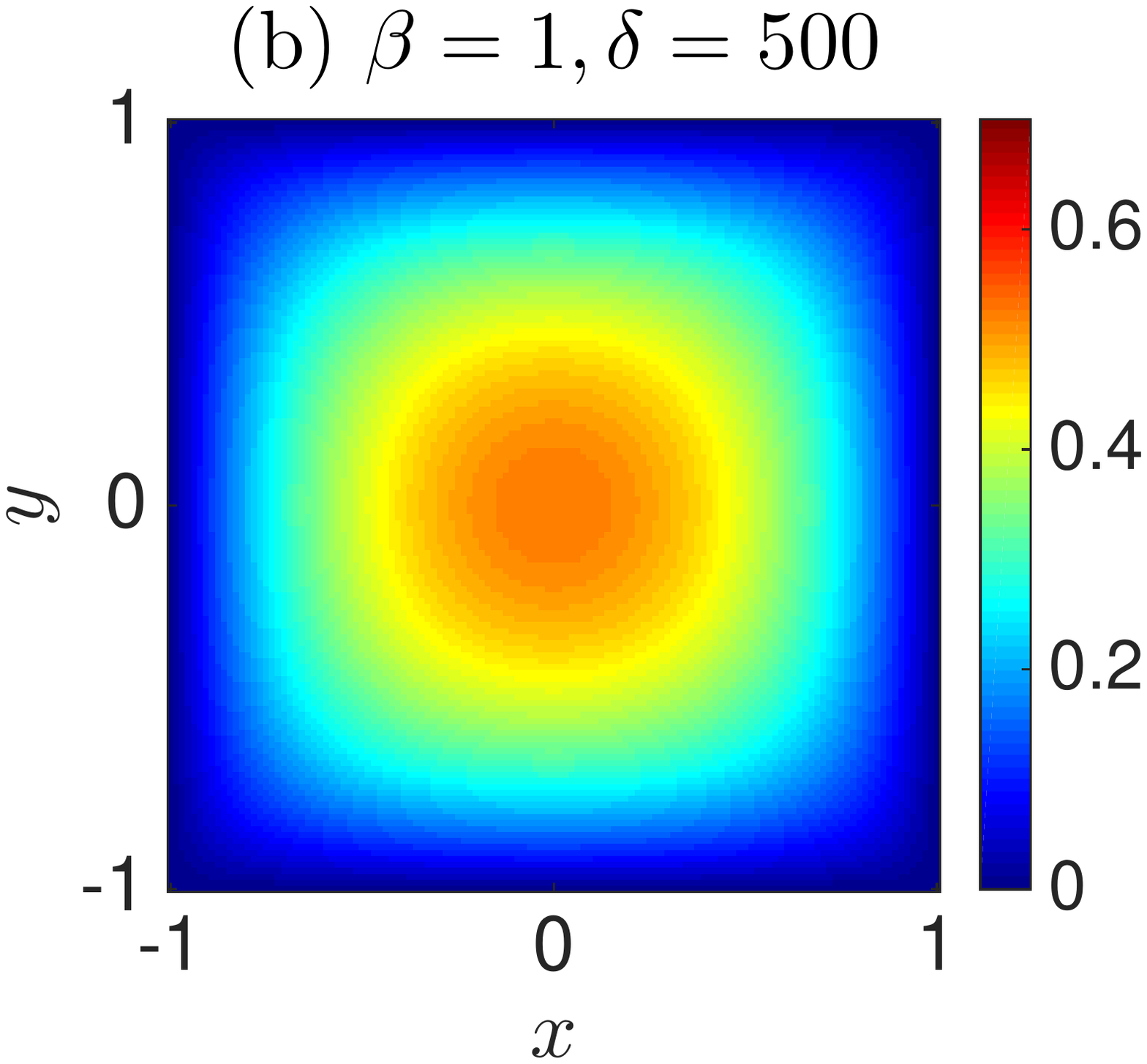,height=5.5cm,width=6cm,angle=0}}
\centerline{
\psfig{figure=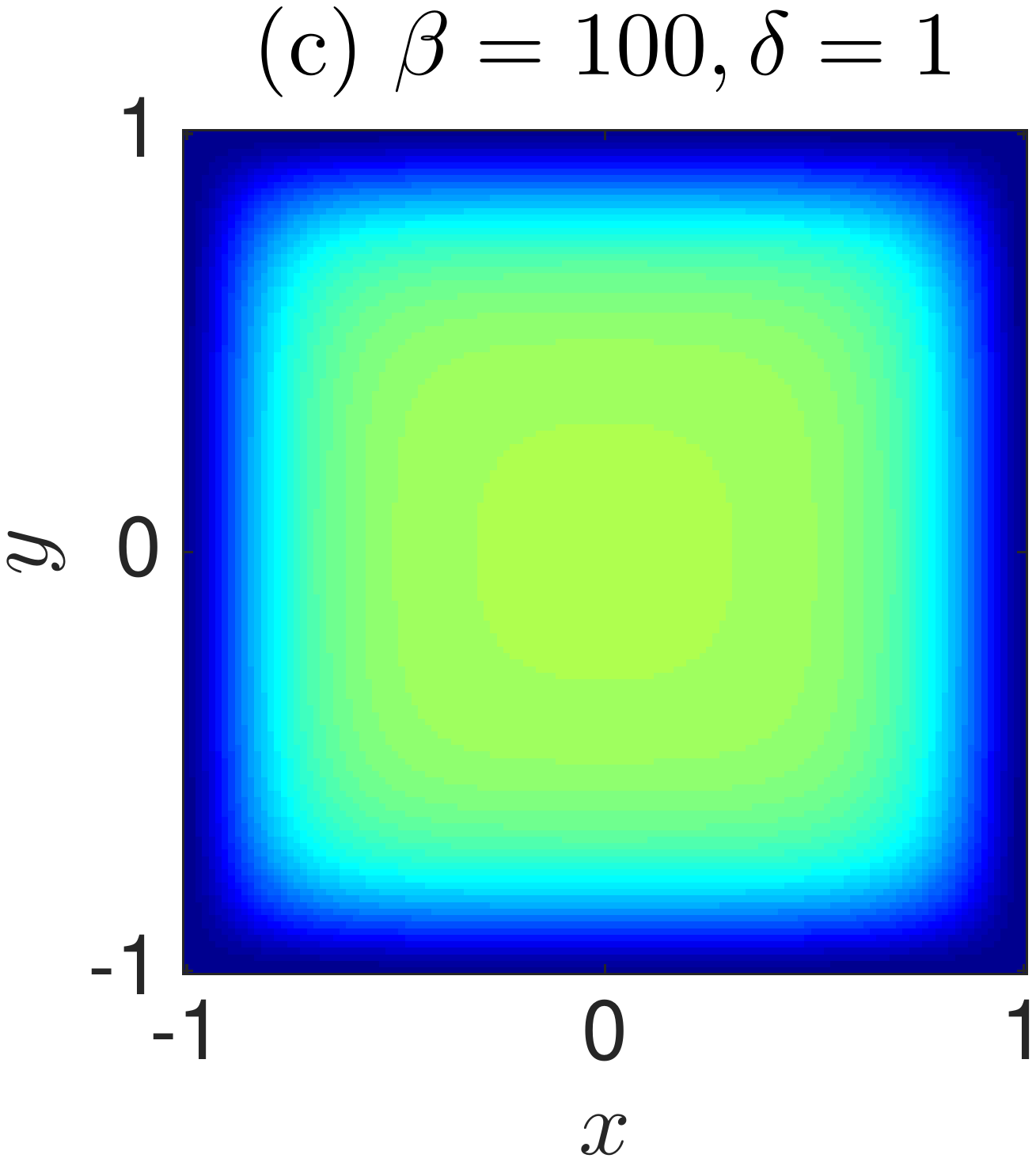,height=5.5cm,width=5.3cm,angle=0}
\psfig{figure=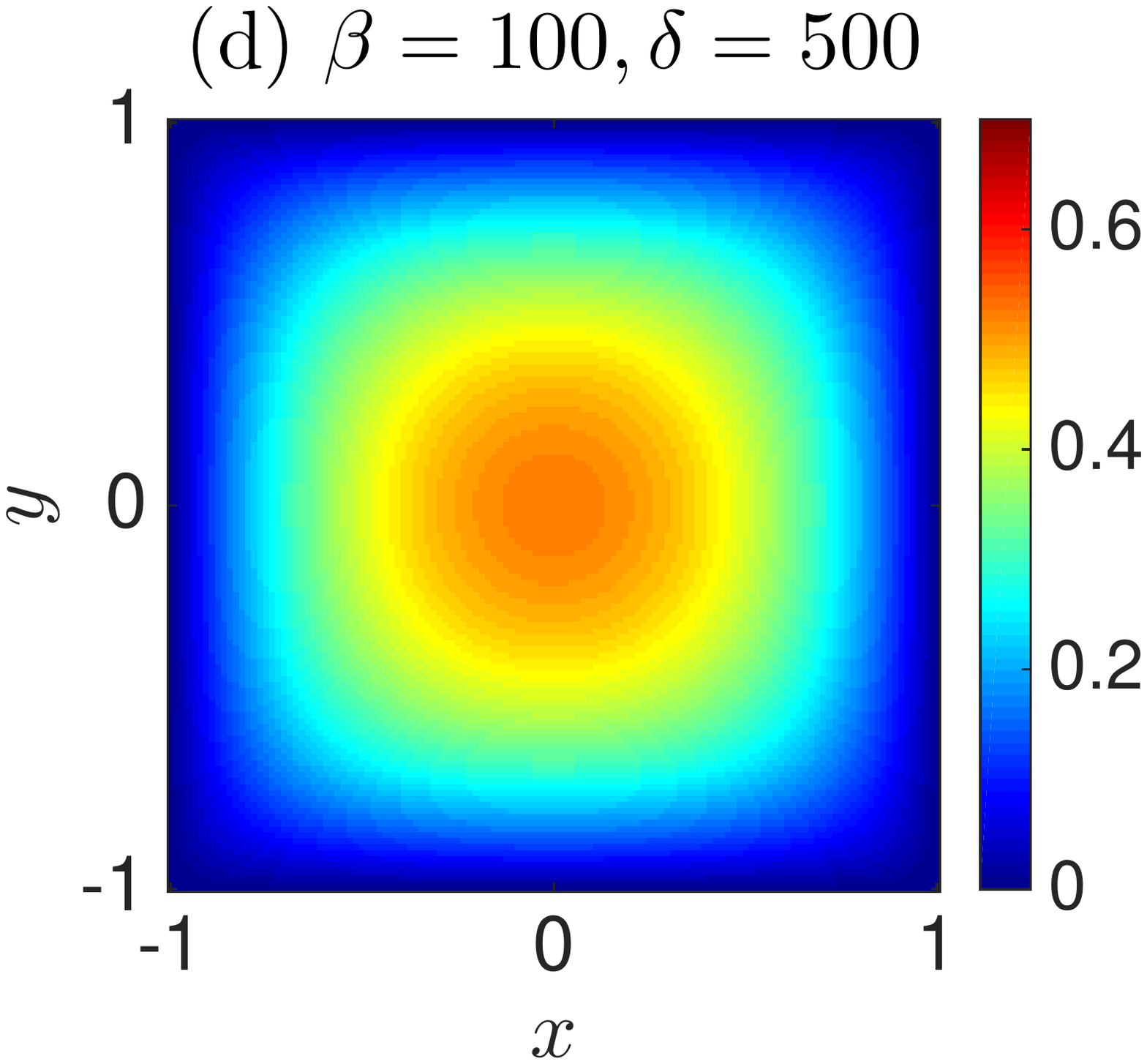,height=5.5cm,width=6cm,angle=0}}
\caption{Ground state densities $\rho_g^{\vep}(x,y)$ with $\beta=1,100$ (from top to bottom) and $\delta=1,500$ (from left to right) under the box potential in $\Omega=(-1,1)^2$. Here we choose $\vep=10^{-4}$.}
\label{fig:ground_2D}
\end{figure}

In 3D, we take  $\Omega=(-1,1)^3$.
Figure \ref{fig:ground_3D_Box} shows the isosurface end-cap geometry of the ground state with different values of $\beta$ and $\delta$.
As shown in the figure, again we can conclude that the solution spreads in a similar way as the 2D case shown in Figure \ref{fig:ground_2D} when $\beta$ or $\delta$ increases since the slices in Figure \ref{fig:ground_3D_Box} shows a similar pattern.

Next, we consider the whole space problem under an optical potential,
\be\label{potential:opt_3D}
V(x,y,z)=\frac{x^2+4y^2+4z^2}{2}+A(\sin^2(x)+\sin^2(y)+\sin^2(z)).
\ee
Obviously, when $A=0$, we get the harmonic potential.
Figure \ref{fig:ground_3D} shows the iso-surface of the numerical ground state densities with $\vep=10^{-4}$, $A=20$ and different values of $\beta$ and $\delta$.
The ground state densities shown in Figure \ref{fig:ground_3D} are consistent with the results shown in \cite{Ruan_gradientflow}.
Again we observe that the ground state densities become less concentrated as $\beta$ or $\delta$ increases.

\begin{figure}[htbp]
\centerline{
\psfig{figure=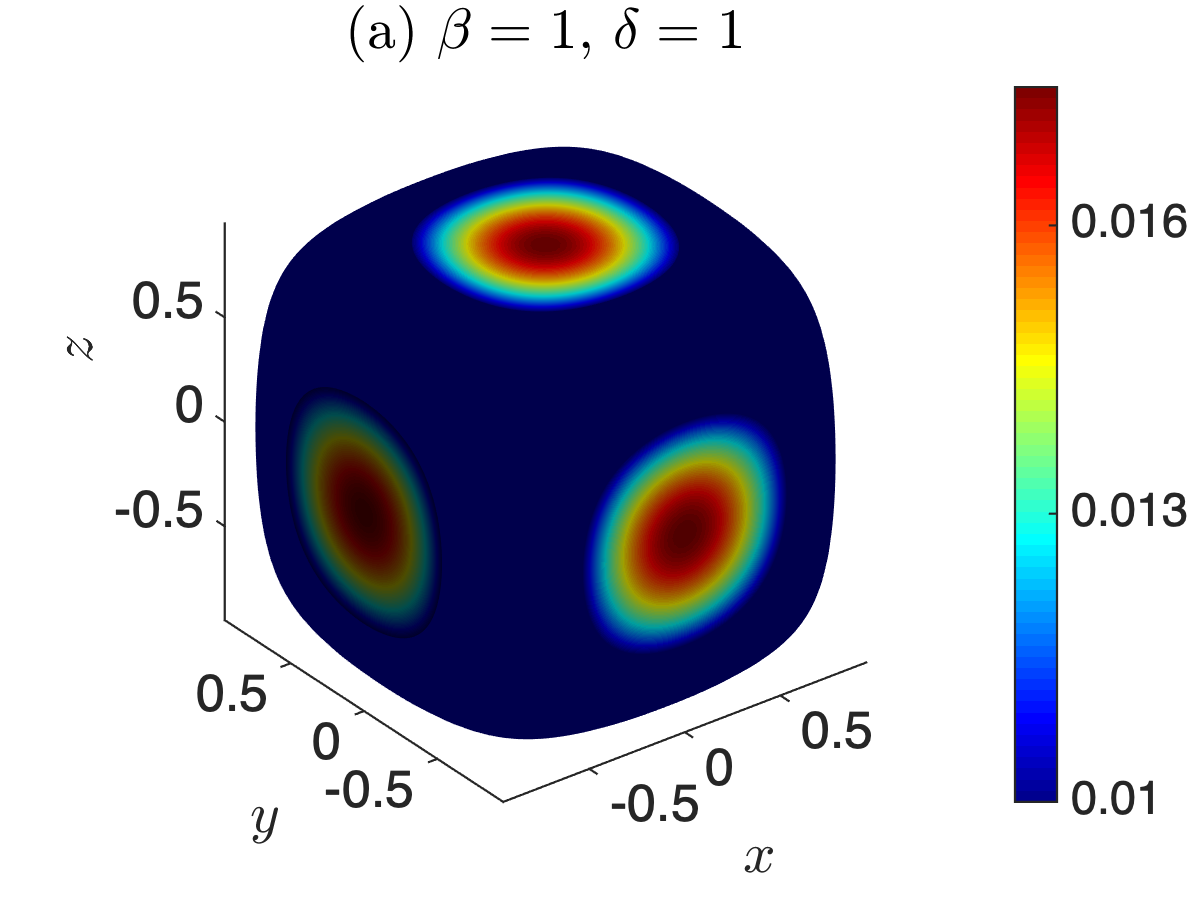,height=5cm,width=7cm,angle=0}
\psfig{figure=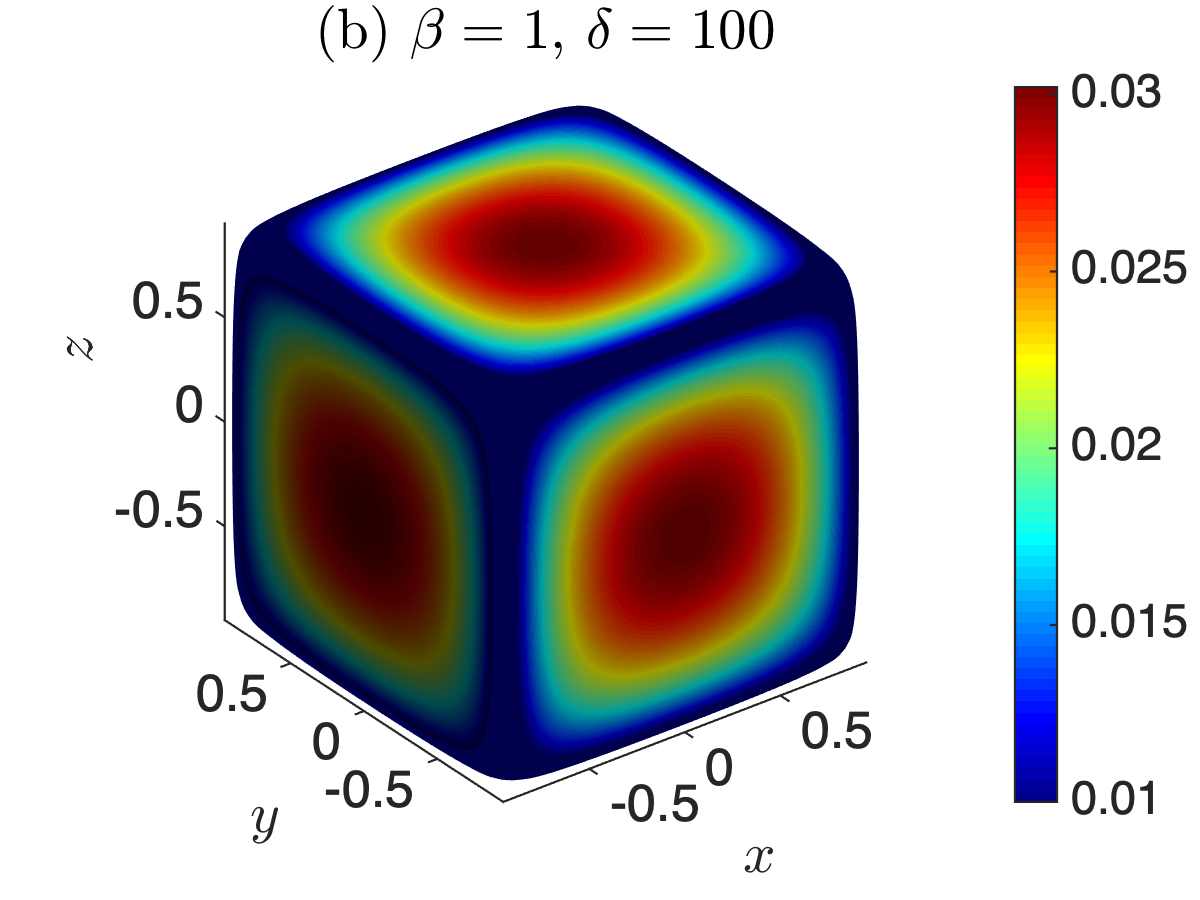,height=5cm,width=7cm,angle=0}}
\centerline{
\psfig{figure=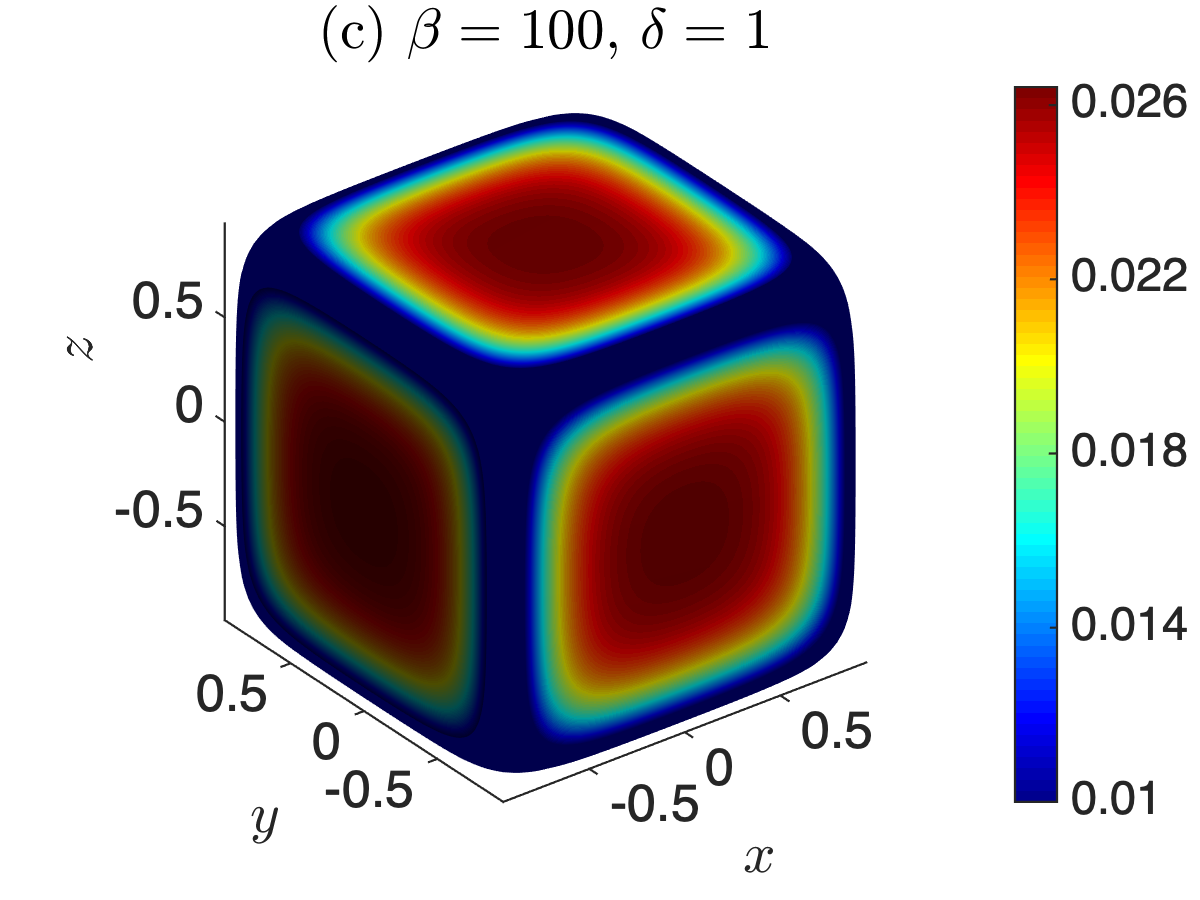,height=5cm,width=7cm,angle=0}
\psfig{figure=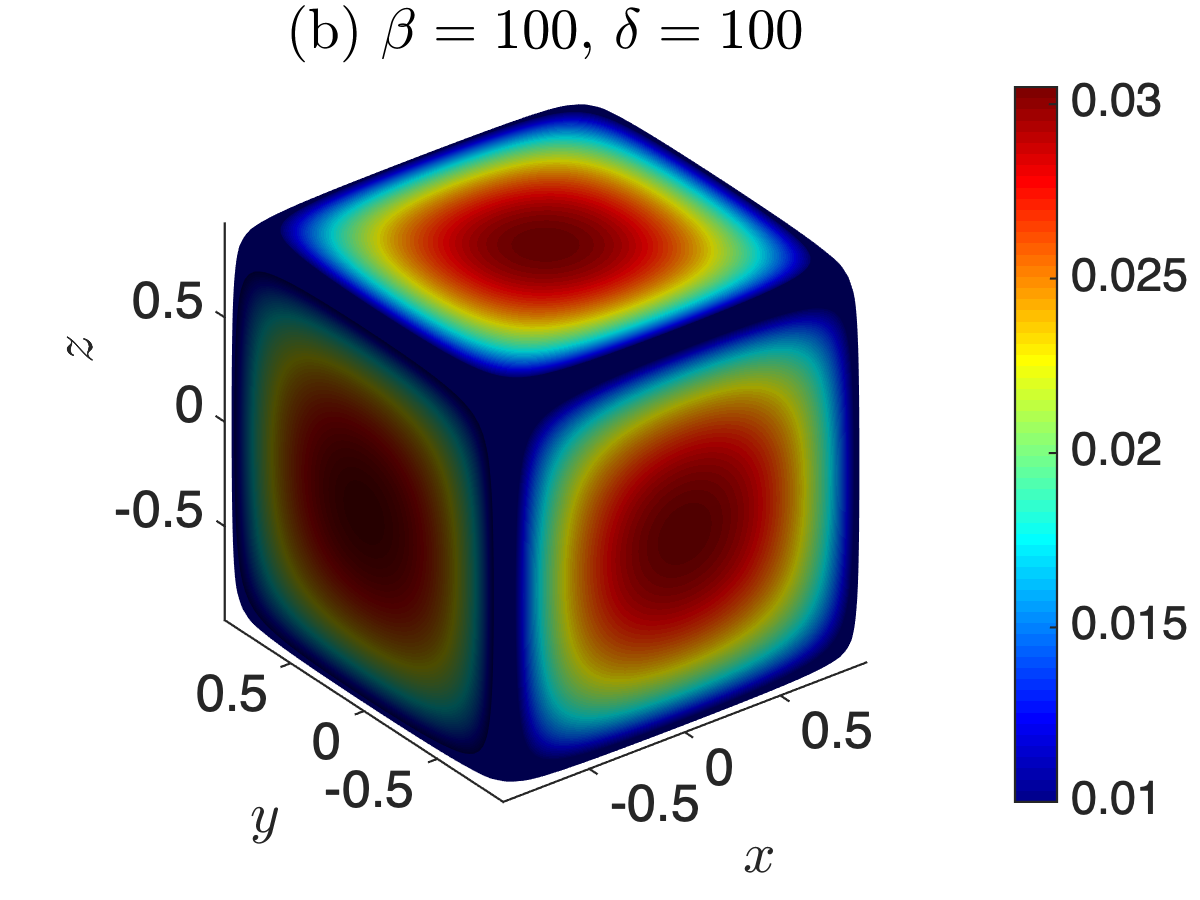,height=5cm,width=7cm,angle=0}}
\caption{Isosurface (isovalues $0.01$) of the ground state densities  $\rho_g^{\vep}(x,y,z)$ with an end-cap.
The external potential is chosen to be a box potential in $(-1,1)^3$ and  we choose $\vep=10^{-4}$.}
\label{fig:ground_3D_Box}
\end{figure}

\begin{figure}[htbp]
\centerline{
\psfig{figure=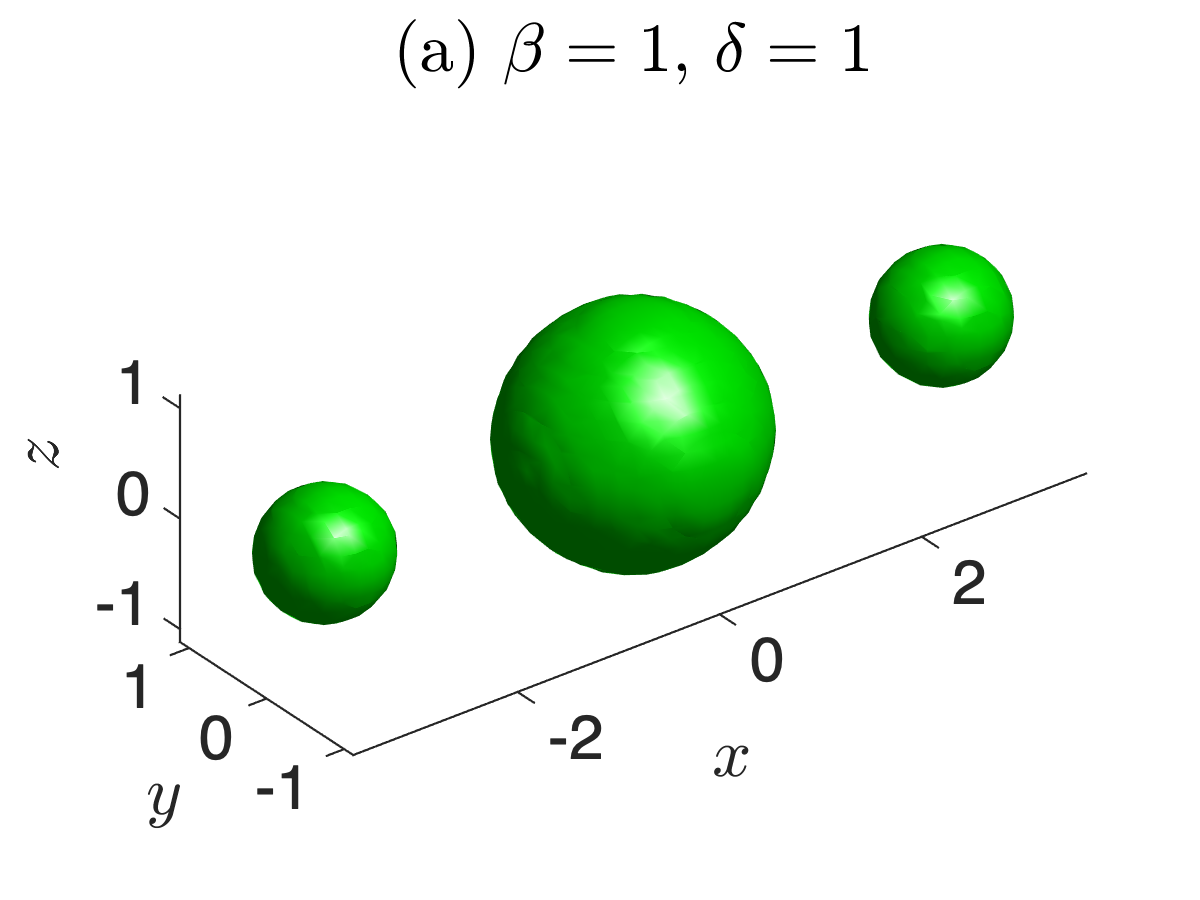,height=5cm,width=7cm,angle=0}
\psfig{figure=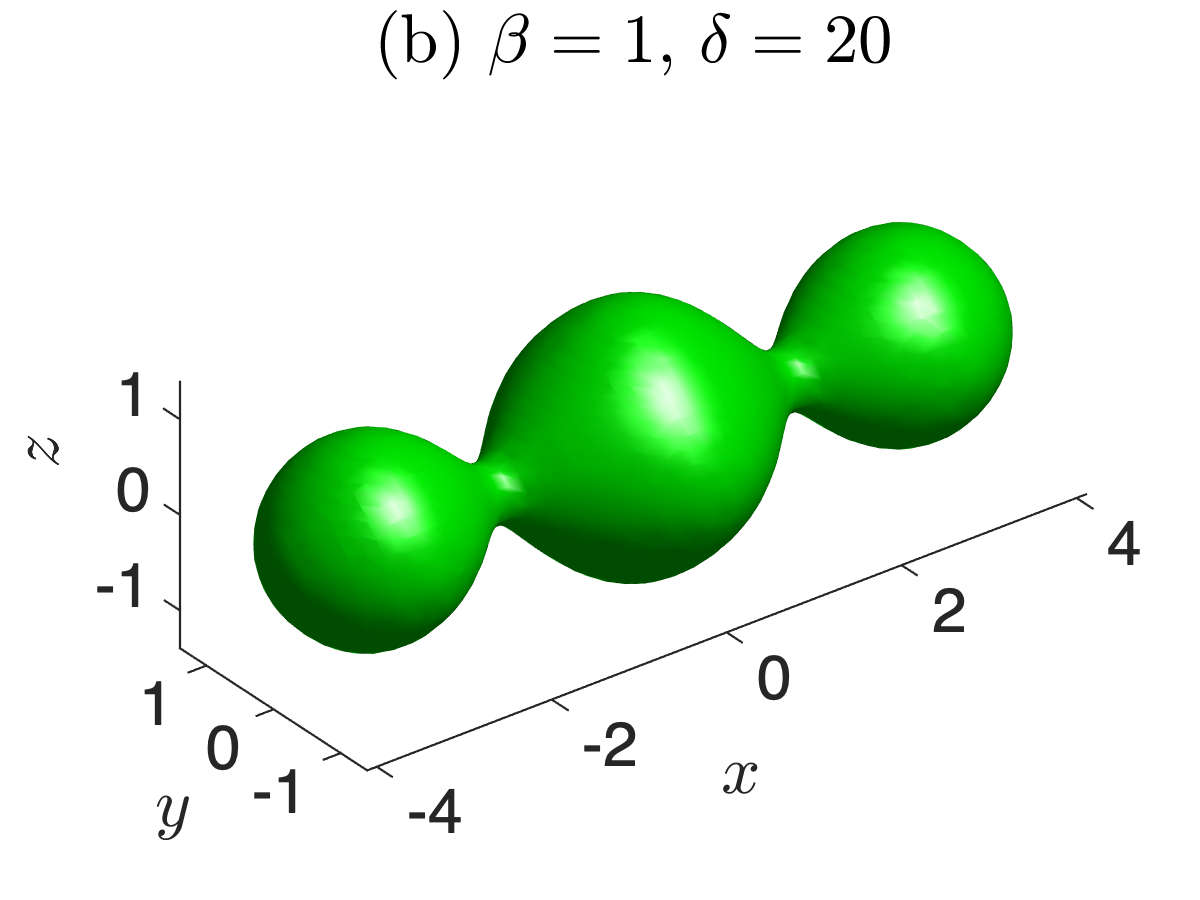,height=5cm,width=7cm,angle=0}}
\centerline{
\psfig{figure=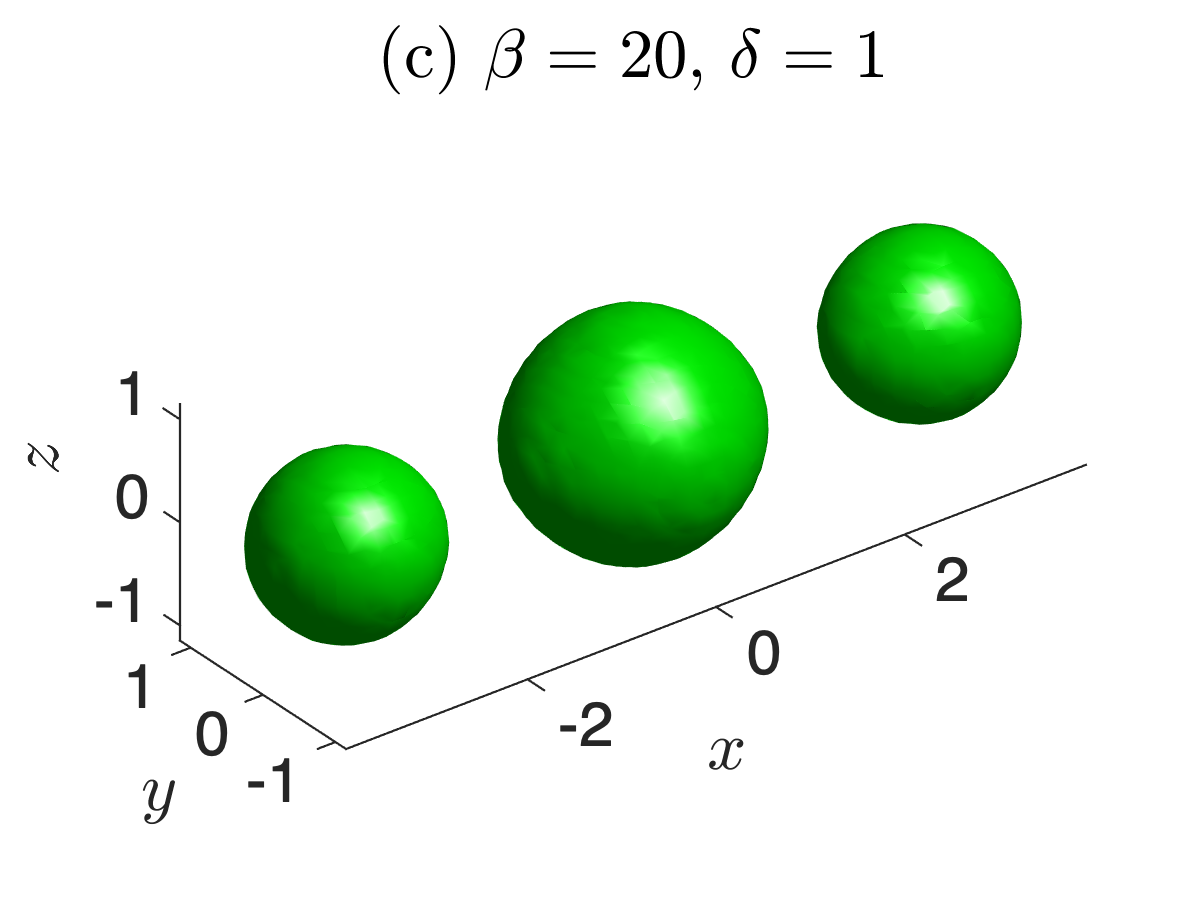,height=5cm,width=7cm,angle=0}
\psfig{figure=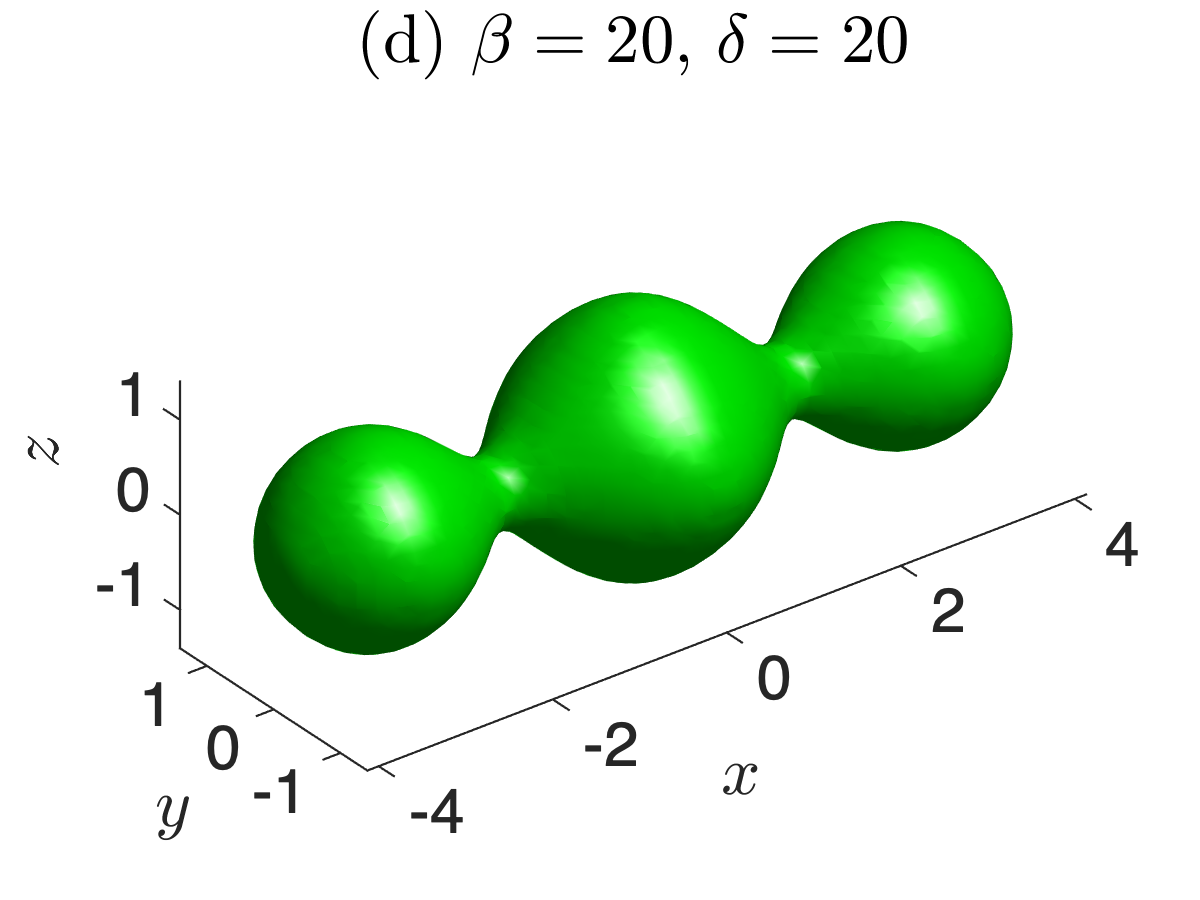,height=5cm,width=7cm,angle=0}}
\caption{Isosurface of the numerical  ground state densities  $\rho_g^{\vep}(x,y,z)$ with isovalues $0.01$.
The optical potential \eqref{potential:opt_3D} is considered with $A=20$ and different values of $\beta$ and $\delta$. Here we choose $\vep=10^{-4}$.}
\label{fig:ground_3D}
\end{figure}

\section{Conclusion}\label{sec:conclusion}
We proposed the rDF-APG method,
which computes the ground state density of the modified Gross-Pitaevskii equation (MGPE) by directly minimizing the regularized energy functional
formulated via the density function
and discretized via the finite difference method.
The convergence of the ground state densities as the regularization effect vanishes was rigorously proved.
The detailed finite difference discretization was provided,
with its spatial accuracy  analyzed.
Numerical experiments verified the spatial accuracy of the numerical ground state density computed via the rDF-APG method as well as  its convergence as $\vep\to0^+$.
Furthermore, we studied the effect of the interaction strength on the ground state density as well as on the performance of the rDF-APG method.
In particular, we showed the advantage of our method concerning the computational cost when the interaction, especially the HOI effect, is strong.

\appendix

\section{A finite difference discretization of $\hat{E}^{\vep}(\cdot)$} \label{appendix:reg_V_FD}
Numerical experiments suggest that a further regularization of the external potential term should be applied, and the new regularized energy functional $\hat{E}^{\vep}(\cdot)$ \eqref{energy_regularizeV} has a much better numerical performance than $E^{\vep}(\cdot)$ \eqref{energy_regularize}.
Denote $\hat{E}_{h}^{\vep}(\cdot)$ to be the finite difference discretization of $\hat{E}^{\vep}(\cdot)$.
For simplicity, only the 1D case is considered here.
Following a similar procedure as in \eqref{discretization:FD_derive_end}, we have
\be\label{Eh_V_rho}
\hat{E}_{h}^{\vep}(\brho_{h})=h\sum_{j=0}^{N-1}\left[\frac{1}{4}\frac{|\delta_x^+\rho_j|^2}{|\rho_j|+|\rho_{j+1}|+2\vep}+V(x_j)(\sqrt{\rho_j^2+\vep^2}-\vep)+\frac{\beta}{2}\rho_j^2+\frac{\delta}{2}\left|\delta_x^+\rho_{j}\right|^2\right],
\ee
where $\rho_{j+\frac12}:=(\rho_j+\rho_{j+1})/2$.
Similarly, we can compute its gradient. Denote
\be
\hat{\mathbf{g}}_{h}^{\vep}(\brho_{h})=( \frac{\p \hat{E}_{h}^{\vep}}{\p\rho_1},\frac{\p \hat{E}_{h}^{\vep}}{\p\rho_2},\dots,\frac{\p \hat{E}_{h}^{\vep}}{\p\rho_{N-1}})^T.
\ee
Then its $j$-th component can be explicitly computed  as
\be\label{Eh_V_rho_gradient}
\hat{\mathbf{g}}_{h}^{\vep}[j]=h\left[-\frac{\delta_x^+\tf_{j-1}}{2}-\frac{\tf_{j-1}^2+\tf_j^2}{4}s_j+\frac{V(x_j)\rho_j}{\sqrt{\rho_j^2+\vep^2}}+\beta \rho_j-\delta(\delta_x^2\rho_j)\right],
\ee
where $s_j$ and $\tf_j$ were defined in \eqref{def:s_f_APG}.
Comparing $\hat{\mathbf{g}}_{h}^{\vep}[j]$ \eqref{Eh_V_rho_gradient} and $\tilde{\mathbf{g}}_h^{\vep}[j]$ \eqref{Eh_rho_gradient_APG}, we observe that
$\hat{\mathbf{g}}_{h}^{\vep}[j]$ is more regular than
$\tilde{\mathbf{g}}_h^{\vep}[j]$ where $\rho_j\approx0$  and $V(x_j)\gg1$.
The observation could possibly explain  the better performance of the new formulation $\hat{E}_{h}^{\vep}(\cdot)$ \eqref{Eh_V_rho}.

Here we define $\hat{\brho}_{g,h}^{\vep}\in\mathbb{R}^{N-1}$ as follows,
\be
\hat{\brho}_{g,h}^{\vep}=\argmin_{\brho_{h}\in W_h} \hat{E}_{h}^{\vep}(\brho_{h}).
\ee
Obviously, $\hat{\brho}_{g,h}^{\vep}$ is the numerical approximation of $\hat{\rho}_g^{\vep}$ \eqref{ground:regV}.
Then it can be shown that Theorem \ref{thm:conv_h_E}, Theorem \ref{thm:conv_h} and Theorem \ref{thm:conv_h_E_discrete} still hold true  if we replace $\brho_{g,h}^{\vep}$ by $\hat{\brho}_{g,h}^{\vep}$ in the theorems.
The argument is almost exactly the same and thus omitted here for brevity.

\bibliographystyle{siamplain}

\end{document}